\newcommand{\free}{\ensuremath{\mathrm{free}}}
\newcommand{\fix}{\ensuremath{\mathrm{fix}}}
\renewcommand{\boolcube}{\ensuremath{\{0,1\}}}
\newcommand{\depth}{\ensuremath{\mathrm{depth}}}
\newcommand{\Res}{\ensuremath{\mathsf{Res}}}
\renewcommand{\restriction}{\!\ensuremath{\upharpoonright}\!}
\newcommand{\sCP}{\ensuremath{\mathsf{sCP}}}
\begin{document}
\newgeometry{margin=1.4in,top=1.6in,bottom=1in}

\begin{center}
{\LARGE On the Power and Limitations of Branch and Cut}
\\[1cm] \large

\setlength\tabcolsep{0em}
\newcommand{\myPad}{\hspace{2.9em}}
\begin{tabular}{c@{\myPad}c@{\myPad}c}
  Noah Fleming &
	Mika G\"o\"os$^\dagger$  &
	Russell Impagliazzo \\[-.5mm]
  \small\slshape  University of Toronto &
	\small\slshape EPFL &
	\small\slshape University of California,\\[-1mm]
	\small\slshape \& Simons Institute & & \small\slshape San Diego\\[3mm]
  Toniann Pitassi &
	Robert Robere$^\dagger$ &
	Li-Yang Tan \\[-.5mm]
  \small\slshape University of Toronto \& IAS &
	\small\slshape McGill University &
  \small\slshape Stanford University \\[-1mm]
  \\[3mm]
  & Avi Wigderson & \\[-.5mm]
  & \small\slshape IAS &
\end{tabular}

\vspace{9mm}

\large
{\today}

\vspace{9mm}

\bf Abstract

\end{center}

\normalsize
\noindent

The Stabbing Planes proof system \cite{BeameFIKPPR18} was introduced to model the reasoning carried out in practical mixed integer programming solvers. As a proof system, it is powerful enough to simulate Cutting Planes and to refute the Tseitin formulas 
--- certain unsatisfiable systems of linear equations $\bmod 2$ --- which are canonical hard examples for many algebraic proof systems. In a recent (and surprising) result, Dadush and Tiwari \cite{TseitinUpperBound} showed that these short refutations of the Tseitin formulas could be translated into quasi-polynomial size and depth Cutting Planes proofs, refuting a long-standing conjecture. 
This translation raises several interesting questions. First, whether all Stabbing Planes proofs can be efficiently simulated by Cutting Planes. This would allow for the substantial analysis done on the Cutting Planes system to be lifted to practical mixed integer programming solvers. Second, whether the quasi-polynomial depth of these proofs is inherent to Cutting Planes. 

In this paper we make progress towards answering both of these questions. First, we show that \emph{any} Stabbing Planes proof with bounded coefficients ($\SP^*$) can be translated into Cutting Planes. As a consequence of the known lower bounds for Cutting Planes, this establishes the first exponential lower bounds on $\SP^*$. Using this translation, we extend the result of Dadush and Tiwari to show that Cutting Planes has short refutations of any unsatisfiable system of linear equations over a finite field. Like the Cutting Planes proofs of Dadush and Tiwari, our refutations also incur a quasi-polynomial blow-up in depth, and we conjecture that this is inherent. As a step towards this conjecture, we develop a new \emph{geometric} technique for proving lower bounds on the depth of Cutting Planes proofs. This allows us to establish the first lower bounds on the depth of \emph{Semantic} Cutting Planes proofs of the Tseitin formulas.

\renewcommand*{\thefootnote}{\fnsymbol{footnote}}
\footnotetext[2]{Work done while at Institute for Advanced Study.}
\thispagestyle{empty}
\setcounter{page}{0}
\newpage
\restoregeometry

\section{Introduction}
\label{sec:intro}

An effective method for analyzing classes of algorithms is to formalize the techniques used by the class into a \emph{formal proof system}, and then analyze the formal proof system instead. 
By doing this, theorists are able to hide many of the practical details of implementing these algorithms, while preserving the class of methods that the algorithms can feasibly employ.
Indeed, this approach has been applied to study many different families of algorithms, such as 
\begin{itemize}
	\item \emph{Conflict-driven clause-learning} algorithms for SAT \cite{BayardoS97,MarquesS99,MoskewiczMZZM01}, which can be formalized using \emph{resolution} proofs \cite{DavisP60}.
	\item Optimization algorithms using \emph{semidefinite programming} \cite{Goemans1994,parrilo2000}, which can often be formalized using \emph{Sums-of-Squares proofs} \cite{dima-sos,BarakBHKSZ12}.
	\item The classic \emph{cutting planes} algorithms for integer programming \cite{gomory1963algorithm,Chvatal73a}, which are formalized by \emph{cutting planes proofs} \cite{Chvatal73a, Chvatal84, CookCT87}.
\end{itemize}

In the present work, we continue the study of formal proof systems corresponding to modern integer programming algorithms.
Recall that in the integer programming problem, we are given a polytope $P \subseteq \reals^n$ and a vector $c \in \reals^n$, and our goal is to find a point $x \in P \cap \integers^n$ maximizing $c \cdot x$.
The classic approach to solving this problem --- pioneered by Gomory \cite{gomory1963algorithm} --- is to add\footnote{Throughout, we will say that a cutting plane, or an inequality is \emph{added} to a polytope $P$ to mean that it is added to the set of inequalities defining $P$.} \emph{cutting planes} to  $P$.
A \emph{cutting plane} for $P$ is any inequality of the form $ax \leq \lfloor b \rfloor$, where $a$ is an integral vector, $b$ is rational, and \emph{every} point of $P$ is satisfied by $ax \leq b$.
By the integrality of $a$, it follows that cutting planes \emph{preserve} the integral points of $P$, while potentially \emph{removing} non-integral points from $P$.
The cutting planes algorithms then proceed by heuristically choosing ``good'' cutting planes to add to $P$ to try and locate the integral hull of $P$ as quickly as possible.

As mentioned above, these algorithms can be naturally formalized into a proof system --- the \emph{Cutting Planes proof system}, denoted $\CP$ --- as follows \cite{CookCT87}.
Initially, we are given a polytope $P$, presented as a list of integer-linear inequalities $\set{a_i x \leq b_i}$.
From these inequalities we can then deduce new inequalities using two deduction rules:
\begin{itemize}
	\item \emph{Linear Combination.} From inequalities $ax \leq b, cx \leq d$, deduce any non-negative linear combination of these two inequalities with integer coefficients.
	\item \emph{Division Rule.} From an inequality $ax \leq b$, if $d \in \integers$ with $d \geq 0$ divides all entries of $a$ then deduce $(a/d)x \leq \lfloor b/d \rfloor$.
\end{itemize}
A Cutting Planes \emph{refutation} of $P$ is a proof of the trivially false inequality $1 \leq 0$ from the inequalities in $P$; clearly, such a refutation is possible only if $P$ does not contain any integral points.
While Cutting Planes has grown to be an influential proof system in propositional proof complexity, the original cutting planes algorithms suffered from numerical instabilities, as well as difficulties in finding good heuristics for the next cutting planes to add \cite{gomory1963algorithm}.

The modern algorithms in integer programming improve on the classical cutting planes method by combining them with a second technique, known as \emph{branch-and-bound}, resulting in a family of optimization algorithms broadly referred to as \emph{branch-and-cut algorithms}. 
These algorithms search for integer solutions in a polytope $P$ by recursively repeating the following two procedures: First, $P$ is split into smaller polytopes $P_1,\ldots, P_k$  such that $P \cap \mathbb{Z}^n \subseteq \bigcup_{i\in [k]} P_i$ (i.e.~\emph{branching}). 
Next, cutting planes deductions are made in order to further refine the branched polytopes (i.e.~\emph{cutting}). 
In practice, branching is usually performed by selecting a variable $x_i$ and branching on all possible values of $x_i$; that is, recursing on $P \cap \{x_i = t\}$ for each feasible integer value $t$.
More complicated branching schemes have also been considered, such as branching on the hamming weight of subsets of variables \cite{FischettiL03}, branching using basis-reduction techniques~\cite{AardalL04,KrishnamoorthyP09,AardalBHLS00}, and more general linear inequalities \cite{OwenM01,mahajanRT09,KaramanovC11}. 

However, while these branch-and-cut algorithms are much more efficient in practice than the classical cutting planes methods, they are no longer naturally modelled by Cutting Planes proofs.
So, in order to model these solvers as proof systems, Beame et al.~\cite{BeameFIKPPR18} introduced the \emph{Stabbing Planes} proof system.
Given a polytope $P$ containing no integral points, a \emph{Stabbing Planes} refutation of $P$ proceeds as follows.
We begin by choosing an integral vector $a$, an integer $b$, and replacing $P$ with the two polytopes $P \cap \set{ax \leq b-1}$ and $P \cap \set{ax \geq b}$.
Then, we recurse on these two polytopes, continuing until all descendant polytopes are empty (that is, they do not even contain any \emph{real} solutions).
The majority of branching schemes used in practical branch-and-cut algorithms (including all of the concrete schemes mentioned above) are examples of this general branching rule.

It is now an interesting question how the two proof systems --- Cutting Planes and Stabbing Planes --- are related.
By contrasting the two systems we see at least three major differences:
\begin{itemize}
	\item \emph{Top-down vs.~Bottom-up.} Stabbing Planes is a \emph{top-down} proof system, formed by performing queries on the polytope and recursing; while Cutting Planes is a \emph{bottom-up} proof system, formed by deducing new inequalities from old ones.
	\item \emph{Polytopes vs.~Halfspaces.} Individual ``lines'' in a Stabbing Planes proof are \emph{polytopes}, while individual ``lines'' in a Cutting Planes proof are \emph{halfspaces}.
	\item \emph{Tree-like vs.~DAG-like.} The graphs underlying Stabbing Planes proofs are trees, while the graphs underlying Cutting Planes proofs are general DAGs: intuitively, this means that Cutting Planes proofs can ``re-use'' their intermediate steps, while Stabbing Planes proofs cannot.
\end{itemize}
When taken together, these facts suggest that Stabbing Planes and Cutting Planes could be incomparable in power, as polytopes are more expressive than halfspaces, while DAG-like proofs offer the power of line-reuse.
Going against this natural intuition, Beame et al.~proved that Stabbing Planes \emph{can} actually efficiently simulate Cutting Planes \cite{BeameFIKPPR18} (see \autoref{fig:relationships}) --- this simulation was later extended by Basu et al.~\cite{BasuCDJ21} to almost all types of cuts used in practical integer programming, including split cuts.
Furthermore, Beame et al. proved that Stabbing Planes is \emph{equivalent} to the proof system \emph{tree-like $\mathsf{R}(\CP)$}, denoted $\tRCP$, which was introduced by Kraj\'{i}\v{c}ek \cite{Krajicek98}, and whose relationship to Cutting Planes was previously unknown.

This leaves the converse problem --- of whether Stabbing Planes can also be simulated by Cutting Planes --- as an intriguing open question.
Beame et al.~conjectured that such a simulation was impossible, and furthermore that the \emph{Tseitin formulas} provided a separation between these systems \cite{BeameFIKPPR18}.
For any graph $G$ and any $\set{0,1}$-labelling $\ell$ of the vertices of $G$, the \emph{Tseitin formula} of $(G, \ell)$ is the following system of $\mathbb{F}_2$-linear equations: for each edge $e$ we introduce a variable $x_e$, and for each vertex $v$ we have an equation
\[ \bigoplus_{u: uv \in E} x_{uv} = \ell(v)\]
asserting that the sum of the edge variables incident with $v$ must agree with its label $\ell(v)$ (note such a system is unsatisfiable as long as $\sum_v \ell(v)$ is odd).
On the one hand, Beame et al.~proved that there are \emph{quasi-polynomial size} Stabbing Planes refutations of the Tseitin formulas \cite{BeameFIKPPR18}.
On the other hand, Tseitin formulas had long been conjectured to be exponentially hard for Cutting Planes~\cite{CookCT87}, as they form one of the canonical families of hard examples for algebraic and semi-algebraic proof systems, including Nullstellensatz \cite{dima-nsatz}, Polynomial Calculus \cite{bgip}, and Sum-of-Squares \cite{dima-sos,Schoenebeck08}.

In a recent breakthrough, the long-standing conjecture that Tseitin was exponentially hard for Cutting Planes was \emph{refuted} by Dadush and Tiwari~\cite{TseitinUpperBound}, who gave \emph{quasi-polynomial size} Cutting Planes refutations of Tseitin instances.  
Moreover, to prove their result, Dadush and Tiwari showed how to \emph{translate} the quasipolynomial-size Stabbing Planes refutations of Tseitin into Cutting Planes refutations.
This translation result is interesting for several reasons. 
First, it brings up the possibility that Cutting Planes \emph{can actually} simulate Stabbing Planes.
If possible, such a simulation would allow the significant analysis done on the Cutting Planes system to be lifted directly to branch-and-cut solvers. 
In particular, this would mean that the known exponential-size lower bounds for Cutting Planes refutations would immediately imply the first exponential lower bounds for these algorithms for arbitrary branching heuristics.
Second, the translation converts {\it shallow} Stabbing Planes proofs into \emph{very deep} Cutting Planes proofs: the Stabbing Planes refutation of Tseitin has depth $O(\log^2 n)$ and quasi-polynomial size, while the Cutting Planes refutation has quasipolynomial size \emph{and} depth.
This is quite unusual since simulations between proof systems typically preserve the structure of the proofs, and thus brings up the possibility that the Tseitin formulas yield a {\it supercritical} size/depth tradeoff -- formulas with short proofs, requiring {\it superlinear} depth.
For contrast: another simulation from the literature which emphatically does \emph{not} preserve the structure of proofs is the simulation of \emph{bounded-size} resolution by \emph{bounded-width} resolution by Ben-Sasson and Wigderson \cite{BenSassonW01}.
In this setting, it is known that this simulation is tight \cite{BonetG01}, and even that there exist formulas refutable in resolution width $w$ requiring maximal size $n^{\Omega(w)}$ \cite{AtseriasLN16}.
Furthermore, under the additional assumption that the proofs are \emph{tree-like}, Razborov \cite{Razborov16} proved a supercritical trade-off between width and size.



\subsection{Our Results}

\subsubsection*{A New Characterization of Cutting Planes}

Our first main result gives a \emph{characterization} of Cutting Planes proofs as a natural subsystem of Stabbling Planes that we call \emph{Facelike} Stabbing Planes.
A Stabbing Planes query is \emph{facelike} if one of the sets $P \cap \{ax \leq b-1\}$ or $P \cap \{ax \geq b\}$ is either empty or is a face of the polytope $P$, and a Stabbing Planes proof is said to be facelike if it only uses facelike queries. 
Our main result is the following theorem. 

\begin{restatable}{thm}{main}
	\label{thm:main}
	The proof systems $\CP$ and Facelike $\SP$ are polynomially equivalent.
\end{restatable}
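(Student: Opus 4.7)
I would prove the theorem by establishing the two polynomial simulations separately.

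\emph{Facelike SP simulates CP.} Given a CP refutation of $P$, I would translate each line into at most one facelike SP query. A linear-combination step requires no SP action, since each SP line is a polytope and every linear consequence of the current polytope is already implicit. A CG-cut step that derives $(a/g)x \leq \lfloor b/g \rfloor$ from a previously derived $(ga)x \leq b$ is simulated by the facelike query with branching hyperplane $(a/g)x = \lfloor b/g \rfloor + 1$: the $\geq$-side is empty because the current polytope already implies $(a/g)x \leq b/g < \lfloor b/g \rfloor + 1$, so the query is facelike via the empty-side clause, while the $\leq$-side yields exactly the new CG cut. Walking the CP proof top-down in this way produces a facelike SP refutation of size linear in the CP proof.

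\emph{CP simulates Facelike SP.} I would prove this direction by induction on the facelike SP tree, using the following \emph{deduction lemma for CP} as the key ingredient: if $P \cup \{cx \geq d\} \vdash_{\CP} \bot$ in size $s$, then $P \vdash_{\CP} cx \leq d-1$ in size $O(s)$, and symmetrically for the flipped inequality. Given the lemma, the inductive step proceeds as follows. At an internal facelike SP node $v$, WLOG the polytope $P_v$ linearly implies $ax \geq b-1$; the left child then refutes the face $F = P_v \cap \{ax = b-1\}$ and the right child refutes $P_v \cup \{ax \geq b\}$. Applying the deduction lemma to the right child's CP refutation yields a CP derivation of $ax \leq b-1$ from $P_v$. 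Combining this with the direct one-step linear derivation $P_v \vdash ax \geq b-1$ gives us the equation $ax = b-1$ from $P_v$; splicing in the left child's CP refutation of $F$ (treating the now-derived $ax = b-1$ as axiomatic) produces a CP refutation of $P_v$ of size proportional to the SP subtree at $v$. The root case supplies the CP refutation of $P$.

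\emph{Main obstacle.} The technical heart is the deduction lemma. When the CP refutation of $P \cup \{cx \geq d\}$ uses only linear combinations, Farkas' lemma suffices: integer-scaled Farkas coefficients give $\mu c x \leq \mu d - 1$ from $P$ for some integer $\mu \geq 1$, and one CG division by $\mu$ produces $cx \leq d-1$. The difficulty is extending this to proofs that use the CG division rule, because division interacts nonlinearly with the ``amount'' by which each derived line depends on $cx \geq d$. I would attach to each line a rational weight tracking this dependence and argue that every division step can be rebalanced by a controlled rescaling, so that the blowup per application of the lemma stays constant. Combined with the inductive simulation above, this yields the desired polynomial equivalence between $\CP$ and Facelike $\SP$.
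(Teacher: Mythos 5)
Your first direction (Facelike $\SP$ simulates $\CP$) is correct and is essentially the paper's argument: each $\CP$ line becomes a query with one empty side, i.e.\ a pathlike (hence facelike) query.

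The second direction has a genuine gap, and it is located exactly where you flag the ``main obstacle.'' The deduction lemma you rely on --- from a size-$s$ $\CP$ refutation of $P \cup \{cx \geq d\}$, extract a size-$O(s)$ $\CP$ derivation of $cx \leq d-1$ from $P$ --- is not a known property of Cutting Planes, and proving it for arbitrary inequalities $cx \geq d$ would do far more than establish this theorem: it would let you simulate \emph{unrestricted} Stabbing Planes by $\CP$ (apply deduction to each child of every query and combine), which is the central open problem this paper explicitly leaves unresolved. Note that your argument never actually uses the facelike hypothesis at the point where the work happens: the right child's polytope $P_v \cap \{ax \geq b\}$ is the ``large'' side, not a face, so the deduction lemma you need is the fully general one. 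The division rule is precisely the obstruction --- a derived line's ``dependence'' on $cx \geq d$ cannot simply be subtracted off before applying the floor, because the rounding exploits integrality of the \emph{entire} left-hand side --- and the proposed fix (tracking rational weights and ``rebalancing'' each division) is an unsubstantiated sketch of exactly the hard step. Even granting a constant-factor blowup per application, the lemma is applied once per node along a root-to-leaf spine, so the factors compound to $C^{\mathrm{depth}}$, which is superpolynomial for deep trees; this secondary accounting issue is also not addressed.

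The paper avoids any deduction theorem by exploiting the facelike structure through Schrijver's lemma: if $F$ is a face of $P$ and $F \vdash F'$ by a CG cut, then there is a CG cut $P \vdash P'$ with $P' \cap F \subseteq F'$. This lets one take a (pathlike) refutation of the face $P \cap \{ax = b\}$, lift it cut-by-cut to a sequence of cuts on $P$ itself that eventually empties that face, and then cut the face off entirely with one further CG cut deriving $ax \geq b+1$; an in-order traversal of the facelike tree then yields a pathlike proof, which is equivalent to $\CP$. In other words, the paper processes the \emph{face} child first and pushes its refutation up to the parent, whereas you process the non-face child first and try to invert it via deduction --- the former is exactly what the facelike restriction makes possible, and the latter is what remains open in general. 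To repair your argument you would need to replace the deduction lemma with a face-lifting argument of this kind.
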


The proof of this theorem is inspired by Dadush and Tiwari's upper bound for the Tseitin formulas. Indeed, the key tool underlying both their proof and ours is a lemma due to Schrijver~\cite{SCHRIJVER1980291} which allows us to simulate $\CP$ refutations of faces of a polytope, when beginning from $P$ itself.

Using this equivalence we prove the following surprising simulation (see \autoref{fig:relationships}), stating that Stabbing Planes proofs with relatively small coefficients (quasi-polynomially bounded in magnitude) can be quasi-polynomially simulated by Cutting Planes.

\begin{restatable}{thm}{quasi}
	\label{thm:quasipoly-simulation}
	Let $F$ be any unsatisfiable CNF formula on $n$ variables, and suppose that there is a $\SP$ refutation of $F$ in size $s$ and maximum coefficient size $c$. 
	Then there is a $\CP$ refutation of $F$ in size $s(cn)^{\log s}$.
\end{restatable}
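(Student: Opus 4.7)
The plan is to first convert the given $\SP^*$ refutation into a Facelike $\SP$ refutation of size $s(cn)^{O(\log s)}$, and then invoke Theorem~\ref{thm:main} to translate this into a $\CP$ refutation of comparable size. The core task is therefore a size-efficient simulation of general $\SP$ queries by chains of facelike queries.

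The key observation enabling this simulation is the following: whenever a polytope $P$ satisfies a valid lower bound $ax \geq L$ for some integer $L$, the query ``$ax \leq L$ versus $ax \geq L+1$'' is facelike, since the first branch equals the face $P \cap \{ax = L\}$ of $P$ (possibly empty). Because $\|a\|_\infty \leq c$ and the variables are $\{0,1\}$-valued, the quantity $ax$ takes integer values only in $[-cn, cn]$, and the bound $L_0 := -\sum_{i : a_i < 0} a_i \geq -cn$ is valid uniformly on the initial polytope and on every descendant. We can therefore simulate any non-facelike query ``$ax \leq b-1$ versus $ax \geq b$'' on $P$ by a chain of at most $O(cn)$ facelike queries that peel off the faces $\{ax = k\}$ one integer value $k$ at a time, starting either from $k=L_0$ and increasing or from $k=-L_0$ and decreasing, until the threshold $b$ is crossed. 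Each peeled face implements one side of the original query and must receive its own copy of the corresponding sub-proof simulation, while the final polytope in the chain implements the other side with a single copy. Crucially, we may choose whether to peel from above or below the threshold, which controls which side of the original query gets ``duplicated'' by the peeling.

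To bound the total blowup, let $F(s)$ denote the worst-case Facelike $\SP$ simulation size of an $\SP^*$ sub-proof of size $s$, and consider a sub-proof whose root query has subtrees of sizes $s_L \leq s_R$. Peeling from the side corresponding to the smaller subtree yields the recurrence
\[ F(s) \;\leq\; O(cn) \cdot F(s_L) \,+\, F(s_R) \,+\, O(cn), \qquad s_L \leq s/2, \]
which a direct induction resolves to $F(s) \leq s (cn)^{O(\log s)}$: the $cn$ factor multiplying $F(s_L)$ is compensated by the decrease $\log s_L \leq \log s - 1$ in the exponent. The main obstacle is the subtlety of always being able to choose a peeling direction that is simultaneously (a)~cheap in the number of faces peeled (always $\leq cn$ on at least one of the two sides) and (b)~aligned with the smaller sub-proof subtree. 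When these two criteria conflict we instead invoke the standard heavy-subtree decomposition---every binary tree of size $s$ contains a sub-proof of size in $[s/3, 2s/3]$---to recurse on two balanced halves while still respecting the peeling cost, which again leads to the same asymptotic bound. Composing the resulting Facelike $\SP$ refutation with Theorem~\ref{thm:main} then produces a $\CP$ refutation of the claimed size $s(cn)^{\log s}$.
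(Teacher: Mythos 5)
Your proposal is correct and follows essentially the same route as the paper: reduce to a Facelike $\SP$ simulation by replacing each non-facelike query with a chain of face-peeling queries $(ax \leq k,\ ax \geq k+1)$ starting from a valid integer bound, duplicate the smaller subtree on the peeled faces, and solve the recurrence $f(s) \leq t\cdot f(s_L) + f(s_R)$ with $t \leq cn$ and $s_L \leq s/2$ before invoking \autoref{thm:main}. The only divergence is your worry about a conflict between the cheap peeling direction and the smaller subtree, resolved via a heavy-subtree decomposition; this fallback is unnecessary, since $ax$ takes at most $\|a\|_1 + 1 \leq cn+1$ integer values on $[0,1]^n$, so peeling in \emph{either} direction costs at most $cn$ and you are always free to duplicate whichever subtree is smaller.
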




In fact, we prove a more general result (\autoref{thm:low-coefficients}) which holds for arbitrary polytopes $P \in \mathbb{R}^n$, rather than only for CNF formulas, which degrades with the \emph{diameter} of $P$. This should be contrasted with the work of Dadush and Tiwari~\cite{TseitinUpperBound}, who show that any $\SP$ proof of size $s$ of a polytope with diameter $d$ can be assumed to have coefficients of size $(nd)^{O(n^2)}$.

As a second application of \autoref{thm:main}, we generalize Dadush and Tiwari upper bound for Tseitin to show that Cutting Planes can refute any unsatisfiable system of linear equations over a finite field. 
This follows by showing that, like Tseitin, we can refute such systems of linear equations in quasi-polynomial-size Facelike $\SP$.
\begin{thm}\label{thm:linear-equations}
	Let $F$ be the CNF encoding of an unsatisfiable system of $m$ linear equations over a finite field. 
	There is a $\CP$ refutation of $F$ of size $|F|^{O(\log m)}$. 
\end{thm}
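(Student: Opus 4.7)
The plan is to apply \autoref{thm:main} by constructing a Facelike Stabbing Planes refutation of $F$ of size $|F|^{O(\log m)}$, in direct generalization of the Beame et al.\ quasi-polynomial $\SP$ refutation of Tseitin from $\mathbb{F}_2$ to an arbitrary finite field $\mathbb{F}_q$. I would represent each $\mathbb{F}_q$-valued variable by $q$ Boolean indicator variables subject to ``exactly-one'' clauses, and encode each linear equation as the family of clauses ruling out its locally violating assignments. With this encoding, the polytope axioms consist of these clauses together with the Boolean box inequalities, so any query of the form ``$z_{i,a} \le 0$'' versus ``$z_{i,a} \ge 1$'' is automatically facelike: one side is empty and the other is an intersection with a coordinate hyperplane of the cube.

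The heart of the argument is a divide-and-conquer recursion that at each level identifies a small \emph{separator} set $S$ of variables---of size $O(\log m)$---such that conditioning on the values of the variables in $S$ splits the residual $\mathbb{F}_q$-linear system into two sub-systems of at most $\lceil m/2 \rceil$ equations each that share no further variables. Branching facelike on all $q^{|S|}$ assignments to $S$ (indicator by indicator) and recursing on each half gives the recurrence
\[
T(m) \;\le\; q^{O(\log m)} \cdot 2\, T(m/2),
\]
which unrolls to $T(m) \le q^{O(\log^2 m)} = |F|^{O(\log m)}$. The existence of such a separator should follow from a matroid-style balanced separator argument on the row space of the constraint matrix over $\mathbb{F}_q$: choose a partition of the equations that is balanced by variable support, and take a basis of the $\mathbb{F}_q$-span of the overlap to serve as $S$.

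The main obstacle will be establishing this balanced-separator lemma for arbitrary unsatisfiable $\mathbb{F}_q$-linear systems. Unlike the Tseitin case, where one exploits explicit graph separators, here one must work intrinsically with the row-space of the constraint matrix, and moreover argue that the conditioning step can be implemented using only facelike Boolean queries rather than the general $\mathbb{F}_q$-linear queries that $\SP$ does not natively support. A secondary subtlety is to verify that the leaves of the recursion---where a branch forces a specific violating assignment to some equation---can always be closed by a single facelike query, namely the clause witnessing the violation. Once these pieces are in place, \autoref{thm:main} immediately converts the resulting Facelike $\SP$ refutation into a $\CP$ refutation of the claimed size $|F|^{O(\log m)}$.
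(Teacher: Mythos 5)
There is a genuine gap, and it sits exactly where you flagged your ``main obstacle'': the balanced-separator lemma you need is false in general. You want, for any unsatisfiable $\mathbb{F}_q$-linear system, a balanced partition of the $m$ equations whose shared variables span a space of dimension $O(\log m)$, so that branching on all $q^{|S|}$ assignments costs only $q^{O(\log m)}$. But already for the motivating example --- Tseitin on a constant-degree expander --- every balanced partition of the vertices (equations) cuts $\Omega(m)$ edges (shared variables), so $|S| = \Omega(m)$ and your recurrence gives $q^{\Omega(m)}$ at the very first level. Dense systems behave even worse. No matroid or row-space argument can rescue this: the obstruction is combinatorial (expansion of the constraint--variable graph), not linear-algebraic.

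The idea you are missing is that one should not branch on the separator variables \emph{individually}, but on the integer \emph{value of a single linear form} supported on them --- this is precisely what Stabbing Planes queries buy you over decision trees. The paper's construction fixes, once and for all, an $\mathbb{F}_q$-linear combination $\alpha$ witnessing unsatisfiability ($\sum_i \alpha_i f_i \equiv 0$ but $\sum_i \alpha_i b_i \not\equiv 0 \bmod q$), maintains a set $I$ of equations together with the queried value of $\sum_{i\in I}\alpha_i f_i$ (an invariant certifying that $\{f_i=b_i\}_{i\in I}$ is already violated), splits $I$ in half, and queries the integer values of $\sum_{i\in I_1}\alpha_i f_i$ and $\sum_{i\in I_2}\alpha_i f_i$. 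Each such query has only $O(q^2 m d)$ possible outcomes \emph{regardless of how many variables the form touches}, and the invariant guarantees at least one half is violated, so one recurses on a single half; this gives $(qmd)^{O(\log m)}q^{d}=|F|^{O(\log m)}$ with no separator hypothesis at all. (A further wrinkle with your version: after conditioning, only one of the two variable-disjoint halves need be unsatisfiable, so ``recursing on each half'' cannot close all leaves with $0\geq 1$; the invariant on $\alpha$ is also what tells you which half to follow.) Your outer framework --- build a Facelike $\SP$ refutation and invoke \autoref{thm:main} --- is the right one and matches the paper, and your observation about closing leaves by forcing integrality of the few variables in a violated equation is essentially how the paper handles the base case, but the divide-and-conquer step must be replaced by the linear-form query as above.
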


This should be contrasted with the work of Filmus, Hrube\v{s}, and Lauria \cite{FilmusHL16}, which gives several unsatisfiable systems of linear equations over $\reals$ that require \emph{exponential size} refutations in Cutting Planes (see \autoref{fig:relationships}).

\begin{figure}
	\centering 
		\hspace{3em}\begin{tikzpicture}
		
 		\draw[color=black!80, very thick, rounded corners=0.5ex,fill=green!9] (-3.7,8.3) -- (-3.7,7.7) -- (-0.6,7.7) -- (-0.6,8.3) -- cycle ;
 		\node[text width=3cm] at (-2,8) {$\SP = \tRCP$};
 		
 		\draw[color=black!80, very thick, rounded corners=0.5ex,fill=green!9] (0.7,8.3) -- (0.7,7.7) -- (3.1,7.7) -- (3.1,8.3) -- cycle ;
 		\node[text width=3cm] at (2.4,8) {Semantic $\CP$};
 
 		\draw[color=black!80, very thick, rounded corners=0.5ex,fill=green!9] (0.3,6.8) -- (0.3,6.2) -- (3.5,6.2) -- (3.5,6.8) -- cycle ;
 		\node[text width=3cm] at (2,6.5) {$\CP = $ Facelike $\SP$};

 		\draw[color=black!80, very thick, rounded corners=0.5ex,fill=green!9] (-2.4,6.8) -- (-2.4,6.2) -- (-1.1,6.2) -- (-1.1,6.8) -- cycle ;
 		\node[text width=1cm] at (-1.5,6.5) (text1){$\SP^*$};
 		
 		\draw[color=black!80, very thick, rounded corners=0.5ex,fill=green!9] (-0.9,5.3) -- (-0.9,4.7) -- (0.4,4.7) -- (0.4,5.3) -- cycle ;
 		\node[text width=1cm] at (0,5) (text1){$\CP^*$};

 		
  		\draw[->, color=black!80, very thick] (-1.7,6.9) -- (-1.7,7.6);
  		\draw[->, color=black!80, very thick] (2.3,6.9) -- (2.3,7.6);
  		\draw[<-, color=red!45, dashed, very thick] (1.3,6.9) -- (1.3,7.6);
  		\draw[<-, color=red!45, very thick] (0.2,6.5) -- (-1,6.5);
  		\draw[->, color=black!80, very thick] (0.2,5.4) -- (1.5,6.1);
  		\draw[->, color=black!80, very thick] (-0.7,5.4) -- (-1.5,6.1);
  		
  		\draw[->, color=black!80, very thick] (0.3,6.9) -- (-0.6,7.6);
  		
  		\draw[<-, color=red!45, dashed, very thick] (-0.5,8) -- (0.6,8);
 	\end{tikzpicture}
	\caption{Known relationships between proof systems considered in this paper. A solid black (red) arrow from proof system $P_1$ to $P_2$ indicates that $P_2$ can polynomially (quasi-polynomially) simulate $P_1$. A dashed arrow indicates that this simulation cannot be done.} \label{fig:relationships}
\end{figure}
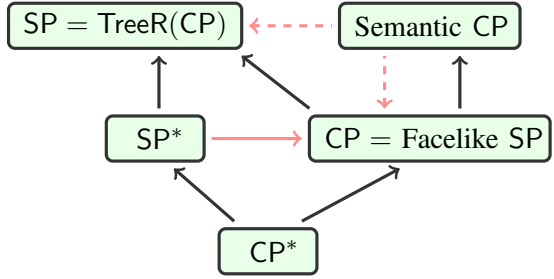


\subsubsection*{Lower Bounds} 
An important open problem is to prove superpolynomial size lower bounds for Stabbing Planes proofs.
We make significant progress toward this goal by proving the first superpolynomial lower bounds
on the size of low-weight Stabbing Planes proofs.
Let $\SP^*$ denote the family of Stabbing Planes proofs in which each coefficient has at most quasipolynomial ($n^{\log^{O(1)} n}$) magnitude.

\begin{thm}\label{thm:spstar-lb}
   There exists a family of unsatisfiable CNF formulas $\{F_n\}$ such that any $\SP^*$ refutation of $F$ requires size at least $2^{n^\varepsilon}$ for constant $\varepsilon > 0$.
\end{thm}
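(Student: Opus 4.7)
The plan is to derive Theorem~\ref{thm:spstar-lb} as a near-immediate corollary of the quasi-polynomial simulation in Theorem~\ref{thm:quasipoly-simulation}. In short: pick a CNF family with known exponential-size lower bounds against $\CP$, and use the simulation to translate any small $\SP^*$ refutation into a $\CP$ refutation that violates the lower bound.

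More concretely, I would first fix $\{F_n\}$ to be a family of unsatisfiable CNFs on $n$ variables known to require $\CP$ refutations of size at least $2^{n^{\delta}}$ for some constant $\delta>0$. Classical candidates include the clique-coloring formulas (via the Pudl\'ak/Bonet--Pitassi--Raz interpolation lower bounds) or any more recent explicit family whose $\CP$ size lower bound is truly exponential in a polynomial in $n$. The only properties of $\{F_n\}$ that I need are (i) that it is an explicit CNF family on $n$ variables, and (ii) that any $\CP$ refutation of $F_n$ has size at least $2^{n^{\delta}}$.

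Now suppose for contradiction that $F_n$ admits an $\SP^*$ refutation of size $s$. By definition of $\SP^*$, every coefficient has magnitude at most $c = n^{\log^{O(1)} n}$. Applying Theorem~\ref{thm:quasipoly-simulation}, there is a $\CP$ refutation of $F_n$ of size at most
\[
s \cdot (cn)^{\log s} \;=\; s \cdot 2^{\log s \cdot \log(cn)} \;=\; s \cdot 2^{\log s \cdot \log^{O(1)} n}.
\]
Combining this upper bound with the $\CP$ lower bound $2^{n^\delta}$, we get $\log s \cdot \log^{O(1)} n \ge \Omega(n^\delta)$, and therefore $\log s \ge \Omega\!\left(n^\delta/\log^{O(1)} n\right) \ge n^{\varepsilon}$ for any constant $\varepsilon < \delta$ and $n$ large enough. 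This yields $s \ge 2^{n^\varepsilon}$, completing the proof.

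There is essentially no obstacle beyond invoking the previous theorem: the core technical content has been pushed into Theorem~\ref{thm:main} and its consequence Theorem~\ref{thm:quasipoly-simulation}. The only care needed is in the choice of $\{F_n\}$ and in balancing the parameters so that the quasi-polynomial blowup $(cn)^{\log s}$ (with $c$ quasi-polynomial) is absorbed by the $2^{n^\delta}$ $\CP$ lower bound; this is why we only obtain $2^{n^\varepsilon}$ for some $\varepsilon < \delta$ rather than recovering the full $\CP$ exponent. If the underlying $\CP$ lower bound were strengthened or the simulation tightened, the $\SP^*$ lower bound would improve correspondingly.
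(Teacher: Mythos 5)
Your proposal is correct and is exactly the paper's argument: the paper derives Theorem~\ref{thm:spstar-lb} immediately from Theorem~\ref{thm:quasipoly-simulation} together with the known exponential Cutting Planes size lower bounds (e.g.~Pudl\'ak's clique-coloring bound), with the same parameter balancing absorbing the quasi-polynomial blowup $(cn)^{\log s}$.
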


Our proof follows in a straightforward manner from \autoref{thm:quasipoly-simulation} together with known Cutting Planes lower bounds.
We view this as a step toward proving SP lower bounds (with no restrictions on the weight).
Indeed, lower bounds for $\CP^*$ (low-weight Cutting Planes) \cite{BonetPR97} were first established, and led to (unrestricted) CP lower bounds \cite{Pudlak97}.

Our second lower bound is a new linear depth lower bound for {\it semantic} Cutting Planes proofs.
(In a semantic Cutting Planes proof the deduction rules for CP are replaced by
a simple and much stronger \emph{semantic deduction rule}).

\begin{thm}
\label{depth-lb}
   For all sufficiently large $n$ there is a graph $G$ on $n$ vertices and a labelling $\ell$ such that the Tseitin formula for $(G, \ell)$ requires $\Omega(n)$ depth to refute in Semantic Cutting Planes.
\end{thm}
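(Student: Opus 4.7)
My approach would be to convert the depth lower bound into a halfspace decision tree lower bound, and then prove the latter via a geometric adversary argument tailored to the expansion structure of Tseitin instances. It is standard that a depth-$d$ Semantic Cutting Planes refutation of a CNF $F$ yields a depth-$d$ halfspace decision tree solving the falsified-clause search problem of $F$: at each internal node one queries a halfspace $H$, branches on whether the input $x \in \{0,1\}^n$ satisfies $H$, and each leaf is labelled by a clause that every assignment reaching it must falsify. The theorem therefore reduces to proving an $\Omega(n)$ lower bound on the depth of any halfspace decision tree solving the Tseitin search problem, for some well-chosen graph.

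For the graph, I would take a constant-degree spectral expander $G$ on $n$ vertices equipped with an odd charge $\ell$, so that expansion prevents Tseitin violations from ever being ``localized'' to a small vertex set. At each tree node $u$ I would maintain two objects: a set $A_u \subseteq \{0,1\}^{E(G)}$ of alive assignments lying in the intersection of the halfspace answers along the root-to-$u$ path, and a spread-out set $V_u \subseteq V(G)$ of vertices whose parity constraint could still plausibly be the violated one for some assignment in $A_u$. The adversary responds to a new halfspace query $H$ by branching to whichever of $H$, $\neg H$ retains the larger share of $A_u$, modifying $V_u$ as little as possible. Reaching a leaf forces the tree to shrink $V_u$ to a single vertex while keeping $A_u$ nonempty, so a lower bound on depth reduces to showing that each query removes at most a constant fraction of $A_u$ and shrinks $V_u$ by at most a bounded amount.

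The main obstacle, and where the genuinely geometric content must live, is a halfspace anti-concentration lemma of roughly the following form: for an appropriately symmetric distribution on the alive assignments, every halfspace is either so biased that the adversary can keep essentially all of $A_u$ on one side, or nearly balanced but essentially uncorrelated with the identity of the violated vertex. I would try to build such a distribution by starting from the uniform measure on $\{0,1\}^{E(G)}$ conditioned on satisfying all Tseitin constraints except at one vertex $v^*$, then using expansion of $G$ together with Fourier analysis on the hypercube to argue that halfspaces (being sign-of-linear-form functions) are too ``low-resolution'' to distinguish choices of $v^*$ after only $o(n)$ adaptive queries. The hardest step is controlling this interaction between the convex-geometric halfspace queries and the pseudorandom parity structure of Tseitin on an expander cleanly enough to survive the full strength of the semantic deduction rule; I expect this step to dictate both the choice of adversary measure and the precise expansion parameters required of $G$.
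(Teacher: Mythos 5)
Your reduction in the first paragraph sets up the wrong model, and the resulting framework cannot prove the theorem. In the top-down view of a Semantic Cutting Planes refutation, the ``state'' at a node is a \emph{single} halfspace $H_u$, and the soundness condition is only that $H_u\cap\{0,1\}^n\subseteq H_v\cup H_w$ for two arbitrary (not complementary) child halfspaces; the proof is not permitted to remember the intersection of all halfspaces along a path. Your adversary instead maintains $A_u$ as the set of assignments consistent with \emph{all} answers on the root-to-$u$ path, i.e.\ you are proving a lower bound against threshold decision trees with intersecting state. That model is too strong for the lower bound to be true: the Stabbing Planes refutation of Tseitin of Beame et al.\ is exactly such a tree of depth $O(\log^2 n)$ --- binary-search the integer value of $\sum_{e\in\delta(S)}x_e$ for a sequence of balanced vertex cuts $S$ to localize the violated vertex, then query the constantly many incident edge variables to pin down the falsified clause. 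So no adversary of the kind you describe can survive $\omega(\log^2 n)$ rounds; in particular the claim that each query shrinks $V_u$ by only a bounded amount, and the ``balanced implies uncorrelated with the violated vertex'' half of your dichotomy, are both false: a balanced cut query is nearly unbiased on your measure yet halves the set of candidate violated vertices.

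The paper's proof exploits precisely the single-halfspace restriction that your model discards. It maintains one fractional witness --- after a partial restriction $\rho$, the current halfspace restricted by $\rho$ must contain the all-$\frac{1}{2}$ point (``goodness'') --- and the technical core is a geometric lemma: if a good halfspace has its Boolean points covered by $H_1\cup H_2$, then one of the $H_i$ becomes good after fixing at most two coordinates (proved by finding a point orthogonal to both normal vectors on a $2$-face of the cube and rounding). Combined with a robustness lemma that fixes parities while preserving goodness, and boundary-expansion bookkeeping for Tseitin on an expander graph, this yields the $\Omega(n)$ bound with no measure, no distribution over violated vertices, and no Fourier analysis. To salvage your approach you would need to (a) tie the adversary's witness to the current halfspace alone rather than to the intersection of past answers, and (b) actually prove the anti-concentration dichotomy, which you leave as the ``main obstacle'' and which, in the intersecting-state model, cannot hold.
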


We note that depth lower bounds for Semantic Cutting Planes have already established
via communication complexity arguments. However, since Tseitin formulas have short communication protocols,
our depth bound for semantic Cutting Planes proofs of Tseitin is new.

\autoref{depth-lb} is established via a new technique for proving lower bounds on the depth of semantic Cutting Planes proofs. Our technique is inspired by the result of Buresh-Oppenheim et al.~\cite{BGHMP06}, who proved lower bounds on the depth of Cutting Planes refutations of Tseitin by studying the \emph{Ch\'{a}tal rank} of the associated polytope $P$. Letting $P^{(d)}$ be the polytope composed of all inequalities which can be derived in depth $d$ in Cutting Planes. The Ch\'{a}tal rank of $P$ is the minimum $d$ such that $P^{(d)}=\emptyset$. Thus, in order to establish a depth lower bound of depth $d$, one would like to show the existence of a point $p \in P^{(d)}$. To do so, they give a sufficient criterion for a point $p$ to be in $P^{(i)}$ in terms of the points in $P^{(i-1)}$. This criterion relies on a careful analysis of the specific rules of Cutting Planes, and is no longer sufficient for semantic $\CP$. Instead, we develop an analogous criterion for semantic $\CP$ by using novel \emph{geometric} argument (\autoref{lem:CruxForPM1}) which we believe will be of independent interest.

Our main motivation behind this depth bound is as a step towards proving a {\it supercritical} tradeoff in $\CP$ for Tseitin formulas.
A supercritical tradeoff for $\CP$, roughly speaking, states that small size $\CP$ proofs must sometimes necessarily be very deep --- that is, beyond the trivial depth upper bound of $O(n)$ \cite{Razborov16,BerkholzN20}.
(Observe that Dadush and Tiwari's quasipolynomial-size $\CP$ refutations of Tseitin are quasipolynomially deep; this is preserved by our simulation of Facelike Stabbing Planes by Cutting Planes in \autoref{thm:main}.)
Establishing supercritical tradeoffs is a major challenge, both because
hard examples witnessing such a tradeoff are rare, and because current methods seem to fail beyond the critical regime.
In fact, to date the only supercritical tradeoffs between size and depth for known proof systems are due to Razborov, under the additional assumption that the proofs have \emph{bounded width}.
Namely, Razborov exhibited a supercritical size-depth tradeoff for bounded width tree-like resolution \cite{Razborov16}, and then extended this result to $\CP$ proofs in which each inequality has a bounded number of distinct variables \cite{Razborov17}.

How could one prove a supercritical depth lower bound for Cutting Planes?
All prior depth lower bounds for Cutting Planes proceed by either reducing to communication complexity, or by using so-called \emph{protection lemmas} (e.g.~\cite{BGHMP06}).
Since communication complexity is always at most $n$, it will be useless for proving supercritical lower bounds directly.
It therefore stands to reason that we should focus on improving the known lower bounds using protection lemmas and, indeed, our proof of \autoref{depth-lb} is a novel geometric argument which generalizes the top-down ``protection lemma'' approach \cite{BGHMP06} for syntactic $\CP$.
At this point in time we are currently unable to use protection lemma techniques to prove size-depth tradeoffs, so, we leave this as an open problem.

\begin{conj}\label{conj:deepProofs}
	There exists a family of unsatisfiable formulas $\{F_n\}$ such that $F_n$ has quasipolynomial-size $\CP$ proofs, but any quasipolynomial-size proof requires superlinear depth.
\end{conj}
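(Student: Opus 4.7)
The plan is to prove the conjecture for Tseitin formulas on carefully chosen graphs, building on the geometric depth lower bound machinery introduced for \autoref{depth-lb}. Dadush and Tiwari already provide quasipolynomial-size $\CP$ refutations of Tseitin, giving the upper bound side of the tradeoff essentially for free, so the entire difficulty concentrates on the lower bound: showing that on a suitable graph family, every quasipolynomial-size $\CP$ proof of the Tseitin formula requires superlinear depth. Note that the \autoref{depth-lb} bound of $\Omega(n)$ is merely linear, so the first sub-goal is to push the geometric argument past the trivial depth ceiling.

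First I would attempt a quantitative strengthening of the protection-lemma/geometric criterion used in \autoref{lem:CruxForPM1}. Instead of merely certifying that a single surviving point $p$ forces depth $d$ via membership in $P^{(d)}$, one would maintain an invariant on a ``thick'' family (or a measure) of fractional points that are simultaneously unharmed by every inequality derived so far in the proof. The goal is a lemma of the form: killing all points in this family requires either many distinct semantic cuts (blowing up size) or many cumulative rounds of derivation (blowing up depth). This is analogous in spirit to Razborov's combinatorial arguments for bounded-width resolution and bounded-variable $\CP$, adapted to the real-geometric setting where fractional points and halfspaces take the place of restrictions and clauses. The graph underlying the Tseitin instance would then need to be chosen carefully --- long and ``thin'' in the style of stacked pyramids or pebbling graphs --- so that quasipolynomial-size proofs are forced to re-derive many local consequences, yet sufficiently structured that the Dadush-Tiwari construction, lifted through \autoref{thm:main} via Facelike $\SP$, still produces quasipolynomial-size refutations.

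The main obstacle, and the reason the authors explicitly leave this as an open problem, is exactly the gap the paper highlights: all known depth lower bounds for $\CP$ either reduce to communication complexity (inherently capped at $O(n)$) or use protection-style / Chvátal-rank arguments that are completely insensitive to size. Bridging this gap requires a genuinely new invariant --- one that degrades \emph{gracefully} with the number of semantic deductions rather than just with rounds --- and proving that it is preserved under both branching on integer-valued inequalities and the dag-like reuse of intermediate cuts is where I expect the real technical difficulty to lie. A secondary concern is that the natural ``thick point set'' invariants tend to collapse under a single well-chosen cut, so the construction of the graph and the invariant must be tightly coupled: the graph's combinatorial structure should guarantee that no semantic cut can eliminate more than a small fraction of the protected family at once, forcing the adversary to pay either in size or in depth.
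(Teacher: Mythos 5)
The statement you are attempting is \autoref{conj:deepProofs}, which the paper explicitly states as an open \emph{conjecture} and does not prove; the authors' own contribution stops at the linear depth lower bound of \autoref{depth-lb} and an explanation of why existing techniques cannot go further. Your proposal correctly identifies the intended candidate (Tseitin, with the upper bound supplied by Dadush--Tiwari and by the Facelike $\SP$ simulation of \autoref{thm:main}) and correctly diagnoses the obstruction (communication arguments cap at $O(n)$; protection-lemma and Chv\'{a}tal-rank arguments are insensitive to proof size). But the proposal does not contain a proof: the central object --- the ``quantitative strengthening'' of \autoref{lem:CruxForPM1} that maintains a thick family of fractional points and forces any proof to pay either in the number of distinct cuts or in the number of rounds --- is never defined, no invariant is stated, and no argument is given that such a family survives a single well-chosen semantic cut, let alone the dag-like reuse of cuts. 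You acknowledge this yourself in the final paragraph. What you have written is a faithful restatement of the open problem together with the paper's own survey of why it is hard, not a solution to it.

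Concretely, the step that would fail if you tried to execute the plan is the preservation of the ``thick point set'' invariant under a single deduction. In the paper's walk-based argument (used for \autoref{thm:expander-lb}), one tracks a \emph{single} good point per node and pays only in restricted coordinates, which is why the bound tops out at the number of variables. To get a superlinear bound you would need the invariant to degrade with the \emph{total number of lines in the proof} rather than with the length of one root-to-leaf path, and no mechanism in the paper (or in your proposal) charges a dag-like proof for lines off the chosen path. Until such a mechanism is exhibited, the conjecture remains open, and this proposal should be presented as a research direction rather than a proof.
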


\subsection{Related Work}

\paragraph{Lower Bounds on $\SP$ and $\tRCP$.} Several lower bounds on subsystems of $\SP$ and $\tRCP$ have already been established. 
Kraj\'{i}\v{c}ek \cite{Krajicek98} proved exponential lower bounds on the size of $\RCP$ proofs in which both the \emph{width} of the clauses and the magnitude of the coefficients of each line in the proof are bounded. Concretely, let these bounds be $w$ and $c$ respectively. The lower bound that he obtains is $2^{n^{\Omega(1)}}/c^{w \log^2 n}$. Kojevnikov \cite{Kojevnikov07} removed the dependence on the coefficient size for $\tRCP$ proofs, obtaining a bound of $\exp(\Omega(\sqrt{n/w \log n}))$. Beame et al.~\cite{BeameFIKPPR18} provide a size-preserving simulation of Stabbing Planes by $\tRCP$ which translates a depth $d$ Stabbing Planes proof into a width $d$ $\tRCP$ proof, 
and therefore this implies lower bounds on the size of $\SP$ proofs of depth $o(n/\log n)$. Beame et al. \cite{BeameFIKPPR18} exhibit a function for which there are no $\SP$ refutations of depth $o(n / \log^2 n)$ via a reduction to the communication complexity of the CNF search problem. 

\paragraph{Supercritical Tradeoffs.} Besides the work of Razborov \cite{Razborov16}, a number of supercritical tradeoffs have been observed in proof complexity.
Perhaps most relevant for our work, Razborov \cite{Razborov17} proved a supercritical tradeoff for Cutting Planes proofs under the assumption that each inequality has a bounded number of distinct variables (mimicking the bound on the width of each clause in the supercritical tradeoff of \cite{Razborov16}). 

A number of supercritical tradeoffs are also known between proof width and proof \emph{space}. 
Beame et al.~\cite{BeameBI12} and Beck et al.~\cite{BeckNT13} exhibited formulas which admit polynomial size refutations in Resolution and the Polynomial Calculus respectively, and such that any refutation of sub-linear space necessitates a superpolynomial blow-up in size. Recently, Berkholz and Nordstr\"{o}m \cite{BerkholzN20} gave a supercritical trade-off between width and space for Resolution.

\paragraph{Depth in Cutting Planes and Stabbing Planes.} It is widely known (and easy to prove) that any unsatisfiable family of CNF formulas can be refuted by exponential size and \emph{linear} depth Cutting Planes.
It is also known that neither Cutting Planes nor Stabbing Planes can be \emph{balanced}, in the sense that a depth-$d$ proof can always be transformed into a size $2^{O(d)}$ proof \cite{BeameFIKPPR18, BGHMP06}.
This differentiates both of these proof systems from more powerful proof systems like Frege, for which it is well-known how to balance arbitrary proofs \cite{CookR79}.
Furthermore, even though both the Tseitin principles and systems of linear equations in finite fields can be proved in both quasipolynomial-size \emph{and} $O(\log^2 n)$ depth in Facelike $\SP$, the simulation of Facelike $\SP$ by $\CP$ \emph{cannot} preserve both size and depth, as the Tseitin principles are known to require depth $\Theta(n)$ to refute in $\CP$ \cite{BGHMP06}.

We first recall the known depth lower bound techniques for Cutting Planes, semantic Cutting Planes, and Stabbing Planes proofs. 
In all of these proof systems, arguably the primary method for proving depth lower bounds is by reducing to \emph{real communication complexity} \cite{ipu:cp ,BeameFIKPPR18}; however, communication complexity is always trivially upper bounded by $n$, and it is far from clear how to use the assumption on the size of the proof to boost this to superlinear.

A second class of methods have been developed for \emph{syntactic} Cutting Planes, which lower bound \emph{rank measures} of a polytope, such as the Chv\'{a}tal rank. In this setting, lower bounds are typically proven using so-called \emph{protection lemmas} \cite{BGHMP06}, which seems much more amenable to applying a small-size assumption on the proof. We also remark that for many formulas (such as the Tseitin formulas!) it is known how to achieve $\Omega(n)$-depth lower bounds in Cutting Planes via protection lemmas, while proving even $\omega(\log n)$ lower bounds via communication complexity is impossible, due to a known folklore upper bound. 

The first lower bound on the Chv\'{a}tal rank was established by Chv\'{a}tal et al.~\cite{ChvatalCH89}, who proved a linear bound for a number of polytopes in $[0,1]^n$. Much later, Pokutta and Schulz~\cite{PokuttaS11a} characterized the polytopes $P \subseteq [0,1]$ with $P \cap \mathbb{Z}^n = \emptyset$ which have Chv\'{a}tal rank exactly $n$. However, unlike most other cutting planes procedures, the Chv\'{a}tal rank is not of polytopes $P \cap [0,1]^n$ with $P \cap \mathbb{Z}^n = \emptyset$ is not upper bounded by $n$. Eisenbrand and Schulz~\cite{EisenbrandS99} showed that the Chv\'{a}tal rank of any polytope $P \subseteq [0,1]^n$ is at most $O(n^2 \log n)$ and gave examples where it is $\Omega(n)$; a nearly-matching quadratic lower bound was later established by Rothvo{\ss} and Sanita~\cite{RothvossS13}.
For CNF formulas, the Chv\'{a}tal rank is (trivially) at most $n$. Buresh-Oppenheim et al.~\cite{BGHMP06} gave the first lower bounds on the Chv\'{a}tal rank  a number of CNF formulas, including an $\Omega(n)$ lower bound for the Tseitin formulas. 

The rank of a number of generalizations of Cutting Planes has been studied as well. However, none of these appear to capture the strength of semantic Cutting Planes. Indeed, semantic Cutting Planes is able to refute Knapsack in a single cut, and therefore is known not to be polynomially verifiable unless $\P = \NP$ \cite{FilmusHL16}. Lower bounds on the rank when using split cuts and mixed integer cuts, instead of CG cuts, was established in \cite{CornuejolsL02a}.  
Pokutta and Schulz~\cite{PokuttaS10} obtained $\Omega(n/\log n)$ rank lower bounds on the complete tautology (which includes every clause of width $n$) for the broad class of \emph{admissible cutting planes}, which includes syntactic Cutting Planes, split cuts, and many of the lift-and-project operators. Bodur et al.~\cite{BodurPDMP18} studied the relationship between rank and integrality gaps for another broad generalization of Cutting Planes known as \emph{aggregate cuts}.

\section{Preliminaries} 
\label{sec:prelims}

We first recall the definitions of some key proof systems.

\paragraph{Resolution.} Fix an unsatisfiable CNF formula $F$ over variables $x_1, \ldots, x_n$. 
A \emph{Resolution refutation} $P$ of $F$ is a sequence of clauses $\set{C_i}_{i \in [s]}$ ending in the empty clause $C_s = \emptyset$ such that each $C_i$ is in $F$ or is derived from earlier clauses $C_j, C_k$ with $j, k < i$ using one of the following rules:
\begin{itemize}
	\item \emph{Resolution.} $C_i = (C_j \setminus \set{\ell_k}) \cup (C_k \setminus \set{\overline \ell_k})$ where $\ell_k \in C_j$, $\overline \ell_k \in C_k$ is a literal.
	\item \emph{Weakening.} $C_i \supseteq C_j$.
\end{itemize}
The \emph{size} of the resolution proof is $s$, the number of clauses.
It is useful to visualize the refutation $P$ as a directed acyclic graph; with this in mind the \emph{depth} of the proof (denoted $\depth_{\Res}(P)$) is the length of the longest path in the proof DAG.
The \emph{resolution depth} $\depth_{\Res}(F)$ of $F$ is the minimal depth of any resolution refutation of $F$.

\paragraph{Cutting Planes and Semantic Cutting Planes.} A \emph{Cutting Planes} ($\CP$) \emph{proof} of an inequality $cx \geq d$ from a system of linear inequalities $P$ is given by a sequence of inequalities \[ a_1x \geq b_1, a_2x \geq b_2, \ldots, a_s x \geq b_s \]
such that $a_s = c$, $b_s = d$, and each inequality $a_ix \geq b_i$ is either in $P$ or is deduced from earlier inequalities in the sequence by applying one of the two rules \emph{Linear Combination} or \emph{Division Rule} described at the beginning of Section \ref{sec:intro}.
We will usually be interested in the case that the list of inequalities $P$ defines a polytope.

An alternative characterization of Cutting Planes uses \emph{Chv\'{a}tal-Gomory cuts} (or just \emph{CG cuts}) \cite{CookCT87, Chvatal73a}.
Let $P$ be a polytope.
A hyperplane $ax = b$ is \emph{supporting} for $P$ if $b = \max \set{ax : x \in P}$, and if $ax = b$ is a supporting hyperplane then the set $P \cap \set{x \in \reals^n : ax = b}$ is called a \emph{face} of $P$.
An inequality $ax \leq b$ is \emph{valid} for $P$ if every point of $P$ satisfies the inequality and $ax = b$ is a supporting hyperplane of $P$.
\begin{defn}\label{def:cg-cut}
	Let $P \subseteq \reals^n$ be a polytope, and let $ax \geq b$ be any valid inequality for $P$ such that all coefficients of $a$ are relatively prime integers.
	The halfspace $\set{x \in \reals^n : ax \geq \lceil b \rceil}$ is called a \emph{CG cut} for $P$. (We will sometimes abuse notation and refer to the inequality $ax \geq \lceil b \rceil$ also as a CG cut.)
\end{defn}
If $ax \geq \lceil b \rceil$ is a CG cut for the polytope $P$, then we can derive $ax \geq \lceil b \rceil$ from $P$ in $O(n)$ steps of Cutting Planes by Farkas Lemma (note that the inequality $ax \geq b$ is valid for $P$ by definition, so we can deduce $ax \geq b$ as a linear combination of the inequalities of $P$ and then apply the division rule).
If $P$ is a polytope and $H$ is a CG cut, then we will write $P \vdash P \cap H$, and say that $P \cap H$ is \emph{derived} from $P$.

Given a CNF formula $F$, we can translate $F$ into a system of linear inequalities in the following natural way.
First, for each variable $x_i$ in $F$ add the inequality $0 \leq x_i \leq 1$.
If $C = \bigvee_{i \in P} x_i \vee \bigvee_{i \in N} \neg x_i$ is a clause in $F$, then we add the inequality \[\sum_{i \in P} x_i + \sum_{i \in N} (1-x_i) \geq 1. \]
It is straightforward to see that the resulting system of inequalities will have no integral solutions if and only if the original formula $F$ is unsatisfiable.
With this translation we consider Cutting Planes refutations (defined in the introduction) of $F$ to be refutations of the translation of $F$ to linear inequalities.

The \emph{semantic Cutting Planes} proof system (denoted $\sCP$ or Semantic $\CP$) is a strengthening of Cutting Planes proofs to allow \emph{any deduction} that is sound over Boolean points \cite{BonetPR97}.
Like Cutting Planes, an $\sCP$ proof is given by a sequence of halfspaces $\set{a_ix \geq c_i}_{i \in [s]}$, but now we can use the following very powerful \emph{semantic deduction rule}:
\begin{itemize}
	\item \emph{Semantic Deduction.} From $a_j x \geq c_j$ and $a_k x \geq c_k$ deduce $a_i x \geq c_i$ if every $\set{0,1}$ assignment satisfying both $a_jx \geq c_j$ and $a_k x \geq c_k$  also satisfies $a_ix \geq c_i$ .
\end{itemize}
Filmus et al.~\cite{FilmusHL16} showed that $\sCP$ is extremely strong: there are instances for which any refutation in $\CP$ requires exponential size, and yet these instances admit  polynomial-size refutations in semantic $\sCP$. 

The size of a Cutting Planes proof is the number of lines (it is known that for unsatisfiable CNF formulas that this measure is polynomially related to the length of the bit-encoding of the proof \cite{CookCT87}).
As with Resolution, it is natural to arrange Cutting Planes proofs into a proof DAG.
With this in mind we analogously define $\depth_{\CP}(F)$ and $\depth_{\sCP}(F)$ to be the smallest depth of any (semantic) Cutting Planes proof of $F$. 

It is known that \emph{any} system of linear inequalities in the unit cube has $\CP$ depth at most
$O(n^2 \log n)$, and moreover there
are examples requiring $\CP$-depth more than $n$ \cite{EisenbrandS99}.
However for unsatisfiable CNF formulas,
the $\CP$-depth is at most $n$ \cite{BockmayrEHS99}.

\paragraph{Stabbing Planes.}
Let $F$ be an unsatisfiable system of linear inequalities.
A \emph{Stabbing Planes ($\SP$) refutation} of $F$ is a directed binary tree, $T$, where each edge is labelled with a linear integral inequality satisfying the following \emph{consistency conditions}:
\begin{itemize}
	\item \emph{Internal Nodes.} For any internal node $u$ of $T$, if the right outgoing edge of $u$ is labelled with $ax \geq b$, then the left outgoing edge is labelled with its \emph{integer negation} $ax \leq  b-1 $.
	\item \emph{Leaves.} Each leaf node $v$ of $T$ is labelled with a non-negative linear combination of inequalities in $F$ with inequalities along the path leading to $v$ that yields $0 \geq 1$.
\end{itemize}
For an internal node $u$ of $T$, the pair of inequalities $(ax \le b-1, ax \ge b$) is called the \emph{query} corresponding to the node. Every node of $T$ has a polytope $P$ associated with it, where $P$ is the polytope defined by the intersection of the inequalities in $F$ together with the inequalities labelling the path from the root to this node. We will say that the polytope $P$ \emph{corresponds} to this node. 
The \emph{slab} corresponding to the query is $\{ x^* \in \mathbb{R}^n \mid b-1 < ax^* < b \}$, which is the set of points ruled out by this query. The \emph{width} of the slab is the minimum distance between $ax \leq b-1$ and $ax \geq b$, which is $1/\|a\|_2$.
The \emph{size} of a refutation is the bit-length needed to encode a description of the entire proof tree, which, for CNF formulas as well as sufficiently bounded systems of inequalities, is polynomially equivalent to the number of queries  in the refutation \cite{TseitinUpperBound}.
As well, the \emph{depth} of the refutation is the depth of the binary tree.
The proof system $\SP^*$ is the subsystem of Stabbing Planes obtained by restricting all coefficients of the proofs to have magnitude at most quasipolynomial ($n^{\log^{O(1)} n}$) in the number of input variables.

The Stabbing Planes proof system was introduced by Beame et al.~\cite{BeameFIKPPR18} as a generalization of Cutting Planes that more closely modelled query algorithms and branch-and-bound solvers.
Beame et al.~proved that $\SP$ is equivalent to the proof system $\mathsf{TreeR}(\CP)$ introduced by Kraj\'{i}\v{c}ek \cite{Krajicek98} which can be thought of as a generalization of Resolution where the literals are replaced with integer-linear inequalities.

\section{Translating Stabbing Planes into Cutting Planes} 
\label{sec:SP}

\subsection{Equivalence of $\CP$ with Subsystems of $\SP$}
\label{sec:facelikesp-sim}

In this section we prove \autoref{thm:main}, restated below, which characterizes Cutting Planes as a non-trivial subsystem of Stabbing Planes.

\main*

We begin by formally defining Facelike $\SP$.
\begin{defn}
	A Stabbing Planes query $(ax \leq b-1, ax \geq b)$ at a node $P$ is \emph{facelike} if one of the sets $P \cap \{x \in \mathbb{R}^n: ax \leq b-1\}$, $P \cap \{x \in \mathbb{R}^n: ax \geq b\}$ is empty or a face of $P$ (see \autoref{fig:fSPqueries}).
	An $\SP$ refutation is facelike if every query in the refutation is facelike. 
\end{defn}

Enroute to proving \autoref{thm:main}, it will be convenient to introduce the following further restriction of Facelike Stabbing Planes.

\begin{defn}
	A Stabbing Planes query $(ax \leq b-1, ax \geq b)$ at a node corresponding to a polytope $P$ is \emph{pathlike} if at least one of $P \cap \{x \in \mathbb{R}^n: ax \leq b-1\}$ and $P \cap \{x \in \mathbb{R}^n: ax \geq b\}$ is empty (see \autoref{fig:pSPqueries}). A Pathlike $\SP$ refutation is one in which every query is pathlike. 
\end{defn}

The name ``pathlike'' stems from the fact that the underlying graph of a pathlike Stabbing Planes proof is a path, since at most one child of every node has any children (see Figure \ref{fig:pathlike-vs-facelike}).
In fact, we have already seen (nontrivial) pathlike $\SP$ queries under another name: Chv\'{a}tal-Gomory cuts.

\begin{figure}[htb]
	\centering
	\begin{subfigure}[b]{1\textwidth}
	\begin{tikzpicture}[scale=0.6]
		\draw[thick, ->] (-2,5)  -- (-0.5,3.5);
		\draw[thick, ->] (-2,5)  -- (-3.5,3.5);
		\filldraw[fill=green!9,thick] (-2,5) circle (8pt);
		\node[text width=2cm] at (1,4.4) { $ax \geq b$};
		\node[text width=2cm] at (-4.5,4.4) {$ax \leq b-1$};
		\filldraw[fill=red!15,thick,  color =red!25] (-3.75,3.2) circle (10pt);
		\node[text width=1cm] at (-3.07,3.22) { $\emptyset$};
		\draw[thick, ->,dashed] (-0.4,3.35)  -- (1.1,1.85);
		\draw[thick, ->] (-0.4,3.35)  -- (-1.9,1.85);
		\filldraw[fill=green!9,thick] (-0.35,3.25) circle (8pt);
		\filldraw[fill=red!15,  color =red!25] (-2.05,1.5) circle (10pt);
		\node[text width=1cm] at (-1.38,1.52) { $\emptyset$};
		
		
		
		
		\node[text width=2cm] at (-4.6,0) {\phantom{a}};	
	\end{tikzpicture} \hspace{1.3em}
	\begin{tikzpicture}[scale=0.6]
	\filldraw[color=black!60, fill=green!9, very thick] (6.4,-0.5)-- (3.8,0.7) -- (3.9,3.2) -- (6.3,4.5) -- (8.8,3.2) -- (8.7,0.5) -- cycle;
	\filldraw[color=white, fill=white!, very thick,opacity=0.7] (3.1,5.3)  -- (3.86,-1.5)  -- (4.6,-1.4) -- (4.55,5.4) -- cycle;
	\draw[color=black!60, very thick] (3.4,5.3)  -- (3.4,-1.5);
	\draw[color=black!60, very thick,->] (4.6,2)  -- (4.9,2);
	\draw[color=black!60, very thick] (4.6,5.4)  -- (4.6,-1.4);
	\draw[color=black!60, very thick,->] (3.4,2)  -- (3.1,2);
	\node[text width=3cm] at (2.5,2) {$ax \leq b-1$};
	\node[text width=3cm] at (7.7,2) {$ax \geq b$};
	\node[text width=1cm] at (6.7,4) {$P$};
	\end{tikzpicture}
	\caption{A Pathlike query. The polytope $P \cap \{x \in \mathbb{R}^n :ax \leq b-1\} = \emptyset$, and $ax \geq b$ is a CG cut for $P$. }\label{fig:pSPqueries}
	\end{subfigure}

	\begin{subfigure}[b]{1\textwidth}
	\begin{tikzpicture}[scale=0.6]
		\draw[thick, ->] (-1,5)  -- (1.5,3.5);
		\draw[thick, ->] (-1,5)  -- (-3.5,3.5);
		\filldraw[fill=white,thick] (-1,5) circle (8pt);
		\node[text width=2cm] at (2.4,4.5) { $ax \geq b$};
		\node[text width=2cm] at (-3.9,4.5) {$ax \leq b-1$};
		\draw[thick, ->, dashed] (1.65,3.35)  -- (2.65,2.35);
		\draw[thick, ->,dashed] (1.65,3.35)  -- (0.65,2.35);
		\filldraw[fill=green!9] (1.65,3.2) circle (8pt);
		\draw[thick, ->, dashed] (-3.65,3.35)  -- (-2.65,2.35);
		\draw[thick, ->,dashed] (-3.65,3.35)  -- (-4.65,2.35);
		\filldraw[fill=orange!40,thick,thick] (-3.65,3.25) circle (8pt);

		\node[text width=2cm] at (-4.6,0) {\phantom{a}};	
	\end{tikzpicture}
	\begin{tikzpicture}[scale=0.6]
	\filldraw[color=black!60, fill=green!9, very thick] (6.4,-0.5)-- (3.8,0.7) -- (3.9,3.2) -- (6.3,4.5) -- (8.8,3.2) -- (8.7,0.5) -- cycle;
	\filldraw[color=white, fill=white!, very thick,opacity=0.7] (3.1,5.3)  -- (3.86,-1.5)  -- (4.7,-1.4) -- (4.9,5.4) -- cycle;
	\draw[color=black!60, very thick] (3.95,5.3)  -- (3.7,-1.5);
	\draw[color=black!60, very thick,->] (4.8,2)  -- (5.15,1.98);
	\draw[color=black!60, very thick] (4.95,5.4)  -- (4.7,-1.4);
	\draw[color=black!60, very thick,->] (3.8,2.01)  -- (3.48,2.029);
	\draw[color=orange!70, very thick] (3.78,0.7) -- (3.88,3.2);
	\node[text width=3cm] at (3,2) {$ax \leq b-1$};
	\node[text width=3cm] at (7.8,2) {$ax \geq b$};
	\node[text width=1cm] at (6.7,4) {$P$};
	\node[text width=3cm] at (3.8,-1.8) {$ax =b-1$};
	\end{tikzpicture}
	\caption{A Facelike query. The polytope $P \cap \{x \in \mathbb{R}^n:ax\leq b-1\} = P \cap \{x \in \mathbb{R}^n:ax=b-1\}$ is a face of $P$.} \label{fig:fSPqueries}
	\end{subfigure}
	\caption{Pathlike and Facelike $\SP$ queries on a polytope $P$. On the left are the proofs and on the right are the corresponding effects on the polytope.}
	\label{fig:pathlike-vs-facelike}
\end{figure}
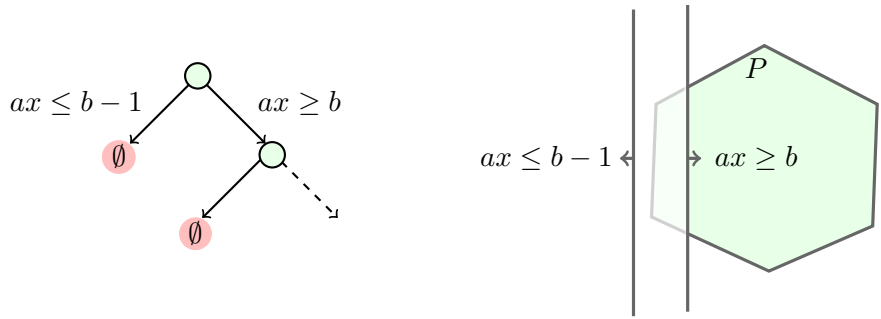
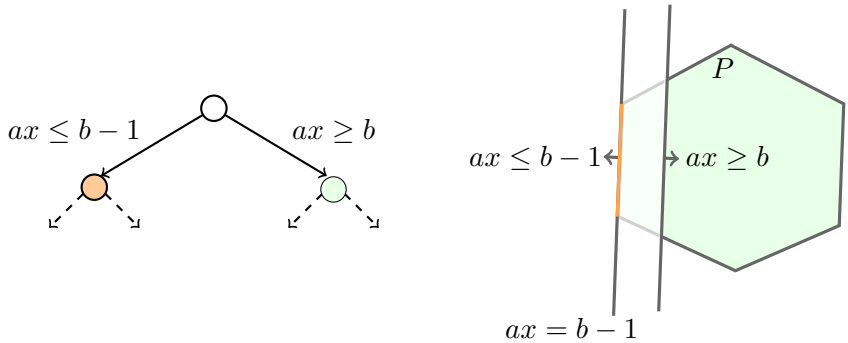 

\begin{lem}\label{lem:cg-cut}
	Let $P$ be a polytope and let $(ax \leq b-1, ax \geq b)$ be a pathlike Stabbing Planes query for $P$.
	Assume w.l.o.g. that $P \cap \set{x \in \reals^n : ax \leq b-1} = \emptyset$ and that $P \cap \set{x \in \reals^n : ax \geq b} \subsetneq P$.
	Then $ax \geq b$ is a CG cut for $P$.
\end{lem}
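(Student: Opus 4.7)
The plan is to reduce the pathlike query directly to the definition of a CG cut by extracting a supporting hyperplane of $P$ and using the two hypotheses to pin down its intercept.

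First, I would set $b^* := \min_{x \in P} a x$. Since $P$ is a nonempty polytope and $a$ is a nonzero integer vector (otherwise neither half of the query would ever be empty), this minimum is attained. By definition $a x = b^*$ is a supporting hyperplane of $P$, so $a x \geq b^*$ is a valid inequality for $P$ in the sense of Definition~\ref{def:cg-cut}.

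Next, I would use the two hypotheses to sandwich $b^*$ strictly between the consecutive integers $b-1$ and $b$. For the lower bound: the pathlike assumption $P \cap \{a x \leq b-1\} = \emptyset$ says every $x \in P$ satisfies $a x > b-1$, so $b^* > b-1$. For the upper bound: the hypothesis $P \cap \{a x \geq b\} \subsetneq P$ gives an $x^* \in P$ with $a x^* < b$, hence $b^* \leq a x^* < b$. Together, $b^* \in (b-1, b)$, and because $b$ is an integer, $\lceil b^* \rceil = b$. Applying Definition~\ref{def:cg-cut} to the valid inequality $a x \geq b^*$ then yields exactly the CG cut $a x \geq b$, as required.

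The one subtlety is the requirement in Definition~\ref{def:cg-cut} that the coefficient vector be primitive. If $d := \gcd(a) > 1$, I would carry out the argument with the rescaled valid inequality $(a/d)\, x \geq b^*/d$; the resulting CG cut $(a/d)\, x \geq \lceil b^*/d \rceil$ implies $ax \geq b$ (and equals it whenever $d \mid b$, which is the standard normalisation used in the Stabbing Planes literature). I would flag this convention explicitly at the top of the proof rather than interrupt the main calculation. Beyond that bookkeeping, no step looks like a serious obstacle: the lemma is essentially a dictionary translation between the slab-missing-$P$ formulation of a pathlike query and the round-up-a-supporting-hyperplane formulation of a CG cut.
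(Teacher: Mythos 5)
Your proposal is correct and follows essentially the same route as the paper: the paper's proof observes that $ax \geq b-\varepsilon$ is valid for $P$ for some $0 < \varepsilon < 1$ (your $b^* = b - \varepsilon$ sandwiched strictly between $b-1$ and $b$) and rounds up, exactly as you do. Your extra remark about normalising $a$ by its gcd is a harmless bookkeeping point that the paper silently elides.
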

\begin{proof}
	Since $ax \geq b$ is falsified by some point in $P$, it follows that there exists some $0 < \varepsilon < 1$ such that $ax \geq b - \varepsilon$ is valid for $P$ --- note that $\varepsilon < 1$ since otherwise $ax \leq b-1$ would not have empty intersection with $P$.
	This immediately implies that $ax \geq b$ is a CG cut for $P$.
\end{proof}

With this observation we can easily prove that Pathlike $\SP$ is equivalent to $\CP$. Throughout the remainder of the section, for readability, we will use the abbreviation $P \cap \{ax \geq b\}$ for $P \cap \{x \in \mathbb{R}^n : ax \geq b\}$, for any polytope $P$ and linear inequality $ax \geq b$.

\begin{lem}\label{lem:pathsp-cp}
	 Pathlike $\SP$ is polynomially equivalent to $\CP$. 	
\end{lem}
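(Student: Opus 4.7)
The plan is to use Lemma~\ref{lem:cg-cut} as a dictionary between nontrivial pathlike Stabbing Planes queries and Chv\'atal--Gomory cuts, so that both proof systems reduce to constructing a decreasing chain of polytopes $P_0 \supseteq P_1 \supseteq \cdots \supseteq P_s$ where each $P_{i+1}$ is obtained from $P_i$ by intersecting with a CG cut and the chain terminates (effectively) in the empty polytope.

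For the direction that pathlike $\SP$ simulates $\CP$, I would first put the $\CP$ refutation into CG-cut form: this is a standard polynomial conversion yielding a sequence of cuts $H_1, \ldots, H_s$ with $H_i = \{a_i x \geq \lceil b_i \rceil\}$, such that $a_i x \geq b_i$ is valid on $P_i := P \cap \bigcap_{j<i} H_j$ and $P_s = \emptyset$. For each $i$ I would make the pathlike $\SP$ query $(a_i x \leq \lceil b_i \rceil - 1,\; a_i x \geq \lceil b_i \rceil)$ at the node carrying $P_i$: the left child has empty polytope because $a_i x \geq b_i$ is valid on $P_i$ and $\lceil b_i \rceil - 1 < b_i$, and the leaf certificate there is obtained from a Farkas representation of $a_i x \geq b_i$ combined with $a_i x \leq \lceil b_i \rceil - 1$. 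Chaining these queries along the right branch and closing the final leaf (where $P_s = \emptyset$) produces a pathlike $\SP$ refutation of polynomial size.

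For the converse direction, I would observe that since each pathlike query has at least one empty child, the non-leaf nodes of the $\SP$ proof (after truncating any continuation below an empty child) form a path $v_0, v_1, \ldots, v_s$ with associated polytopes $P_0 = P \supseteq P_1 \supseteq \cdots \supseteq P_s$. For each $v_i$, Lemma~\ref{lem:cg-cut} identifies the label on the non-empty outgoing edge as a CG cut $H_i$ for $P_i$, so $P_{i+1} = P_i \cap H_i$; the final query, which may slab both children to empty, corresponds to a single CG cut collapsing $P_s$ to the empty polytope. Since each CG cut is derivable from the defining inequalities of its underlying polytope in $O(n)$ steps of $\CP$ (obtain a nonnegative rational Farkas combination, scale to integer coefficients, and apply the division rule), concatenating these derivations and closing with one final Farkas combination yielding $0 \geq 1$ from the empty polytope produces a $\CP$ refutation of polynomial size.

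The only genuine obstacle is bookkeeping: verifying that the rule-based and CG-cut presentations of $\CP$ are polynomially equivalent (standard, going back to \cite{CookCT87, Chvatal73a}), and carefully handling the terminal pathlike query in which both children have empty polytope as a single CG cut rather than two. Neither step contributes more than a polynomial factor, so the simulations are tight in both directions.
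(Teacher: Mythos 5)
Your proposal is correct and matches the paper's argument: the converse direction is exactly the paper's use of \autoref{lem:cg-cut} to read each nontrivial pathlike query as a CG cut derivable in $O(n)$ steps, and the forward direction is the paper's observation that each $\CP$ line is valid (up to rounding) for the polytope of all preceding lines, so its integer negation cuts off nothing. Your detour through the CG-cut presentation of $\CP$ and your explicit handling of the terminal both-children-empty query are just slightly more careful bookkeeping of the same simulation.
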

\begin{proof}
	First, let $a_1x \geq b_1, a_2x \geq b_2, \ldots, a_sx \geq b_s$ be a $\CP$ refutation of an unsatisfiable system of linear inequalities $Ax \geq b$. 
	Consider the sequence of polytopes $P_0 =\{Ax \geq b\}$ and $P_i = P_{i-1} \cap \{a_i x \geq b_i\}$. 
	By inspecting the rules of $\CP$, it can observed that $P_i \cap \{a_ix \leq b_i-1\}= \emptyset$ and thus $P_{i+1}$ can be deduced using one pathlike $\SP$ query from $P_i$ for all $0 \leq i \leq s$.
	
	Conversely, let $P$ be any polytope and let $(ax \leq b-1, ax \geq b)$ be any pathlike $\SP$ query to $P$ (so, suppose w.l.o.g.~that the halfspace defined by $ax \leq b-1$ has empty intersection with $P$).
	By \autoref{lem:cg-cut}, $ax \geq b$ is a CG cut for $P$, and so can be deduced in Cutting Planes from the inequalities defining $P$ in length $O(n)$ (cf.~Section \ref{sec:prelims}).
	Applying this to each query in the Pathlike $\SP$ proof yields the theorem.

\end{proof}
	Next, we show how to simulate Facelike $\SP$ proofs by Pathlike $\SP$ proofs of comparable size.
	The proof of \autoref{lem:SPtopSP} is inspired by Dadush and Tiwari \cite{TseitinUpperBound}, and will use the following lemma due to Schrijver~\cite{SCHRIJVER1980291} (although, we use the form appearing in \cite{CookCT87}).
	Recall that we write $P \vdash P'$ for polytopes $P, P'$ to mean that $P'$ can be obtained from $P$ by adding a single CG cut to $P$.
	\begin{lem}[Lemma 2 in \cite{CookCT87}]
	\label{lem:shrijver}
		Let $P$ be a polytope defined by a system of integer linear inequalities and let $F$ be a face of $P$. 
		If $F \vdash F'$ then there is a polytope $P'$ such that $P \vdash P'$ and $P' \cap F \subseteq F'$.
	\end{lem}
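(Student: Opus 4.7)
The plan is to take the given CG cut of $F$ and ``lift'' it to a CG cut of $P$ whose restriction to $F$ is contained in the original cut. Fix $P = \{x : Ax \geq b_0\}$ with integer $A$, and write the face as $F = P \cap \{cx = d\}$ with $c$ an integer vector and $cx \leq d$ valid on $P$ (always possible since $P$ is a rational polytope; note $d$ is rational but possibly not integer). Write the assumed cut on $F$ as $F' = F \cap \{ax \geq \lceil b \rceil\}$, where $ax \geq b$ is valid on $F$ and $a$ has relatively prime integer entries.

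The key tool is a \emph{tilting} step. I would first apply Farkas' lemma to the system $\{Ax \geq b_0,\, cx \geq d,\, cx \leq d\}$ which defines $F$: since $ax \geq b$ is valid on $F$, there exist multipliers $\lambda \geq 0$ and a real scalar $\gamma$ with $a = A^\top \lambda + \gamma c$ and $b \leq b_0^\top \lambda + \gamma d$. This immediately yields that $(a - \gamma c)x \geq b - \gamma d$ is valid on $P$. For any \emph{integer} $\gamma' \geq \gamma$, the tilted inequality $(a - \gamma' c)x \geq b - \gamma' d$ also remains valid on $P$, because the difference of the two sides equals $(\gamma' - \gamma)(d - cx) \geq 0$ on $P$ (using $cx \leq d$ on $P$).

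Now I would choose $\gamma'$ to be a sufficiently large integer multiple of the denominator of $d$, so that $\gamma' \geq \gamma$ and $\gamma' d \in \integers$. Dividing $(a - \gamma' c)x \geq b - \gamma' d$ through by $g := \gcd(a - \gamma' c)$ and rounding up yields the CG cut $\tfrac{1}{g}(a - \gamma' c)\,x \geq \lceil (b - \gamma' d)/g \rceil$ for $P$; let $P'$ be $P$ intersected with this halfspace, so $P \vdash P'$. Restricting to $F$, where $cx = d$, the new inequality becomes $ax \geq \gamma' d + g\lceil (b - \gamma' d)/g \rceil$, and since $\gamma' d$ is an integer and $g \lceil y/g \rceil \geq \lceil y \rceil$ for every real $y$, this collapses to $ax \geq \lceil b \rceil$. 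Hence $P' \cap F \subseteq F'$, as required.

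The main obstacle is securing integrality at two places simultaneously. Farkas only produces a \emph{real} multiplier $\gamma$, whereas a legitimate CG cut demands an integer coefficient vector, and for the restriction to $F$ to still recover the rounding $\lceil b \rceil$ (rather than losing a fractional fragment in the rounding on $P$), the product $\gamma' d$ must itself be an integer. Choosing $\gamma'$ to be a large integer multiple of the denominator of $d$ resolves both requirements at once; no further delicacy about the structure of $P$ or $F$ seems needed.
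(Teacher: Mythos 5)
The paper does not prove this lemma itself---it imports it verbatim as Lemma~2 of \cite{CookCT87} (originally due to Schrijver)---so the comparison is with the standard cited argument, and your proof is essentially that argument: express the valid inequality for the face via LP duality, round the multiplier $\gamma$ on the face-defining equality up to an integer $\gamma'$ (using $cx\le d$ on $P$ to preserve validity), and check that the resulting CG cut for $P$ restricted to $cx=d$ implies $ax\ge\lceil b\rceil$ via $\gamma'd\in\integers$ and $g\lceil y/g\rceil\ge\lceil y\rceil$. All steps check out; the only loose ends are trivial degeneracies worth a sentence each---$F=\emptyset$ or $F=P$ (handled directly), and the possibility $a-\gamma'c=0$, which is avoided by taking $\gamma'$ larger.
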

	%
	
	\begin{lem}
	\label{lem:SPtopSP}
		Facelike $\SP$ is polynomially equivalent to Pathlike $\SP$.	
	\end{lem}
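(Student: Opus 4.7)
The plan is to prove the nontrivial direction, that any Facelike $\SP$ refutation can be converted into a Pathlike $\SP$ refutation of comparable size. The other direction is immediate since every pathlike query is facelike (the empty case qualifies as ``empty or a face''). I would proceed by induction on the size of the Facelike $\SP$ refutation $T$ of a polytope $P$, and analyze the root query $(ax \leq b-1, ax \geq b)$. If one of $P \cap \{ax \leq b-1\}$ or $P \cap \{ax \geq b\}$ is already empty then the query is pathlike, and I simply recursively convert the nontrivial subtree. Hence the interesting case is when, WLOG, $P \cap \{ax \geq b\} = P \cap \{ax = b\} =: F$ is a nonempty face of $P$.

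In this case the right subtree $T_r$ refutes $F$, and by the inductive hypothesis yields a Pathlike $\SP$ refutation of $F$. By \autoref{lem:cg-cut} (and the proof of \autoref{lem:pathsp-cp}), this pathlike refutation corresponds to a sequence of CG cuts $F = F_0 \vdash F_1 \vdash \cdots \vdash F_k = \emptyset$. The heart of the argument is to lift this chain to a sequence of CG cuts on the ambient polytope $P$ via iterated applications of Schrijver's lemma (\autoref{lem:shrijver}): starting from $P_0 = P$ with $F$ a face of $P_0$, at each step I derive $P_i \vdash P_{i+1}$ such that $P_{i+1} \cap F \subseteq F_{i+1}$. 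The key consistency observation making this iteration go through is that $P_{i+1} \cap F$ remains a face of $P_{i+1}$ at every step: the supporting hyperplane $ax = b$ witnessing $F$ as a face of $P$ is still valid for every $P_{i+1} \subseteq P$ since CG cuts only shrink polytopes. After $k$ iterations, $P_k \cap F \subseteq F_k = \emptyset$, hence $P_k \cap \{ax \geq b\} = \emptyset$.

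Each step $P_i \vdash P_{i+1}$ is a CG cut, and hence a pathlike query on $P_i$ by \autoref{lem:cg-cut}. This yields a chain of $k$ pathlike queries from $P$ to $P_k$. I then append the original root query $(ax \leq b-1, ax \geq b)$ at $P_k$: its ``$\geq$'' side is empty by construction, while its ``$\leq$'' side $P_k \cap \{ax \leq b-1\} \subseteq P \cap \{ax \leq b-1\}$ is refuted by inductively converting $T_\ell$. Here I rely on the simple observation that the pathlike property is preserved under restriction to subsets (if one side of a query is empty at $P$, then it remains empty at any $P' \subseteq P$), so the Pathlike $\SP$ refutation of $P \cap \{ax \leq b-1\}$ works verbatim for the smaller $P_k \cap \{ax \leq b-1\}$. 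Letting $s(T)$ denote the size of the resulting Pathlike refutation, the recursion gives $s(T) \leq s(T_r) + s(T_\ell) + O(1)$, hence $s(T) = O(|T|)$.

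The main obstacle is justifying that Schrijver's lemma can be iterated cleanly enough to drive $P \cap F$ all the way down to $\emptyset$. This reduces to the stability-of-faces observation above, which dovetails precisely with the inductive hypothesis applied to the face-side subtree $T_r$: the Pathlike refutation handed back by induction is exactly a sequence of CG cuts, and Schrijver converts this cut-by-cut into the desired chain of pathlike queries on the ambient polytope without any size blowup.
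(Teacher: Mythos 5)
Your proof is correct and follows essentially the same route as the paper's: the easy direction is identical, and the hard direction likewise recursively converts the face-side subtree into a chain of CG cuts, lifts that chain to the ambient polytope by iterating Schrijver's lemma (\autoref{lem:shrijver}) until the face side is emptied, appends the now-pathlike root query, and reuses the converted other subtree on the shrunken polytope. The only differences are cosmetic (which side you take to be the face, and phrasing the traversal as induction on size rather than an in-order recursion), and your explicit remarks on face stability and on pathlike refutations surviving restriction to subsets are points the paper uses implicitly.
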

	\begin{proof}
		That Facelike $\SP$ simulates Pathlike $\SP$ follows by the fact that any Pathlike $\SP$ query is a valid query in Facelike $\SP$.  
		For the other direction, consider an $\SP$ refutation $\pi$ of size $t$. 
		We describe a recursive algorithm for generating a Pathlike $\SP$ proof from $\pi$.
		The next claim will enable our recursive case.
	
	\spcnoindent
	{\bf Claim.} Let $P$ be a polytope and suppose $ax \geq b$ is valid for $P$. 
	Assume that $P \cap \{ax = b\}$ has a Pathlike $\SP$ refutation using $s$ queries. 
	Then $P \cap \{ax \geq b+1\}$ can be derived from $P$ in Pathlike $\SP$ using $s+1$ queries.
	\begin{proof}[Proof of Claim.]
		Since $ax \geq b$ is valid for $P$ it follows that $F = P \cap \{ax = b\}$ is a face of $P$ by definition.
		Consider the Pathlike $\SP$ refutation  $F_0, F_1, \ldots F_s = \emptyset$, where the $i$th polytope $F_i$ for $i < s$ is obtained from $F_{i-1}$ by applying a pathlike $\SP$ query and proceeding to the non-empty child.
		Without loss of generality we may assume that $F_i \subsetneq F_{i-1}$ for all $i$, and so applying \autoref{lem:cg-cut} we have that $F_{i-1} \vdash F_i$ for all $i$.
		Thus, by applying \autoref{lem:shrijver} repeatedly, we get a sequence of polytopes $P = P_0 \vdash P_1 \vdash \cdots \vdash P_s$ such that $P_i \cap F = P_i \cap \{ax = b\} \subseteq F_i$. 
		This means that $P_s \cap \{ax = b\} \subseteq F_s = \emptyset$, and so $(ax \leq b, ax \geq b+1)$ is Pathlike $\SP$ query for $P_s$. 
		This means that $P_s \vdash P_s \cap \{ax \geq b+1\} \subseteq P \cap \{ax \geq b+1\}$.
		Since any CG cut can be implemented as a Pathlike $\SP$ query the claim follows by applying the $s$ CG cuts as pathlike queries, followed by the query $(ax \leq b, ax \geq b+1)$.
	\end{proof}
	
	We generate a Pathlike $\SP$ refutation by the following recursive algorithm, which performs an \emph{in-order} traversal of $\pi$. 
	At each step of the recursion (corresponding to a node in $\pi$) we maintain the current polytope $P$ we are visiting and a Pathlike $\SP$ proof $\Pi$ --- initially, $P$ is the initial polytope and $\Pi = \emptyset$. 
	We maintain the invariant that when we finish the recursive step at node $P$, the Pathlike $\SP$ refutation $\Pi$ is a refutation of $P$. 
	The algorithm is described next:
	\begin{enumerate}
		\item Let $(ax \leq b-1,~ ax \geq b)$ be the current query and suppose that $ax \geq b-1$ is valid for $P$.
		\item Recursively refute $P \cap \{ ax \leq b-1\} = P \cap \set{ax = b-1}$, obtaining a Pathlike $\SP$ refutation $\Pi$ with $t$ queries. 
		\item Apply the above Claim to deduce $P \cap \{ax \geq b\}$ from $P$ in $t+1$ queries. 
		\item Refute $P \cap \{ ax \geq b\}$ by using the $\SP$ refutation for the right child. 
	\end{enumerate}
	Correctness follows immediately from the Claim, and also since the size of the resulting proof is the same as the size of the $\SP$ refutation. 		
	\end{proof}
	
	\autoref{thm:main} then follows by combining \autoref{lem:pathsp-cp} with \autoref{lem:SPtopSP}.

\subsection{Simulating $\SP^*$ by $\CP$}
\label{sec:quasipoly-simulation}
In this section we prove \autoref{thm:quasipoly-simulation}, restated below for convenience.

\quasi*

To prove this theorem, we will show that \emph{any} low coefficient $\SP$ proof can be converted into a Facelike $\SP$ proof with only a quasi-polynomial loss. 
If $P$ is a polytope let $d(P)$ denote the \emph{diameter} of $P$, which is the maximum Euclidean distance between any two points in $P$.
\autoref{thm:quasipoly-simulation} follows immediately from the following theorem.
\begin{thm}\label{thm:low-coefficients}
	Let $P$ be a polytope and suppose there is an $\SP$ refutation of $P$ with size $s$ and maximum coefficient size $c$. 
	Then there is a Facelike $\SP$ refutation of $P$ in size 
	\[ s (c \cdot d(P) \sqrt{n})^{\log s}. \]	
\end{thm}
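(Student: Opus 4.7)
The plan is to prove \autoref{thm:low-coefficients} by strong induction on the size $s$ of the $\SP$ refutation $\pi$ of $P$. The core idea is that any single $\SP$ query $(ax \leq b-1, ax \geq b)$ on a polytope $P' \subseteq P$ can be simulated by a ``spine'' of Facelike queries that strip off the integer-level hyperplane faces $P' \cap \{ax = i\}$ one at a time. Since $\|a\|_\infty \leq c$ implies $\|a\|_2 \leq c\sqrt{n}$, the projection of $P'$ onto the direction $a$ has length at most $c \cdot d(P)\sqrt{n}$, which bounds the number of integer levels and hence the spine length by $O(c \cdot d(P)\sqrt{n})$.

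For the inductive step, let the root query of $\pi$ be $(ax \leq b-1, ax \geq b)$ with subtrees $\pi_L, \pi_R$ of sizes $s_L, s_R$ satisfying $s_L + s_R + 1 = s$. I assume without loss of generality that $s_L \leq s_R$ and peel faces from the $s_L$-side. Letting $m = \min_{x \in P} ax$, I first apply a single pathlike query $(ax \leq \lceil m \rceil - 1, ax \geq \lceil m \rceil)$ on $P$ (pathlike, since $ax > \lceil m \rceil - 1$ on $P$ makes the left side empty). Then for each $i = \lceil m \rceil, \lceil m \rceil + 1, \ldots, b-1$, I apply the query $(ax \leq i, ax \geq i+1)$ on the current polytope $P \cap \{ax \geq i\}$; this query is facelike because $ax \geq i$ is valid on this polytope, so the left side is exactly the face $P \cap \{ax = i\}$. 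Each such face lies in $P \cap \{ax \leq b-1\}$ and is thus refuted by $\pi_L$, so I recursively Facelike-simulate $\pi_L$ on it (the face still has diameter $\leq d(P)$, and the coefficients of $\pi_L$ are still bounded by $c$). After $b - \lceil m \rceil \leq c \cdot d(P)\sqrt{n} + O(1)$ spine steps I arrive at $P \cap \{ax \geq b\}$, on which I recursively Facelike-simulate $\pi_R$. This yields the recurrence
\[
F(s) \;\leq\; O\!\left(c \cdot d(P)\sqrt{n}\right) \cdot F(s_L) \;+\; F(s_R) \;+\; O\!\left(c \cdot d(P)\sqrt{n}\right),
\]
which solves to $F(s) \leq s \cdot (c \cdot d(P)\sqrt{n})^{\log s}$ using the balance condition $s_L \leq (s-1)/2$, by the telescoping estimate $s_L(c \cdot d(P)\sqrt{n})^{\log s_L + 1} + s_R(c \cdot d(P)\sqrt{n})^{\log s_R} \leq s(c \cdot d(P)\sqrt{n})^{\log s}$.

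The main obstacle lies in balancing the recursion: the $c \cdot d(P) \sqrt{n}$ blow-up factor multiplies only the peeled recursive call $F(s_L)$, so it is critical to peel the \emph{smaller} subtree side at every level to guarantee $s_L \leq s/2$ and a logarithmic unrolling depth. Peeling the wrong side could attach the blow-up factor to $F(s-1)$ and produce an $(c \cdot d(P)\sqrt{n})^{\Omega(s)}$ bound, destroying the quasipolynomial simulation. A secondary technical wrinkle is that $m = \min_{x \in P} ax$ need not be an integer, which I handle cleanly by the initial pathlike rounding query at the cost of one extra spine step; the symmetric case $s_R < s_L$ peels from the right using $M = \max_{x \in P} ax$ by the mirror-image argument.
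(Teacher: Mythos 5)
Your proposal is correct and follows essentially the same route as the paper: replace the root query by a spine of facelike queries sweeping the integer levels of $ax$ from its minimum over $P$ up to $b$, refute each resulting face $P\cap\{ax=i\}\subseteq P_L$ with the recursively converted smaller subproof, refute the terminal polytope $P_R$ with the other, and bound the spine length by $d(P)\|a\|_2\le c\,d(P)\sqrt{n}$, yielding the same recurrence $f(s)\le t\cdot f(s_L)+f(s_R)$ balanced by peeling the smaller side. The only cosmetic difference is your initial pathlike rounding query for non-integral $\min_{x\in P}ax$, which the paper absorbs by defining $b_0$ as the largest integer with $ax\ge b_0$ valid for $P$.
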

\begin{proof}
	The theorem is by induction on $s$. Clearly, if $s=1$ then the tree is a single leaf and the theorem is vacuously true.
	
	We proceed to the induction step. Let $P$ be the initial polytope and $\pi$ be the $\SP$ proof. Consider the first query $(ax \leq b, ~ax \geq b+1)$ made by the proof, and let $\pi_L$ be the $\SP$ proof rooted at the left child (corresponding to $ax \leq b$) and let $\pi_R$ be the $\SP$ proof rooted at the right child. Let $P_L$ denote the polytope at the left child and $P_R$ denote the polytope at the right child. By induction, let $\pi'_L$ and $\pi'_R$ be the Facelike $\SP$ refutations for $P_L$ and $P_R$ guaranteed by the statement of the theorem.
	
	Suppose w.l.o.g. that $|\pi_L| \leq |\pi|/2$. Let $b_0$ be the largest integer such that $ax \geq b_0$ is satisfied for any point in $P$. The plan is to replace the first query $(ax \leq b, ~ax \geq b+1)$ with a sequence of queries $q_0, q_1, \ldots, q_{t-1}$ such that
	\begin{itemize}
		\item For each $i$, $q_i =(ax \leq b_0+i,~ax \geq b_0+i+1)$.
		\item The query $q_0$ is the root of the tree and $q_i$ is attached to the right child of $q_{i-1}$ for $i \geq 1$.
		\item $q_{t-1} = (ax \leq b,~ax \geq b+1)$.
	\end{itemize}
	After doing this replacement, instead of having two child polytopes $P_L, P_R$ below the top query, we have $t+1$ polytopes $P_0,P_1,\ldots,P_{t+1}$ where $P_i = P \cap \{ax =b_0 +i\}$ and $P_{t+1} = P_R$. To finish the construction, for each $i \leq t$ use the proof $\pi'_L$ to refute $P_i$ and the proof $\pi'_R$ to refute $P_{t+1}$. 
	
	We need to prove three statements: this new proof is a valid refutation of $P$, the new proof is facelike, and that the size bound is satisfied. 
	
	First, it is easy to see that this is a valid proof, since for each $i \leq t$ the polytope $P_i \subseteq P_L$ and $P_{t+1} \subseteq P_R$ --- thus, the refutations $\pi_L'$ and $\pi'_R$ can be used to refute the respective polytopes. 
	
	Second, to see that the proof is facelike, first observe that all the queries in the subtrees $\pi'_L, \pi'_R$ are facelike queries by the inductive hypothesis. 
	So, we only need to verify that the new queries at the top of the proof are facelike queries, which can easily be shown by a quick induction. First, observe that the query $q_0$ is a facelike query, since $b_0$ was chosen so that $ax \geq b_0$ is valid for the polytope $P$. 
	By induction, the query $q_i = (ax \leq b_0 + i, ~ax \geq b_0 + i+1)$ is a facelike query since the polytope $P_i$ associated with that query is $P \cap \{ax \geq b_0 + i\}$ by definition. Thus $ax \geq b_0 + i$ is valid for the polytope at the query.
	
	Finally, we need to prove the size upper bound. Let $s$ be the size of the original proof, $s_L$ be the size of $\pi_L$ and $s_R$ be the size of $\pi_R$. Observe that the size of the new proof is given by the recurrence relation
	\[ f(s) = t \cdot f(s_L)+f(s_R). \] 
	where $f(1) = 1$. Since the queries $q_0,q_1,\ldots, q_{t-1}$ cover the polytope $P_L$ with slabs of width $1/\|a\|_2$, it follows that
	\[ t \leq d(P_L) \|a\|_2 \leq d(P) \sqrt{n} \|a\|_\infty =d(P)c \sqrt{n} \]
	where we have used that the maximum coefficient size in the proof is $c$. Thus, by induction, the previous inequality, and the assumption that $s_L \leq s/2$, we can conclude that the size of the proof is
	\begin{align*}
		f(s) &\leq s_L(c \cdot d(P) \sqrt{n})(c \cdot d(P_L) \sqrt{n})^{\log s_L} + s_R(c \cdot d(P_R) \sqrt{n} +)^{\log s_R} \\
		&\leq s_L(c \cdot d(P) \sqrt{n})(c \cdot d(P) \sqrt{n})^{\log (s/2)} + s_R(c \cdot d(P) \sqrt{n} )^{\log s}\\
		&\leq s_L(c \cdot d(P) \sqrt{n})^{\log s} + s_R(c \cdot d(P) \sqrt{n})^{\log s} \\
		&=s(c \cdot d(P) \sqrt{n})^{\log s}. \qedhere
	\end{align*}
\end{proof}

\autoref{thm:quasipoly-simulation} follows immediately, since for any CNF formula $F$ the encoding of $F$ as a system of linear inequalities is contained in the $n$-dimensional cube $[0,1]^n$, which has diameter $\sqrt{n}$. 
We may also immediately conclude \autoref{thm:spstar-lb} by applying the known lower bounds on the size of Cutting Planes proofs \cite{Pudlak97,FlemingPPR17,HrubesP17,GargGKS18}. 

As a consequence of \autoref{thm:quasipoly-simulation} and the non-automatability of Cutting Planes \cite{GoosKMP20}, we can conclude that $\SP^*$ proofs cannot be found efficiently assuming $\P \neq \NP$.
\begin{cor}
\label{cor:non-automatability} 
	$\SP^*$ is not automatable unless $\P \neq \NP$. 
\end{cor}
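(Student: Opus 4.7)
The plan is to establish the corollary by reducing $\CP$-automatization to $\SP^*$-automatization, and then invoking the non-automatability of Cutting Planes due to Göös, Koroth, Mertz, and Pitassi \cite{GoosKMP20}. The chain of simulations developed in this section, $\CP \hookrightarrow \SP^* \hookrightarrow \CP$ (with only quasi-polynomial blow-up in the second step via Theorem \ref{thm:quasipoly-simulation}), is precisely what is needed.

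First I would suppose for contradiction that there is a polynomial-time algorithm $A$ which, given an unsatisfiable CNF formula $F$, outputs an $\SP^*$ refutation of $F$ whose size is polynomial in the size of the shortest $\SP^*$ refutation of $F$. The goal is to use $A$ to build a (quasi-polynomial-time) automatizer for $\CP$, which will then contradict \cite{GoosKMP20}.

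Next I would observe that if $F$ is an unsatisfiable CNF with a $\CP$ refutation of size $s$ and coefficients of magnitude at most $n^{\mathrm{polylog}(n)}$, then Theorem \ref{thm:main} yields a Facelike $\SP$ refutation of size $\mathrm{poly}(s)$ whose queries have coefficients of the same magnitude; in particular this is a valid $\SP^*$ refutation. Hence the shortest $\SP^*$ refutation of $F$ has size at most $\mathrm{poly}(s)$, and running $A$ on $F$ produces an $\SP^*$ refutation in time $\mathrm{poly}(s,n)$. Applying Theorem \ref{thm:quasipoly-simulation} then converts this $\SP^*$ refutation into a $\CP$ refutation of $F$, with a quasi-polynomial blow-up in both size and running time. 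The resulting procedure is a quasi-polynomial-time automatizer for $\CP$ on the class of formulas admitting short $\CP$ refutations with quasi-polynomially bounded coefficients. Invoking the non-automatability of $\CP$ from \cite{GoosKMP20}, which is robust against quasi-polynomial overhead, we conclude $\P = \NP$.

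The main obstacle will be lining up the coefficient-size hypothesis: we need the hard instances underlying \cite{GoosKMP20} to possess shortest $\CP$ refutations with only quasi-polynomially bounded coefficients, so that the first step of the reduction actually produces a short $\SP^*$ refutation. Inspecting that hardness construction, which is built from feasibility testing of certain systems of inequalities reduced from monotone circuit lower bounds via interpolation, one expects the natural shortest refutations to use polynomially bounded coefficients; a careful check of the construction in \cite{GoosKMP20} should suffice to close this gap and complete the argument.
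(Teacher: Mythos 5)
Your proposal is correct and matches the paper's (one-sentence) argument: combine the coefficient-preserving simulation of $\CP$ by Facelike $\SP$ with \autoref{thm:quasipoly-simulation} and the exponential hardness in \cite{GoosKMP20}, which is robust to quasi-polynomial overhead. The "obstacle" you flag at the end --- that the hard instances of \cite{GoosKMP20} must admit short $\CP$ refutations with small coefficients --- is exactly the observation the paper makes to close the proof, so no gap remains.
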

This follows by observing that the argument in \cite{GoosKMP20} does not require large coefficients.

\section{Refutations of Linear Equations over a Finite Field}\label{sec:qpSPref}

In this section we prove \autoref{thm:linear-equations}. 
To do so, we will extend the approach used by Beame et al.~\cite{BeameFIKPPR18} to prove quasi-polynomial upper bounds on the Tseitin formulas to work on any unsatisfiable set of linear equations over any finite field. 

If $ax = b$ is a linear equation we say the \emph{width} of the equation is the number of non-zero variables occurring in it.
Any width-$d$ linear equation over a finite field of size $q$, denoted $\mathbb{F}_q$, can be represented by a CNF formula with $q^{d-1}$ width-$d$ clauses --- one ruling out each falsifying assignment.
For a width-$d$ system of $m$ linear equations $F$ over $\mathbb{F}_q$, we will denote by $|F| := mq^{d-1}$ the size of the CNF formula encoding $F$. 

\begin{thm}\label{thm:SP_gaussian_elim}
	Let $F = \{f_1 = b_1, \ldots, f_m = b_m \}$ be a width-$d$, unsatisfiable set of linear equations over $\mathbb{F}_q$.
	There is an $\SP$ refutation of (the CNF encoding of) $F$ in size $(mqd)^{O(\log m)}q^d = |F|^{O(\log m)}$.
\end{thm}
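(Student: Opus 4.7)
The plan is to extend the quasi-polynomial Stabbing Planes upper bound of Beame et al. for Tseitin formulas to arbitrary unsatisfiable systems of $\mathbb{F}_q$-linear equations. I would construct the refutation recursively, with recursion depth $O(\log m)$; at each level one contributes an $\SP$ subtree of size $|F|^{O(1)}$ that reduces the current unsatisfiable system $F'$ of $m' \leq m$ equations to unsatisfiable subsystems each of size at most $\lceil m'/2\rceil$. The base case $m' = 1$ corresponds to a trivially false equation $0 = b$ with $b \neq 0$, whose CNF encoding has at most $q^d$ clauses and is handled in $\SP$ of size $O(q^d)$ by branching on the values of its $\leq d$ variables.

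The inductive step has the following form. Partition the equations of $F'$ into two halves $F_1, F_2$ of size at most $\lceil m'/2\rceil$, and let $V_s = \mathrm{vars}(F_1) \cap \mathrm{vars}(F_2)$ denote the shared variables. Build an $\SP$ subtree that sequentially branches on each variable in $V_s$ using integer queries $x \leq k-1$ vs.\ $x \geq k$ for $k \in \{1,\dots,q-1\}$, producing $q^{|V_s|}$ leaves. At the leaf corresponding to an assignment $\alpha \in \mathbb{F}_q^{|V_s|}$, the restricted systems $F_1|_\alpha$ and $F_2|_\alpha$ are variable-disjoint, so the unsatisfiability of $F'$ forces at least one of them to be unsatisfiable (otherwise one could combine solutions to the two independent parts into a solution of $F'|_\alpha$). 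Recurse on the unsatisfiable side. Unrolling the recurrence
\[
  T(m') \leq O(q^{|V_s|}) \cdot T(\lceil m'/2 \rceil)
\]
yields the target size bound $(mqd)^{O(\log m)}\, q^d$, provided that at every recursion step $q^{|V_s|} \leq (mqd)^{O(1)}$, i.e.\ $|V_s| = O(d + \log_q m)$.

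The main obstacle is guaranteeing such a small separator at every recursion step. In the Tseitin case each variable appears in exactly two equations, so the equation-variable incidence graph has well-behaved balanced separators; for a general $\mathbb{F}_q$-linear system, a variable can be shared across arbitrarily many equations, and a naive partition can leave $|V_s|$ as large as the total number of variables. To address this, I would apply Gaussian elimination within each half: for any $x \in V_s$ appearing in several equations of, say, $F_2$, use one such equation to eliminate $x$ from the remaining equations in $F_2$, which can remove $x$ from the shared set entirely. The delicate point is that such row operations can inflate the width of equations beyond $d$, and a wider equation can reintroduce new shared variables. The main structural claim to establish is therefore: there is always a choice of partition $F_1 \cup F_2 = F'$ together with a sequence of row operations that simultaneously balances the split, preserves width $O(d)$, and achieves $|V_s| = O(d + \log_q m)$. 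This is where the argument genuinely departs from the graph-theoretic separator reasoning used for Tseitin and is the main technical content of the theorem.
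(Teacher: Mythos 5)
Your proposal diverges from the paper's argument at the crucial point, and the step you flag as ``the main technical content'' is not just unproven but false in general. Your plan needs, at every level of the recursion, a balanced partition $F' = F_1 \cup F_2$ whose shared variable set satisfies $|V_s| = O(d + \log_q m)$ (possibly after width-preserving row operations), so that branching on all $q^{|V_s|}$ assignments to the separator is affordable. But unsatisfiable width-$d$ systems whose equation--variable incidence graphs are expanders --- for instance random $k$-XOR with $m = O(n)$ clauses, or Tseitin on expanders, which are exactly the instances this theorem is meant to cover --- have the property that \emph{every} balanced partition of the equations shares $\Omega(n)$ variables. Gaussian elimination cannot repair this: eliminating a shared variable requires adding one equation to others, which increases width and introduces new shared variables; iterating this on an expanding system blows the width up to $\Omega(n)$ long before the separator shrinks to $O(d+\log_q m)$. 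In effect your structural claim asserts that every width-$d$ system has branch-width $O(d+\log_q m)$, which is not true.

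The paper's proof avoids separators entirely. Since $F$ is unsatisfiable there is a fixed vector $\alpha \in \mathbb{F}_q^m$ with $\sum_i \alpha_i f_i \equiv 0$ but $\sum_i \alpha_i b_i \not\equiv 0 \pmod q$. The refutation binary-searches over \emph{sets of equation indices} $I$, maintaining the invariant that the integer value $k_I$ of $\sum_{i \in I}\alpha_i f_i$ satisfies $k_I - \sum_{i\in I}\alpha_i b_i \not\equiv 0 \pmod q$. At each level it splits $I$ into balanced halves $I_1, I_2$ and queries the integer value of the single aggregate linear form $\sum_{i \in I_j}\alpha_i f_i$ for each half; this form takes at most $O(q^2 m d)$ integer values no matter how the variables are distributed among the equations, so each level costs $(qmd)^{O(1)}$ branches, and a mod-$q$ computation shows one of the two halves must violate the invariant. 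Only at the very end, when $|I|=1$, does one branch on the $\le d$ individual variables of the surviving equation (to force integrality against the CNF axioms), contributing the $q^d$ factor. So the quantity being queried is a sum of equations, not the values of shared variables --- that substitution is what makes the $O(\log m)$ recursion go through for arbitrary systems, and it is the idea missing from your proposal.
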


First we sketch the idea for $\mathbb{F}_2$, i.e., a system of XOR equations.
In this case the $\SP$ proof corresponds to a branch decomposition procedure which is commonly used to solve SAT (see e.g. \cite{LodhaOS19,AlekhnovichR02,Dechter96,Darwiche01}). 
View the system $F$ as a hypergraph over $n$ vertices (corresponding to the variables) and with a $d$-edge for each equation. 
Partition the set of hyperedges into two sets $E = E_1 \cup E_2$ of roughly the same size, and consider the \emph{cut} of vertices that belong to both an edge in $E_1$ and in $E_2$. 
Using the $\SP$ rule we branch on all possible values of the sum of the cut variables in order to isolate $E_1$ and $E_2$. 
Once we know this sum, we are guaranteed that either $E_1$ is unsatisfiable or $E_2$ is unsatisfiable depending on the parity of the of the sum of the cut variables. 
This allows us to recursively continue on the side of the cut ($E_1$ or $E_2$) that is unsatisfiable. 
Since there are $n$ Boolean variables, each cut corresponds to at most $n+1$ possibilities for the sum, and if we maintain that the partition of the hyper edges defining the cut is balanced, then we will recurse at most $O(\log m)$ times. 
This gives rise to a tree decomposition of fanout $O(n)$ and height $O(\log n)$.

Over a finite field of size $q$ the proof will proceed in much the same way. 
Instead of a subgraph, at each step we will maintain a subset of the equations $I \subseteq [m]$ such that $\{f_i =b_i \}_{i \in I}$ must contain a constraint that is violated by the $\SP$ queries made so far. We partition $I$ into two sets $I_1$ and $I_2$ of roughly equal size and query the values $a$ and $b$ of $\sum_{i \in I_1} f_i$ and $\sum_{i \in I_2} f_i$. Because $F$ is unsatisfiable, at least one of $a- \sum_{i \in I_1} b_i \not \equiv 0$ or $b- \sum_{i \in I_2} b_i \not \equiv 0$, meaning that that it is unsatisfiable, and we recurse on it.


In the following, we will let $z$ stand for a vector of $\mathbb{F}_q$-valued variables $z_i$. When we discuss any form $f := az$ where $a \in \mathbb{F}_q^n$ and $z$ is a vector of $n$ variables $z_i$, we will implicitly associate it with the linear form $\sum_{i \in [n]} a_i (\sum_{j \in [\log q]} x_{i,j})$ where $x_{i,j}$ are the $\log q$ many Boolean variables encoding $z_i$ in the CNF encoding of $F$.

\begin{proof}[Proof of \autoref{thm:SP_gaussian_elim}]
	Let $F = \{f_1 = b_1,\ldots, f_m = b_m\}$ be a system of unsatisfiable linear equations over $\mathbb{F}_q$, where each $f_i = a_iz$ for $a_i \in \mathbb{F}_q^n$, and $b_i \in \mathbb{F}_q$. Because $F$ is unsatisfiable, there exists a  $\mathbb{F}_q$ linear combination of the equations in $F$ witnessing this; formally, there exists $\alpha \in \mathbb{F}_q^n$ such that $\sum_{i \in [m]}  \alpha_i f_i  \equiv 0 \mod q$, but $\sum_{i \in [m]} \alpha_i b_i \not \equiv 0 \mod q$. 
	
	Stabbing Planes will implement the following binary search procedure for a violated equation; we describe the procedure first, and then describe how to implement it in Stabbing Planes. In each round we maintain a subset $I \subseteq [m]$ and an integer $k_I$ representing the value of $\sum_{i \in I} \alpha_i f_i$. Over the algorithm, we maintain the invariant that  $k_I - \sum_{i \in I} b_i \not \equiv 0 \bmod q$, which implies that there must be a contradiction to $F$ inside of the constraints $\{f_i = b_i\}_{i \in I}$.
	
	Initially, $I = [m]$ and we obtain $k_I$ by querying the value of the sum $\sum_{i \in [m]} \alpha_i f_i$. If $k_I \not \equiv 0 \mod q$ then this contradicts the fact that $\sum_{i \in I} \alpha_i f_i \equiv 0 \mod q$; thus, the invariant holds.  Next, perform the following algorithm.
	\begin{enumerate}
		\item Choose a balanced partition $I = I_1 \cup I_2$ (so that $||I_1|-|I_2|| \leq 1$).
		\item Query the value of $\sum_{i \in I_1} \alpha_i f_i$ and $ \sum_{i \in I_2} \alpha_i f_i$; denote these values by $a$ and $b$ respectively.
		\item If $a -\sum_{i \in I_1} \alpha_i b_i \not \equiv 0 \bmod q$ then recurse on $I_1$ with $k_{I_1} := a$. Otherwise, if $b -\sum_{i \in I_2} \alpha_i b_i \not \equiv 0 \bmod q$ then recurse on $I_2$ with $k_{I_2} := b$.
		\item Otherwise (if $a -\sum_{i \in I_1} \alpha_i b_i \equiv b -\sum_{i \in I_2} \alpha_i b_i \equiv 0 \bmod q$), then this contradicts the invariant:
			\begin{align*} 0  \not \equiv k_I - \sum_{i \in I} \alpha b_i
			&= \sum_{i \in I} \alpha_i (f_i-b_i) \\ &= \sum_{i \in I_1} \alpha_i (f_i-b_i) + \sum_{i \in I_2} \alpha_i (f_i-b_i) \\ 
			&= (a-\sum_{i\in I_1} \alpha_ib_i) + (b-\sum_{i\in I_1} \alpha_ib_i) \equiv 0 \mod q.	
			\end{align*}
	\end{enumerate}
	This recursion stops when $|I| = 1$, at which point we have an immediate contradiction between $k_I$ and the single equation indexed by $I$. 
	
	It remains to implement this algorithm in $\SP$. First, we need to show how to perform the queries in step 2. 
	Querying the value of any sum $\sum_{i \in I} \alpha_i f_i$ can be done in a binary tree with at most $q^2md$ leaves, one corresponding to every possible query outcome. Internally, this tree queries all possible integer values for this sum (e.g. $(\sum_{i \in I} \alpha_i f_i \leq 0, \sum_{i \in I} \alpha_i f_i \geq 1), (\sum_{i \in I} \alpha_i f_i \leq 1, \sum_{i \in I} \alpha_i f_i \geq 2), \ldots$). For the leaf where we have deduced $\sum_{i \in [m]} \alpha_i f_i \leq 0$ we use the fact that each variable is non-negative to deduce that $\sum_{i \in [m]} \alpha_i f_i \geq 0$ as well. 
	Note that $q^2md$ is an upper bound on this sum because there are $m$ equations, each containing at most $d$ variables, each taking value at most $(q-1)$ \footnote{Note that instead of querying the value of $\sum_{i \in I} \alpha_i f_i$ we could have queried $\sum_{i \in I} \alpha_i f_i ~(\bmod q)$ to decrease the number of leaves to $qmd$.}. Thus, step 2 can be completed in $(q^2md)^2$ queries.
	
	Finally, we show how to derive refutations in the following cases: (i) when we deduced that $\sum_{i \in [m]} \alpha_i f_i \not \equiv 0 \mod q$ at the beginning, (ii) in step 4, (iii) when $|I| = 1$. 
	\begin{enumerate}
		\item[(i)]	Suppose that we received the value $a \not \equiv 0 \mod q$ from querying $\sum_{i \in [m]} \alpha_i f_i$. Note that every variable in $\sum_{i \in [m]} \alpha_i f_i$ is a multiple of $q$. Query 
	\[\Big(\sum_{i \in [m]} \alpha_i f_i/q \leq \lceil a/q \rceil-1,~ \sum_{i \in [m]} \alpha_i f_i/q \geq \lceil a/q \rceil \Big).\]
	At the leaf that deduces $\sum_{i \in [m]} \alpha_i f_i/q \leq \lceil a/q \rceil-1$, we can derive $0 \geq 1$ as a non-negative linear combination of this inequality together with $\sum_{i \in [m]} \alpha_i f_i \geq a$. Similarly, at the other leaf $\sum_{i \in [m]} \alpha_i f_i/q \geq \lceil a/q \rceil$ can be combined with $\sum_{i \in [m]} \alpha_i f_i \leq a$ to derive $0 \geq 1$.
	\item[(ii)] Suppose that $a -\sum_{i \in I_1} \alpha_i b_i \equiv b -\sum_{i \in I_2} \alpha_i b_i \equiv 0 \bmod q$. Then $0 \geq 1$ is derived by summing $\sum_{i \in I_1} \alpha_i f_i \geq a$, $\sum_{i \in I_2} \alpha_i f_i \geq b$ and $\sum_{i \in I} \alpha_i f_i \leq k_I$, all of which have already been deduced.
	\item[(iii)] When $|I| = 1$ then we deduced that $a_I z = k_I$ for $k_I \not \equiv b_I \mod q$ and we would like to derive a contradiction using the axioms encoding $a_I z \equiv b_I$. These axioms are presented to $\SP$ as the linear-inequality encoding of a CNF formula, and while there are no integer solutions satisfying both these axioms and $a_I z = k_I$, there could in fact be \emph{rational} solutions. To handle this, we simply force that each of the at most $d$ variables in $a_I z$ takes an integer value by querying the value of each variable one by one. As there are at most $d$ variables, each taking an integer value between $0$ and $q-1$, this can be done in a tree with at most $q^d$ many leaves. At each leaf of this tree we deduce $0 \geq 1$ by a non-negative linear combination with the axioms, the integer-valued variables, and $a_Iz \equiv b_I$. 
	\end{enumerate}
	The recursion terminates in at most $O(\log m)$ many rounds because the number of equations under consideration halves every time. Therefore, the size of this refutation is $(qmd)^{O(\log m)} q^d$. Note that by making each query in a balanced tree, this refutation can be carried out in depth $O(\log^2(mqd))$.
\end{proof}

Finally, we conclude \autoref{thm:linear-equations}.

\begin{proof}[Proof of \autoref{thm:linear-equations}]
Observe that the $\SP$ refutation from \autoref{thm:SP_gaussian_elim} is facelike. Indeed, to perform step 2 we query $(\sum_{i \in I} \alpha_i f_i \leq t-1, \sum_{i \in I} \alpha_i f_i \geq t)$ from $t = 1,\ldots, q^2md$. For $t=1$, the halfspace $\sum_{i \in I} \alpha_i f_i \geq 0$ is valid for the current polytope because the polytope belongs to the $[0,1]^n$ cube. For each subsequent query, $\sum_{i \in I} \alpha_i f_i  \geq t-1$ is valid because the previous query deduced $\sum_{i \in I} \alpha_i f_i  \geq t-1$. Similar arguments show that the remaining queries are also facelike. Thus, \autoref{lem:SPtopSP} completes the proof.
\end{proof}

We note that the $\CP$ refutations that result from  \autoref{thm:linear-equations} have a very particular structure: they are extremely long and narrow. 
Indeed, they have depth $n^{O(\log m)}$. 
We give a rough sketch of the argument: it is enough to show that most lines $L_i$ in the $\CP$ refutation are derived using some previous line $L_j$ with $j=O(i)$. 
This is because the final line would have depth proportional to the size of the proof. 
To see that the $\CP$ refutation satisfies this property, observe that for each node visited in the in-order traversal, the nodes in the right subproof $\pi_R$ depend on the halfspace labelling the root, which in turn depends on the left subproof $\pi_L$.

\section{Lower Bound on the Depth of Semantic $\CP$ Refutations}
Our results from \autoref{sec:SP} suggest an interesting interplay between depth and size of Cutting Planes proofs.
In particular, we note that there is a \emph{trivial} depth $n$ and exponential size refutation of any unsatisfiable CNF formula in Cutting Planes; however, it is easy to see that the Dadush--Tiwari proofs and our own quasipolynomial size $\CP$ proofs of Tseitin are also extremely deep (in particular, they are \emph{superlinear}). Even in  the stronger \emph{Semantic} $\CP$ it is not clear that the depth of these proofs can be  decreased.
However, this does not hold for $\SP$, which has quasi-polynomial size and poly-logarithmic depth refutations.
This motivates \autoref{conj:deepProofs}, regarding the existence of a ``supercritical'' trade-off between size and depth for Cutting Planes \cite{Razborov16,BerkholzN20}. The Tseitin formulas are a natural candidate for resolving this conjecture. 

In this section we develop a new method for proving depth lower bounds which we believe should be more useful for resolving this conjecture. Our method works not only for $\CP$ but also for semantic $\CP$. Using our technique, we establish the first linear lower bounds on the depth of Semantic $\CP$ refutations of the Tseitin formulas. 

Lower bounds on the depth of \emph{syntactic} $\CP$ refutations of Tseitin formulas were established by Buresh-Openheim et al.~\cite{BGHMP06} using a rank-based argument. Our proof is inspired by their work, and so we describe it next. 
Briefly, their proof proceeds by considering a sequence of polytopes $P^{(0)} \supseteq \ldots \supseteq P^{(d)}$ where $P^{(i)}$ is the polytope defined by all inequalities that can be derived in depth $i$ from the axioms in $F$. The goal is to show that $P^{(d)}$ is not empty. To do so, they show that a point $p \in P^{(i)}$ is also in $P^{(i+1)}$ if for every coordinate $j$ such that $0< p_j<1$, there exists points $p^{(j,0)}, p^{(j,1)} \in P^{(i)}$ such that $p^{(j,b)}_k = b$ if $k = j$ and $p^{(j,b)}_k = p_k$ otherwise. The proof of this fact is syntactic: it relies on the careful analysis of the precise rules of $\CP$.


When dealing with Semantic $\CP$, we can no longer analyze a finite set of syntactic rules. Furthermore, it is not difficult to see that the aforementioned criterion for membership in $P^{(i+1)}$ is no longer sufficient for Semantic $\CP$. We develop an analogous criterion for Semantic $\CP$ given later in this section. 
As well, we note that the definition of $P^{(i)}$ is not well-suited to studying the depth of bounded-size  $\CP$ proofs like those in \autoref{conj:deepProofs} --- there does not appear to be a useful way to limit $P^{(i)}$ to be a polytope derived by a bounded number of halfspaces. Therefore we develop our criterion in the language of lifting, which is more amenable to supercritical tradeoffs \cite{Razborov16,BerkholzN20}.

Through this section we will work with the following \emph{top-down} definition of Semantic $\CP$.
\begin{defn}
	Let $F$ be an $n$-variate unsatisfiable CNF formula.
	An $\sCP$ refutation of $F$ is a directed acyclic graph of fan-out $\leq 2$ where each node $v$ is labelled with a halfspace $H_v\subseteq \mathbb{R}^n$ (understood as a set of points satisfying a linear inequality) satisfying the following:
	\begin{enumerate}
		\item \emph{Root.} There is a unique source node $r$ labelled with the halfspace $H_v=\mathbb{R}^n$ (corresponding to the trivially true inequality $1 \geq 0$).
		\item \emph{Internal-Nodes.} For each non-leaf node $u$ with children $v, w$, we have \[H_u \cap \boolcube^n \subseteq H_v \cup H_{w}.\]
		\item \emph{Leaves.} Each sink node $u$ is labeled with a unique clause $C \in F$ such that $H_v \cap \boolcube^n \subseteq C^{-1}(0)$.
	\end{enumerate}
\end{defn}
The above definition is obtained by taking a (standard) $\sCP$ proof and \emph{reversing all inequalities}: now, a line is associated with the set of assignments \emph{falsified} at that line, instead of the assignments \emph{satisfying} the line.

To prove the lower bound we will need to find a long path in the proof.
To find this path we will be taking a root-to-leaf walk down the proof while constructing a partial restriction $\rho \in \set{0,1,*}^n$ on the variables. 
For a partial restriction $\rho$, denote by $\free(\rho) := \rho^{-1}(*)$ and $\fix(\rho):= [n] \setminus \free(\rho)$. 
Let the \emph{restriction} of $H$ by $\rho$ be the halfspace
\[H \restriction \rho := \{ x \in \reals^{\free(\rho)}: \exists \alpha \in H,~ \alpha_{\fix(\rho)} = \rho_{\fix(\rho)},~ \alpha_{\free(\rho)} = x \}. \]
It is important to note that $H \restriction \rho$ is itself a halfspace on the \emph{free} coordinates of $\rho$.

One of our key invariants needed in the proof is the following.
\begin{defn}\label{def:good-halfspace}
	A halfspace $H \subseteq \reals^n$ is \emph{good} if it contains the all-$\frac{1}{2}$ vector, that is, $(\frac{1}{2})^n=(\frac{1}{2},\frac{1}{2},\ldots,\frac{1}{2}) \in H$.
\end{defn}

We will need two technical lemmas to prove the lower bounds.
The first lemma shows that if a good halfspace $H$ has its boolean points covered by halfspaces $H_1, H_2$, then one of the two covering halfspaces is also good modulo restricting a small set of coordinates.
\begin{lem}\label{lem:crux}
	Let $H \subseteq \reals^n$ be any good halfspace, and suppose $H \cap \boolcube^n \subseteq H_1 \cup H_2$ for halfspaces $H_1, H_2$.
	Then there is a restriction $\rho$ and an $i = 1,2$ such that $|\fix(\rho)| \leq 2$ and $H_i \restriction \rho$ is good.
\end{lem}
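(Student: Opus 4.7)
First I would rephrase the goodness of $H_i \restriction \rho$ directly in terms of coefficients. Writing $H_i = \{x : a_i \cdot x \leq b_i\}$ and $\mu_i = a_i \cdot (\frac{1}{2})^n - b_i$, the halfspace $H_i \restriction \rho$ is good precisely when the partial point $c_\rho$ (equal to $\rho$ on $\fix(\rho)$ and $\frac{1}{2}$ elsewhere) lies in $H_i$, which after centering becomes $\sum_{j \in \fix(\rho)} a_{i,j}(2\rho_j - 1) \leq -2\mu_i$. Choosing the sign of each $\rho_j$ optimally, the existence of such a restriction with $|\fix(\rho)| \leq 2$ is equivalent to the existence of $i \in \{1,2\}$ and $J \subseteq [n]$ with $|J| \leq 2$ satisfying $\sum_{j \in J} |a_{i,j}| \geq 2\mu_i$. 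The trivial case $\mu_i \leq 0$, meaning $(\frac{1}{2})^n \in H_i$, is handled by the empty restriction, so I may assume $\mu_1, \mu_2 > 0$.

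Assume for contradiction that no such restriction exists, so $|a_{i,j_1}| + |a_{i,j_2}| < 2\mu_i$ for all distinct $j_1, j_2$ and both $i$. The crucial observation is that the two regions $(-\infty, -2\mu_i]$ and $[2\mu_i, \infty)$ where $|a_i \cdot \sigma| \geq 2\mu_i$ are separated by a gap of exactly $4\mu_i$, while flipping any one or two coordinates of $\sigma \in \{-1,1\}^n$ changes $a_i \cdot \sigma$ by strictly less than $4\mu_i$. Consequently, the sign of $a_i \cdot \sigma$ cannot change across such a flip whenever both endpoints satisfy $|a_i \cdot \sigma| \geq 2\mu_i$. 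I would then combine this with the covering hypothesis applied to ``paired'' vertices, that is, $\sigma$ such that both $x(\sigma)$ and $x(-\sigma) = \mathbf{1} - x(\sigma)$ lie in $H$. For any paired $\sigma$, both of $x(\sigma), x(-\sigma)$ belong to $H_1 \cup H_2$; the cases where both lie in the same $H_i$ are immediately ruled out by summing the two defining inequalities (yielding the contradiction $0 \leq -4\mu_i$), so they must split across the two halfspaces. This forces $|a_i \cdot \sigma| \geq 2\mu_i$ for both $i$, with the signs of $a_1 \cdot \sigma$ and $a_2 \cdot \sigma$ pinned oppositely. A sequence of one- and two-coordinate flips connecting $\sigma$ to $-\sigma$ through paired vertices then preserves these signs at every step, directly contradicting $a_i \cdot (-\sigma) = -a_i \cdot \sigma$.

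The main obstacle is ensuring that the walk from $\sigma$ to $-\sigma$ can be chosen to remain in the paired region --- automatic when $H = \mathbb{R}^n$, but subtle in general. Writing $H = \{a_0 \cdot x \leq b_0\}$ and $\delta_0 = b_0 - a_0 \cdot (\frac{1}{2})^n \geq 0$, the paired region is exactly $\{\sigma \in \{-1,1\}^n : |a_0 \cdot \sigma| \leq 2\delta_0\}$, which may be a small or even disconnected slab of the cube. My plan for this case is to split on structure: when $\|a_0\|_1 \leq 2\delta_0$ every $\sigma$ is paired and the walk argument closes directly, whereas in the complementary regime $a_0$ has a dominant direction that one either exploits to exhibit a good one-coordinate restriction for some $H_i$ directly, or uses to cancel coordinates in pairs so that the walk stays inside the paired slab. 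The delicate combinatorial case analysis interpolating between these regimes is what I expect to be the main technical effort.
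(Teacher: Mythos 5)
Your first two paragraphs are sound: the reformulation of goodness of $H_i\restriction\rho$ as ``the two largest $|a_{i,j}|$ sum to at least $2\mu_i$,'' the observation that a one- or two-coordinate flip cannot jump the gap of width $4\mu_i$ under the contradiction hypothesis, and the fact that any \emph{paired} vertex $\sigma$ (both $x(\sigma)$ and $x(-\sigma)$ in $H$) must split across $H_1$ and $H_2$, pinning opposite signs for $a_1\cdot\sigma$ and $a_2\cdot\sigma$, are all correct. The all-paired case ($\|a_0\|_1\le 2\delta_0$) does close as you say.

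However, there is a genuine gap, and it sits exactly where you flag ``the main technical effort.'' Your entire mechanism runs on walking from $\sigma$ to $-\sigma$ through paired vertices, but the paired region $\{\sigma\in\{-1,1\}^n:|a_0\cdot\sigma|\le 2\delta_0\}$ can be \emph{empty}, not merely small or disconnected: take $H$ to be a supporting halfspace through the center, $\delta_0=0$, with $a_0=(1,2,4,\ldots,2^{n-1})$, so that no signed sum $a_0\cdot\sigma$ vanishes. Then no vertex is paired, the covering hypothesis is never invoked, and the argument produces nothing --- yet the lemma must still hold for such $H$. Your fallback for this regime (``a dominant direction of $a_0$ either exhibits a good one-coordinate restriction for some $H_i$ directly, or lets the walk cancel in pairs'') is not an argument: there is no a priori relation between the normal $a_0$ of $H$ and the normals $a_1,a_2$, so a dominant coordinate of $a_0$ says nothing about the coefficients $a_{i,j}$ that your criterion needs to be large. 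As written, the proof only covers the regime where $H$ contains many antipodal pairs. For contrast, the paper's proof avoids pairing and connectivity entirely: working over $\{-1,1\}^n$, it first constructs (by iteratively ``booleanizing'' coordinates inside the orthogonal complement of $w_1,w_2$) a point $v$ orthogonal to both normals lying on a $2$-face of the cube; since $0^n\in H$, one of the two antipodal corners $a,-a$ adjacent to $v$'s quadrant lies in $H$, hence in some $H_i$, and then either $0^n\in H_i$ already (if $w_i(a-v)\le 0$) or one slides from the origin in the direction $a-v$ (supported on two coordinates) and rounds the last coordinate to land in $H_i$. Only a single vertex of $H$ is ever needed. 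If you want to salvage your route, you would need to replace the global walk by an argument that works from one vertex of $H$ at a time, which is essentially what the orthogonality construction accomplishes.
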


The second lemma shows that good halfspaces are \emph{robust}, in the sense that we can restrict a good halfspace to another good halfspace while also satisfying any mod-2 equation.
\begin{lem}\label{lem:consistent-restriction}
	Let $n \geq 2$ and $H \subseteq \reals^n$ be a good halfspace.
	For any $I \subseteq [n]$ with $|I| \geq 2$ and $b \in \{0,1\}$, there is a partial restriction $\rho \in \set{0,1,*}^n$ with $\fix(\rho) = I$ such that 
\begin{itemize}
	\item $\displaystyle{\bigoplus_{i \in I} \rho(x_i) = b}$ and
	\item $H \restriction \rho \subseteq \reals^{\free(\rho)}$ is good.
\end{itemize}
\end{lem}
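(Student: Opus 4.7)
The plan is to write the halfspace $H$ in the form $H = \{x \in \reals^n : a\cdot x \leq c\}$ for some $a \in \reals^n$ and $c \in \reals$, and then reduce the lemma to a short combinatorial averaging argument on Boolean restrictions of $I$. The goodness hypothesis on $H$ unpacks to the scalar inequality $\tfrac{1}{2}\sum_{i \in [n]} a_i \leq c$.

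First, I would unfold what ``$H\restriction\rho$ is good'' means concretely. By the definition of the restriction, $H\restriction \rho$ is the halfspace on $\reals^{\free(\rho)}$ cut out by $\sum_{i\in\free(\rho)} a_i y_i \leq c - \sum_{i\in I} a_i \rho_i$. So $H\restriction\rho$ is good iff the all-$\tfrac{1}{2}$ vector on $\free(\rho)$ satisfies this, i.e.\ iff
\[
\sum_{i \in I} a_i \rho_i \;\leq\; c - \tfrac{1}{2}\sum_{i \notin I} a_i.
\]
Using goodness of $H$ in the form $c \geq \tfrac{1}{2}\sum_{i\in[n]} a_i$, the right-hand side is at least $\tfrac{1}{2}\sum_{i\in I} a_i$. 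So it suffices to produce $\rho \in \{0,1\}^I$ with $\bigoplus_{i\in I}\rho_i = b$ and
\[
\sum_{i\in I} a_i \rho_i \;\leq\; \tfrac{1}{2}\sum_{i\in I} a_i.
\]

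The heart of the proof is a simple averaging argument establishing the existence of such a $\rho$. Let $k = |I| \geq 2$, and let $\mathcal{R}_b \subseteq \{0,1\}^I$ be the set of Boolean assignments with XOR equal to $b$; then $|\mathcal{R}_b| = 2^{k-1}$. For each fixed $i \in I$, the number of assignments in $\mathcal{R}_b$ with $\rho_i = 1$ equals the number of Boolean assignments on the remaining $k-1 \geq 1$ coordinates with a prescribed XOR (namely $b-1$), which is exactly $2^{k-2}$. Hence the average value of $\sum_{i\in I} a_i\rho_i$ over $\rho \in \mathcal{R}_b$ is
\[
\frac{1}{2^{k-1}}\sum_{i\in I} a_i \cdot 2^{k-2} \;=\; \tfrac{1}{2}\sum_{i\in I} a_i,
\]
so at least one $\rho \in \mathcal{R}_b$ achieves value at most $\tfrac{1}{2}\sum_{i\in I} a_i$, as required.

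The main subtlety — and the only place where the hypothesis $|I|\geq 2$ is used — is in the counting step, which requires the XOR constraint to leave at least one free coordinate so that both marginal values of $\rho_i$ occur equally often. Everything else is a direct unpacking of definitions, so this averaging observation is really the whole content of the lemma.
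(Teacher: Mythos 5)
Your proof is correct, and it takes a genuinely different (and arguably cleaner) route than the paper's. The paper argues constructively: after sorting the coefficients of $I$ by decreasing absolute value, it greedily sets each of the first $k-1$ coordinates to the sign-maximizing Boolean value (each contributing at least $w_i/2$), spends the last coordinate on the parity constraint, and uses the ordering $|w_{k-1}| \geq |w_k|$ to show that the possible loss on the parity-forced coordinate is absorbed by the gain on the previous one. Your argument instead observes that the all-$\tfrac12$ vector on $I$ is the centroid of the set $\mathcal{R}_b$ of parity-consistent Boolean assignments (each coordinate equals $1$ in exactly $2^{k-2}$ of the $2^{k-1}$ assignments, which is where $|I|\geq 2$ enters), so under any linear functional some member of $\mathcal{R}_b$ does at least as well as the centroid. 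Both proofs are equally short; yours is non-constructive but symmetric and makes the role of the hypothesis $|I|\geq 2$ transparent, while the paper's is explicit, which costs it the sorting step and a slightly fiddly two-term averaging inequality. One small point worth making explicit in your write-up: the identification of $H\restriction\rho$ with the halfspace $\sum_{i\in\free(\rho)} a_i y_i \leq c - \sum_{i\in I} a_i\rho_i$ uses that the restriction in the paper is a projection of a coordinate slice of $H$, which for a halfspace is exactly this set; your goodness check at the all-$\tfrac12$ point covers even the degenerate case where the restricted linear part vanishes.
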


With these two lemmas one can already get an idea of how to construct a long path in the proof.
Suppose we start at the root of the proof; the halfspace is $1 \geq 0$ (which is clearly good) and the restriction we maintain is $\rho = *^n$.
We can use the first lemma to move from the current good halfspace to a good child halfspace while increasing the number of fixed coordinates by at most $2$.
However, we have no control over the two coordinates which are fixed by this move, and so we may fall in danger of falsifying an initial constraint.
Roughly speaking, we will use the second lemma to satisfy constraints that are in danger of being falsified.

We delay the proofs of these technical lemmas to the end of the section, and first see how to prove the depth lower bounds.

\subsection{Lifting Decision Tree Depth to Semantic $\CP$ Depth}

As a warm-up, we show how to lift lower bounds on Resolution depth to Semantic $\CP$ depth by composing with a constant-width XOR gadget.
If $F$ is a CNF formula then we can create a new formula by replacing each variable $z_i$ with an XOR of $4$ new variables $x_{i,1},\ldots, x_{i,4}$:
\[z_i := \XOR_4(x_{i,1},\ldots, x_{i,4}) = x_{i,1} \oplus \cdots \oplus x_{i,4}. \] 
We call $z_i$ the \emph{unlifted} variable associated with the output of the $\XOR_4$ \emph{gadget} applied to the $i$-th \emph{block} of variables. 
Formally, let $\XOR_4^n:\{0,1\}^{4n} \rightarrow \{0,1\}^n$ be the application of $\XOR_4$ to each $4$-bit block of a $4n$-bit string. 
Let $F \circ \XOR_4^n$ denote the formula obtained by performing this substitution on $F$ and transforming the result into a CNF formula in the obvious way.

The main result of this section is the following.
\begin{thm}\label{thm:cp-depth-lifting}
	For any unsatisfiable CNF formula $F$, \[\depth_{\sCP}(F\circ \XOR_4^n) \geq \frac{1}{2} \depth_{\Res}(F).\]
\end{thm}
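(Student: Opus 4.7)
The plan is to lift a semantic $\CP$ refutation of $F \circ \XOR_4^n$ to a decision tree solving $\mathrm{Search}(F)$, with at most a factor-of-two loss in depth. Since decision-tree depth for $\mathrm{Search}(F)$ is exactly $\depth_{\Res}(F)$, this yields the theorem.

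Let $\pi$ be a depth-$d$ semantic $\CP$ refutation of $F \circ \XOR_4^n$. I will construct a decision tree $T$ of depth at most $2d$ by an adversarial walk down $\pi$. At each $T$-node I maintain (i)~a node $u$ of $\pi$, (ii)~a partial restriction $\rho$ on the lifted variables whose fixed set is a union of complete blocks $\{x_{i,1},\ldots,x_{i,4}\}$ (so every block is either \emph{fully fixed} or \emph{fully free} under $\rho$), and (iii)~the induced $z$-assignment $\alpha$ on the fully-fixed blocks. The key invariant is that $H_u \restriction \rho$ is \emph{good} in the sense of \autoref{def:good-halfspace}. This holds at the root of $\pi$, where $\rho = *^{4n}$ and $H_r = \reals^{4n}$.

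At an internal $\pi$-node $u$ with children $v, w$, restriction respects the covering, and I apply \autoref{lem:crux} to the good halfspace $H_u \restriction \rho$ and the covering $(H_v \cup H_w) \restriction \rho$. This yields a child $i \in \{v,w\}$ and a further restriction $\rho'$ fixing at most two new lifted variables, such that $H_i \restriction (\rho \cup \rho')$ is good. The newly fixed variables lie in at most two blocks, each previously fully free. For each such affected block $j$ I add a single query of $z_j$ to $T$; given the answer $b$, I apply \autoref{lem:consistent-restriction} (which applies because $\geq 2$ free variables remain in block $j$) to fix block $j$'s remaining variables with XOR chosen so that $z_j = b$, while preserving goodness. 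I then recurse at $\pi$-node $i$ under the updated restriction. Each step down $\pi$ contributes at most two queries to $T$, so $\depth(T) \leq 2d$.

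The step requiring the most care is the leaf case. At a $\pi$-leaf $v$ labeled by a CNF-encoded clause $D$ of $F \circ \XOR_4^n$, let $C_0$ be the clause of $F$ associated with $D$, let $\beta$ be $D$'s unique falsifying lifted assignment on the blocks $S$ indexed by the variables of $C_0$, and note that $\beta$'s $\XOR_4$-decoding is the unique $z$-assignment falsifying $C_0$ on $S$. I must show that $\alpha$ falsifies $C_0$ so that labeling the $T$-leaf with $C_0$ is valid. Suppose for contradiction this fails: some $i \in S$ is either (a)~fully free under $\rho$ or (b)~fully fixed with $\alpha_i$ disagreeing with $\beta$'s decoding. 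I iteratively apply \autoref{lem:consistent-restriction} to every fully-free block, extending $\rho$ to a full Boolean assignment; on block $i$ in case~(a), I target an XOR value disagreeing with $\beta$'s decoding there. Since goodness is preserved at each step and a good halfspace is non-empty, the resulting Boolean assignment lies in $H_v$. But by construction its $\XOR_4$-decoding disagrees with $\beta$ on block $i$, contradicting $H_v \cap \boolcube^{4n} \subseteq D^{-1}(0)$. Hence $\alpha$ falsifies $C_0$, and the construction of $T$ is complete.
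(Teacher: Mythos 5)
Your proposal is correct and is essentially the paper's argument: the same two invariants (block-closedness and goodness of $H\restriction\rho$), the same use of \autoref{lem:crux} to descend while fixing at most two lifted coordinates, the same use of \autoref{lem:consistent-restriction} to complete the affected blocks to a prescribed XOR value, and the same leaf-case contradiction via extending to a full Boolean point of $H_v$. The only difference is presentational: you package the walk as a decision tree for $\mathrm{Search}(F)$, whereas the paper runs the walk against a fixed Adversary strategy in the Prover--Adversary game — but these are the same object, since a depth-$d$ decision tree for $\mathrm{Search}(F)$ is exactly a $d$-round Prover strategy.
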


Key to our lower bound will be the following characterization of Resolution depth by \emph{Prover-Adversary} games.

\begin{defn}\label{def:prover-adversary}
	The \emph{Prover--Adversary} game associated with an $n$-variate formula $F$ is played between two competing players, Prover and Adversary. 
	The game proceeds in rounds, where in each round the state of the game is recorded by a partial assignment $\rho\in\{0,1,*\}^n$ to the variables of $F$. 

	Initially the state is the empty assignment $\rho=*^n$. 
	Then, in each round, the Prover chooses an $i \in [n]$ with $\rho_i = *$, and the Adversary chooses $b \in \boolcube$.
	The state is updated by $\rho_i \leftarrow b$ and play continues.
	The game ends when the state $\rho$ falsifies an axiom of $F$. 

	It is known \cite{Pudlak00} that $\depth_{\Res}(F)$ is exactly the smallest $d$ for which there is a Prover strategy that ends the game in $d$ rounds, regardless of the strategy for the Adversary.
\end{defn}

The proof of \autoref{thm:cp-depth-lifting} will follow by using an optimal Adversary strategy for $F$ to construct a long path in the Semantic $\CP$ proof of $F \circ \XOR_4^n$.
Crucially, we need to understand how halfspaces $H$ transform under $\XOR_4^n$:
\[ \XOR_4^n(H):= \{z \in \{0,1\}^n : \exists x \in H \cap \set{0,1}^{4n}, \XOR^n_4(x) = z\}. \]
As we have already stated, we will maintain a partial assignment $\rho \in \{0,1,*\}^{4n}$ on the $4n$ \emph{lifted} variables.
However, in order to use the Adversary, we will need to convert $\rho$ to a partial assignment on the $n$ \emph{unlifted} variables.
To perform this conversion, for any $\rho \in \set{0,1,*}^{4n}$ define $\XOR_4^n(\rho) \in \{0,1,*\}^n$ as follows: for each block $i \in [n]$, define \[ \XOR^n_4(\rho)_i  = \begin{cases}
		\XOR_4(\rho(x_{i,1}),\ldots, \rho(x_{i, 4})) &\mbox{ if $(i,j) \in \fix(\rho)$ for $j \in [4]$}, \\
 		* &\mbox{ otherwise}.
 \end{cases}
 \]

We are now ready to prove \autoref{thm:cp-depth-lifting}.
Fix any Semantic $\CP$ refutation of $F \circ \XOR_4^n$, and suppose that there is a strategy for the Adversary in the Prover-Adversary game of $F$ certifying that $F$ requires depth $d$.
Throughout the walk, we maintain a partial restriction $\rho \in \set{0,1,*}^{4n}$ to the lifted variables satisfying the following three invariants with respect to the current visited halfspace $H$. 
\begin{enumerate}
	\item[--] \emph{Block Closed}. In every block either all variables in the block are fixed or all variables in the block are free.
	\item[--] \emph{Good Halfspace}. $H \restriction \rho$ is good.
	\item[--] \emph{Strategy Consistent}. The unlifted assignment $\XOR_4^n(\rho)$ does not falsify any clause in $F$.
\end{enumerate}
Initially, we set $\rho = *^{4n}$ and the initial halfspace is $1 \geq 0$, so the pair $(H, \rho)$ trivially satisfy the invariants.
Suppose we have reached the halfspace $H$ in our walk and $\rho$ is a restriction satisfying the invariants.
We claim that $H$ cannot be a leaf.
To see this, suppose that $H$ is a leaf, then by definition $H \cap \boolcube^{4n} \subseteq C^{-1}(0)$ for some clause $C \in F \circ \XOR^n_4$. 
By the definition of the lifted formula, this implies that $\XOR^n_4(H) \subseteq D^{-1}(0)$ for some clause $D \in F$.
Since $(H, \rho)$ satisfy the invariants, the lifted assignment $\XOR_4^n(\rho)$ does not falsify $D$, and so by the block-closed property it follows that there must be a variable $z_i \in D$ such that all lifted variables in the block $i$ are free under $\rho$.
But then applying \autoref{lem:consistent-restriction} to the block of variables $\set{x_{i,1}, x_{i,2}, x_{i,3}, x_{i,4}}$, we can extend $\rho$ to a partial assignment $\rho'$ such that $z_i = \XOR_4(\rho(x_{i,1}), \rho(x_{i,2}), \rho(x_{i,3}), \rho(x_{i,4}))$ satisfies $D$.
But $H \restriction \rho'$ is a projection of $H \restriction \rho$ and so this contradicts that $\XOR^n_4(H)$ violates $D$.

It remains to show how to take a step down the proof. 
Suppose that we have taken $t < d/2$ steps down the Semantic $\CP$ proof, the current node is labelled with a halfspace $H$, and the partial assignment $\rho$ satisfies the invariants.
If $H$ has only a single child $H_1$, then $H \cap \set{0,1}^{4n} \subseteq H_1 \cap \boolcube^{4n}$ and $\rho$ will still satisfy the invariants for $H_1$.
Otherwise, if $H$ has two children $H_1$ and $H_2$ then applying \autoref{lem:crux} to the halfspaces $H \restriction \rho, H_1 \restriction \rho, H_2 \restriction \rho$ we can find an $i \in \{1,2\}$ and a restriction $\tau$  such that $H_i \restriction (\rho\tau)$ is good and $\tau$ restricts at most $2$ extra coordinates.
Let $i_1, i_2 \in [n]$ be the two blocks of variables in which $\tau$ restricts variables, and note that it could be that $i_1 = i_2$.
	
Finally, we must restore our invariants.
We do this in the following three step process.
\begin{itemize}
	\item Query the Adversary strategy at the state $\XOR_4^n(\rho)$ on variables $z_{i_1}, z_{i_2}$ and let $b_1, b_2 \in \{0,1\}$ be the responses.
	\item For $i = i_1, i_2$ let $I_i$ be the set of variables free in the block $i$, and note that $|I_i| \geq 2$. 
		Apply \autoref{lem:consistent-restriction} to $H \restriction (\rho\tau)$ and $I_i$ to get new restrictions $\rho_{i_1}, \rho_{i_2}$ so that blocks $i_1$ and $i_2$ both take values consistent with the Adversary responses $b_1, b_2$.
	\item Update $\rho \leftarrow \rho\tau\rho_{i_1}\rho_{i_2}$.
\end{itemize} 
By \autoref{lem:consistent-restriction} the new restriction $\rho$ satisfies the block-closed and the good halfspace invariants.
At each step we fix at most two blocks of variables, and thus the final invariant is satisfied as long as $t < d/2$.
This completes the proof.

\subsection{Semantic $\CP$ Depth Lower Bounds for Unlifted Formulas}
Next we show how to prove depth lower bounds directly on \emph{unlifted} families of $\mathbb{F}_2$-linear equations. The strength of these lower bounds will depend directly on the expansion of the underlying constraint-variable graph of $F$. 

Throughout this section, let $F$ denote a set of $\mathbb{F}_2$-linear equations.
In a Semantic $\CP$ proof, we must encode $F$ as a CNF formula, but while proving the lower bound we will instead work with the underlying system of equations.
For a set $F$ of $\mathbb{F}_2$-linear equations let $G_F := (F \cup V, E)$ be the bipartite \emph{constraint-variable} graph defined as follows. 
Each vertex in $F$ corresponds to an equation in $F$ and each vertex in $V$ correspond to variables $x_i$. 
There is an edge $(C_i,x_j) \in E$ if $x_j$ occurs in the equation $C_i$.
For a subset of vertices $X \subseteq F \cup V$ define the \emph{neighbourhood} of $X$ in $G_F$ as $\Gamma(X):= \{v \in F \cup V : \exists u \in X, (u,v) \in E\}$.

\begin{defn}\label{def:expansion}
	For a bipartite graph $G=(U \cup V, E)$ the \emph{boundary} of a set $W \subseteq U$ is \[\delta(W):= \{v \in V : |\Gamma(v) \cap W| = 1\}.\]
	The \emph{boundary expansion} of a set $W \subseteq U$ is $|\delta(W)|/|W|$. 
	The graph $G$ is a $(r, s)$-\emph{boundary expander} if the boundary expansion of every set $W \subseteq U$ with $|W| \leq r$ has boundary expansion at least $s$.
\end{defn}

If $F$ is a system of linear equations then we say that $F$ is an $(r,s)$-boundary expander if its constraint graph $G_{F}$ is.
The main result of this section is the following theorem, analogous to \autoref{thm:cp-depth-lifting}.

\begin{thm}\label{thm:expander-lb}
	For any system of $\mathbb{F}_2$-linear equations $F$ that is an $(r, s+3)$-boundary expander, \[\depth_{\sCP}(F) \geq rs/2.\]
\end{thm}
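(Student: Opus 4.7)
The plan is to adapt the proof of Theorem~\ref{thm:cp-depth-lifting} by replacing the Prover--Adversary invocation with a direct expansion-based consistency argument. Given an $\sCP$ refutation $\pi$ of $F$, I will build a root-to-leaf walk of length at least $rs/2$ in $\pi$, maintaining at each visited node $u$ a partial assignment $\rho \in \{0,1,*\}^n$ and a set $J \subseteq F$ of \emph{closed} equations with the following invariants: \textbf{(G)} $H_u \restriction \rho$ is good; \textbf{(C1)} $|J| < r$; and \textbf{(C2)} every equation in $F \setminus J$ has at least $2$ variables in $\free(\rho)$, and $\rho$ admits an extension to a full assignment satisfying every equation in $F \setminus J$. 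At the root ($\rho = *^n$, $J = \emptyset$, $H = \mathbb{R}^n$) these hold trivially, using that $(r, s+3)$-boundary-expansion applied to singletons $\{C\} \subseteq F$ forces every equation to have width at least $s+3$.

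At an internal node $u$ with children $v, w$, I apply Lemma~\ref{lem:crux} to the good halfspace $H_u \restriction \rho$ together with the covering halfspaces $H_v \restriction \rho$ and $H_w \restriction \rho$, obtaining a restriction $\tau$ with $|\fix(\tau)| \leq 2$ and an index $i \in \{v,w\}$ such that $H_i \restriction (\rho\tau)$ is good. I walk to $i$ and set $\rho \leftarrow \rho\tau$. The new fixings may violate \textbf{(C1)} or \textbf{(C2)}, so I run a \emph{closure loop}: while there exists $C \in F \setminus J$ that is violated by every extension of $\rho$ or that has fewer than $2$ free variables, add $C$ to $J$; and, whenever extendability to a satisfying assignment of $F \setminus J$ must be restored, apply Lemma~\ref{lem:consistent-restriction} with $I$ a suitable set of free variables of some affected equation and $b$ the parity needed for satisfaction, thereby fixing further variables while preserving \textbf{(G)}. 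Standard boundary-expansion bookkeeping --- any $W \subseteq F$ of size $\leq r$ has $|\delta(W)| \geq (s+3)|W|$, and every boundary variable of $J$ ends up in $\fix(\rho)$ after closure --- gives $|J| \leq |\fix(\rho)|/(s+3)$, so after $t$ walk-steps $|\fix(\rho)| \leq O(t)$ and $|J| \leq O(t/s) < r$. Hence the invariants persist for all $t \leq rs/2$ steps.

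If the walk ever reaches a leaf $u$ in fewer than $rs/2$ steps, then $u$ is labelled with a clause $C$ of the CNF encoding of some equation $C_i \in F$, with $H_u \cap \{0,1\}^n \subseteq C^{-1}(0)$; I derive a contradiction as follows. If $C_i \in J$, then the closure has already committed $\rho$ to a satisfying assignment of $C_i$, so $\rho$ disagrees with the single falsifying pattern encoded by $C$ on some variable of $C_i$; since $H_u \restriction \rho$ is good it contains a Boolean extension of $\rho$ (any good halfspace $\{ax \geq b\}$ satisfies $\sum_i a_i \geq 2b$, hence contains $\mathbf{0}$ when $b \leq 0$ and $\mathbf{1}$ otherwise), contradicting $H_u \cap \{0,1\}^n \subseteq C^{-1}(0)$. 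If $C_i \notin J$, I use \textbf{(C2)} to apply Lemma~\ref{lem:consistent-restriction} to the (at least two) free variables of $C_i$ with parity $b$ chosen to satisfy $C_i$, obtaining $\rho' \supseteq \rho$ with $H_u \restriction \rho'$ good; the same boolean-point argument then yields a contradiction.

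The main obstacle I expect is the closure bookkeeping: one must simultaneously restore extendability of $\rho$ to a satisfying assignment of $F \setminus J$ and keep $|\fix(\rho)|$ and $|J|$ growing only linearly in the number of walk-steps. This is exactly where the full strength of $(r, s+3)$-boundary-expansion enters --- the extra ``$+3$'' is absorbed by the $\leq 2$ fresh fixings per walk-step (from Lemma~\ref{lem:crux}) and the $\geq 2$-free-variables hypothesis for applying Lemma~\ref{lem:consistent-restriction}, leaving a net margin of $s$ boundary variables per closed equation, which is precisely what keeps $|J| < r$ for the full $rs/2$-step walk.
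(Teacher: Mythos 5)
Your overall architecture matches the paper's: a root-to-leaf walk maintaining a good halfspace via Lemma~\ref{lem:crux}, with Lemma~\ref{lem:consistent-restriction} used for repairs and expansion used to rule out reaching a leaf early. But there are two genuine gaps. First, invariant \textbf{(C2)} as stated is unachievable: at the root $J=\emptyset$, so ``$\rho$ admits an extension to a full assignment satisfying every equation in $F\setminus J$'' asserts that $F$ is satisfiable, which it is not. The invariant must be local --- the paper's version only requires that $\rho$ falsify no equation, and uses expansion to satisfy any \emph{single} threatened equation on demand at a leaf. Relatedly, your closure rule never actually forces $\rho$ to satisfy the equations placed in $J$ (it only adds them to $J$), yet your leaf argument for $C_i\in J$ assumes ``the closure has already committed $\rho$ to a satisfying assignment of $C_i$''; and the trigger ``fewer than $2$ free variables'' fires too late to invoke Lemma~\ref{lem:consistent-restriction} (which needs $|I|\ge 2$) to satisfy that equation while preserving goodness, since fixing a single variable to a forced value can destroy goodness.

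Second, and more seriously, the quantitative heart of the theorem --- keeping $|J|<r$ for $rs/2$ steps --- is not established. The chain ``$|J|\le |\fix(\rho)|/(s+3)$, so after $t$ walk-steps $|\fix(\rho)|\le O(t)$'' does not close: the repair steps themselves fix variables (potentially all of $\Gamma(J)$, which is far larger than $O(t)$), so $|\fix(\rho)|=O(t)$ is unjustified, and the two inequalities together are circular. The accounting that actually works counts \emph{boundary variables of the closed set}: (i) at each repair, close the \emph{largest} sub-collection of at most $k$ remaining equations whose boundary expansion has dropped to $\le 3$ --- maximality is what restores expansion $>3$ afterwards; (ii) observe that these closed sub-collections are disjoint and their union $W^*$ (of size $r$ at termination) had at least $(s+3)r$ boundary variables initially; (iii) charge at most $3$ destroyed boundary variables per closed equation to the closure steps (since each sub-collection had expansion $\le 3$ when closed) and at most $2$ per walk-step to the moves, yielding $2t+3r\ge (s+3)r$ and hence $t\ge rs/2$. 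Your final paragraph gestures at a ``net margin of $s$ per closed equation,'' but none of (i)--(iii) is set up, and with per-equation closure (rather than closing maximal low-expansion sets) it is unclear how to restore the expansion property needed for subsequent rounds.
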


The proof of this theorem follows the proof of \autoref{thm:cp-depth-lifting} with some small changes.
As before, we will maintain a partial assignment $\rho \in \{0,1,*\}^n$ that will guide us on a root-to-leaf walk through a given Semantic $\CP$ proof; we also require that each halfspace $H$ that we visit is \emph{good} relative to our restriction $\rho$.
Now our invariants are (somewhat) simpler: we will only require that $F \restriction \rho$ is a sufficiently good boundary expander.

We first prove an auxiliary lemma that will play the role of \autoref{lem:consistent-restriction} in the proof of \autoref{thm:expander-lb}.
We note that it follows immediately from \autoref{lem:consistent-restriction} and boundary expansion.
\begin{lem}\label{lem:boundary-cleanup}
	Suppose $F$ is a system of $\mathbb{F}_2$-linear equations that is an $(r, s)$-boundary expander for $s > 1$, and suppose $F' \subseteq F$ with $|F'| \leq r$.
	Let $H$ be a good halfspace.
	Then there exists a $\rho \in \set{0,1,*}^n$ with $\fix(\rho) = \Gamma(F')$ such that 
	\begin{itemize}
		\item $F'$ is satisfied by $\rho$, and
		\item $H \restriction \rho$ is good.
	\end{itemize}
\end{lem}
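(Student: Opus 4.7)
\medskip

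\noindent\textbf{Proof Plan.} The plan is to proceed by induction on $|F'|$, using boundary expansion to repeatedly peel off an equation that still has at least two ``free'' (boundary) variables available, and then invoking \autoref{lem:consistent-restriction} on those variables to satisfy that equation while preserving goodness. The base case $|F'|=0$ is immediate: take $\rho$ to be the empty restriction, so that $H\restriction\rho = H$ is good by hypothesis and there are no equations to satisfy.

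For the inductive step, observe first that every subset $F'' \subseteq F'$ satisfies $|F''| \leq r$, so the $(r,s)$-boundary expansion hypothesis applies to $F'$ itself. Since $s>1$ we have $|\delta(F')| \geq s|F'| > |F'|$, and as $|\delta(F')|$ is an integer this forces $|\delta(F')| \geq |F'|+1$. Summing the per-equation contributions, some equation $C \in F'$ therefore owns at least two boundary variables; write $I := \delta(F') \cap \Gamma(C)$, so $|I|\geq 2$ and every variable in $\Gamma(C)\setminus I$ appears in some other equation of $F'$.

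Apply the induction hypothesis to $F'\setminus\{C\}$ and $H$ (which has $|F'\setminus\{C\}|<|F'|\leq r$) to obtain a restriction $\rho'$ with $\fix(\rho') = \Gamma(F'\setminus\{C\})$ that satisfies $F'\setminus\{C\}$ and for which $H\restriction\rho'$ is good. Every variable of $C$ outside $I$ lies in $\Gamma(F'\setminus\{C\})$ and is already assigned by $\rho'$; let
\[
b \;:=\; b_C \,\oplus\, \bigoplus_{v\in \Gamma(C)\setminus I}\rho'(x_v)
\]
be the residual parity needed on $I$ to satisfy $C$. Now apply \autoref{lem:consistent-restriction} to the good halfspace $H\restriction\rho'$, the index set $I$ (of size $\geq 2$), and the target parity $b$, obtaining a restriction $\rho''$ with $\fix(\rho'')=I$, $\bigoplus_{i\in I}\rho''(x_i)=b$, and $(H\restriction\rho')\restriction\rho''$ good. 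Take $\rho := \rho'\cup\rho''$; since $\fix(\rho')$ and $\fix(\rho'')=I$ are disjoint by construction, this union is well defined, $\fix(\rho)=\Gamma(F'\setminus\{C\})\cup I = \Gamma(F')$, all equations of $F'\setminus\{C\}$ remain satisfied, the chosen parity on $I$ makes $C$ satisfied, and $H\restriction\rho$ is good.

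The only subtle point, and the only place the proof can go wrong, is the ``two free boundary variables'' step used to invoke \autoref{lem:consistent-restriction}: the lemma genuinely requires $|I|\geq 2$, so the peeling argument must guarantee this at every stage. That is precisely why we use boundary expansion with parameter $s>1$ rather than mere non-emptiness of the boundary, and it is also why we peel one equation at a time (so that expansion of $F'$ carries over to every intermediate subset). Once this is set up, the rest is bookkeeping: disjointness of $\fix(\rho')$ and $I$, and the fact that goodness is preserved by successive restrictions.
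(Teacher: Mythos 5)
Your proof is correct and follows essentially the same approach as the paper: both arguments use the pigeonhole bound $|\delta(F')|>|F'|$ to peel off a constraint owning two private boundary variables and then invoke \autoref{lem:consistent-restriction} on that pair to satisfy the constraint while preserving goodness. Organizing the peeling as an induction on $|F'|$ (rather than the paper's two-phase ``collect pairs, then fix leftovers, then satisfy'') is a clean bookkeeping choice that in particular absorbs the paper's separate case analysis for the leftover variable set into the inductive hypothesis.
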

\begin{proof}
	We first use expansion to find, for each constraint $C_i \in F'$, a pair of variables $y_{i,1}, y_{i,2}$ that are in $C_i$'s boundary. To do this, first observe that $|\delta(F')| \geq s |F'| > |F'|$ by the definition of boundary expansion.
	The pigeonhole principle then immediately implies that there are variables $y_{i,1}, y_{i,2} \in \delta(F')$ and a constraint $C_i \in F'$ such that $y_{i,1}, y_{i,2} \in C_i$. 
	Since $y_{i,1},y_{i,2}$ do not occur in $F' \setminus \{C_i\}$, it follows that $F' \setminus \{C_i\}$ is still an $(r, s)$-boundary expander. 
	So, we update $F' = F' \setminus \set{C_i}$ and repeat the above process.

	When the process terminates, we have for each constraint $C_i \in F'$ a pair of variables $y_{i,1}, y_{i,2}$ that occur \emph{only} in $C_i$.
	Write the halfspace $H = \sum_{i} w_i x_i \geq c$, and let $I = \Gamma(F') \setminus \bigcup_{i \in I} \set{y_{i,1}, y_{i,2}}$ be the set of variables occurring in $F'$ that were not collected by the above process.
	We define a partial restriction $\rho$ with $\fix(\rho) = I$ that depends on $|I|$ as follows.	
	\begin{itemize}
		\item If $|I| = 0$ then $\rho = *^n$.
		\item If $I = \set{x_i}$ then define $\rho(x_i) = 1$ if $w_i \geq 0$ and $\rho(x_i) = 0$ otherwise, and for all other variables set $\rho(x) = *$.
		\item If $|I| > 2$ then apply \autoref{lem:consistent-restriction} to generate a partial restriction $\rho$ with $\fix(\rho) = I$ that sets the $\XOR$ of $I$ arbitrarily.
	\end{itemize}
	Observe that $H \restriction \rho$ is good.
	The only non-trivial case is when $|I| = 1$, but, in this case we observe \[ (H \restriction \rho)((1/2)^{n-1}) = w_i \rho(x_i) + \sum_{j \neq i} w_i/2 \geq \sum_{i} w_i/2 \geq c,\]
	where we have used that $H$ is good and the definition of $\rho$.

	Next we extend $\rho$ as follows: for each $i = 1, 2, \ldots, |F'|$ apply \autoref{lem:consistent-restriction} to $I_i = \set{y_{i,1}, y_{i,2}}$ to generate a partial restriction $\rho_i$ with $\fix(\rho_i) = I_i$ so that the constraint $C_i \restriction \rho \rho_1 \cdots \rho_{i-1}$ is satisfied by $\rho_i$.
	Observe that this is always possible since $I_i$ is in the boundary of $C_i$.
	Finally, we update $\rho \leftarrow \rho \rho_1 \cdots \rho_{|F'|}$.
	It follows by \autoref{lem:consistent-restriction} that $F'$ is satisfied by $\rho$ and $H \restriction \rho$ is good.
\end{proof}

	We are now ready to prove \autoref{thm:expander-lb}.
	Fix any Semantic $\CP$ refutation of $F$ and let $n$ be the number of variables.
	We take a root-to-leaf walk through the refutation while maintaining a partial assignment $\rho \in \set{0,1,*}^n$ and an integer valued parameter $k \geq 0$.
	Throughout the walk we maintain the following invariants with respect to the current halfspace $H$: 
	\begin{itemize}
		\item[--] \emph{Good Expansion.} $F \restriction \rho$ is a $(k, t)$-boundary expander with $t > 3$.
		\item[--] \emph{Good Halfspace.} $H \restriction \rho$ is good.
		\item[--] \emph{Consistency.} The partial assignment $\rho$ does not falsify any clause of $F$.
	\end{itemize}
	
	Initially, we set $k = r$, $\rho = *^n$, and $t=s+3$, so the invariants are clearly satisfied since $F$ is an $(r, s+3)$-expander.
	So, suppose that we have reached a halfspace $H$ in our walk, and let $k, \rho$ be parameters satisfying the invariants.
	We first observe that if $k > 0$ then $H$ cannot be a sink node of the proof.
	To see this, it is enough to show that $H$ contains a satisfying assignment for each equation $C \in F$. 
	Because $H \restriction \rho$ is non-empty (since it is good) there exists a satisfying assignment in $H$ for every equation satisfied by $\rho$, so, 	assume that $C$ is not satisfied by $\rho$. 
	In this case, since $F \restriction \rho$ is a $(k, t)$-expander for $k>0$ we can apply \autoref{lem:boundary-cleanup} to $\set{C}$ and $H \restriction \rho$ and obtain a partial restriction $\tau$ with $\fix(\tau) = \Gamma(C)$ such that $\tau$ satisfies $C$.
	It follows that $H$ is not a leaf.

	Next, we show how to take a step down the proof while maintaining the invariants. 
	If $H$ has only a single child $H_1$, then $H \subseteq H_1$ and we can move to $H_1$ without changing $\rho$ or $k$.
	Otherwise, let the children of $H$ be $H_1$ and $H_2$. 
	Applying \autoref{lem:crux} to $H \restriction \rho, H_1 \restriction \rho, H_2 \restriction \rho$ we get a partial restriction $\tau$ and an $i \in \set{1,2}$ such that $H_i \restriction \rho\tau$ is good and $|\fix(\tau)| \leq 2$.
	Due to this latter fact, since $F \restriction \rho$ is a $(k, t)$-expander it follows that $F \restriction \rho\tau$ is a $(k, t-2)$-expander in the worst case.
	Observe that since $t > 3$ it follows that $F \restriction \rho\tau$ still satisfies the consistency invariant.
	It remains to restore the expansion invariant.

	To restore the expansion invariant, let $W$ be the largest subset of equations such that $|W| \leq k$ and $W$ has boundary expansion at most $3$ in $F \restriction \rho\tau$, and note that $W$ has boundary expansion at least $t-2 > 1$.
	Applying \autoref{lem:boundary-cleanup}, we can find a restriction $\rho'$ such that $W \restriction \rho\tau\rho'$ is satisfied, and $H \restriction \rho\tau\rho'$ is a good halfspace.
	Since $W$ is the largest subset with expansion at most $3$, it follows that $F \restriction \rho \tau \rho'$ is now a $(k-|W|, t')$-boundary expander with $t' > 3$. Suppose otherwise, then there exists a subset of equations $W'$ which has boundary expansion at most $3$ in $F \restriction \rho \tau \rho'$. Then $W \cup W'$ would have had boundary expansion at most $3$ in $F \restriction \rho \tau$, contradicting the maximality of $W$. Now update $\rho \leftarrow \rho\tau\rho'$ and $k \leftarrow k - |W|$.
	 Finally, we halt the walk if $k = 0$.

	We now argue that this path must have had depth at least $rs /2$ upon halting.
	Assume that we have taken $t$ steps down the proof. 
	For each step $i \leq t$ let $W_i$ be the set of equations which lost boundary expansion during the $i$th cleanup step. 
	Note that $W_i \cap W_j = \emptyset$ for every $i \neq j$. 
	Let $W^* = \cup_{i=1}^t W_i$, note that $|W^*| = r$ because at the $i$th step we decrease $k$ by $|W_i|$. Furthermore, at the end of the walk, $W^*$ has no neighbours and therefore no boundary in $F\restriction \rho$. 
	Before the start of the $i$th cleanup step, $W_i$ has at most $3|W_i|$ boundary variables. 
	Therefore, at most $3|W^*| = 3r$ boundary variables were removed during the cleanup step. 
	Since $F$ started as an $(r, s+3)$-boundary expander, it follows that $W^*$ had at least $r(s+3)$ boundary variables at the start of the walk.
	But, since \emph{all} variables have been removed from the boundary by the end, this means that $rs$ variables must have been removed from the boundary during the move step.
	Thus, as each move step sets at most $2$ variables, it follows that $t \geq rs/2$ before the process halted.

\subsection{Proof of \autoref{lem:crux} and \autoref{lem:consistent-restriction}}
In this section we prove our two key technical lemmas: \autoref{lem:crux} and \autoref{lem:consistent-restriction}.
We begin by proving \autoref{lem:consistent-restriction} as it is simpler.

\begin{proof}[Proof of \autoref{lem:consistent-restriction}]
	Let $H$ be represented by $\sum_{i \in [n]} w_i x_i \geq c$ and suppose without loss of generality that $c \geq 0$ and that $I = \{1, \ldots, k\}$. 
	Let the weights of $I$ in $H$ be ordered $|w_1| \geq |w_2| \geq \ldots |w_k|$. 
	Define $\rho$ by setting $\rho(x_i) = *$ for $i \not \in I$, for $i \leq k-1$ set $\rho(x_i) = 1$ if $w_i \geq 0$ and $\rho(x_i) = 0$ otherwise, and set $\rho(x_k)$ so that $\bigoplus_{i \in I} \rho(x_i) = b$. 
	Clearly the parity constraint is satisfied, we show that $H \restriction \rho$ is good.
	This follows by an easy calculation:
	\begin{align*} 
		(H \restriction \rho)((1/2)^{[n]\setminus I}) & = w_{k-1}\rho(x_{k-1}) + w_k\rho(x_k) + \sum_{i \leq k-2} w_i \rho(x_i) + \sum_{i \geq k+1} w_i/2 \\
			&\geq w_{k-1}/2 + w_k/2 + \sum_{i \leq k-2} w_i \rho(x_i) + \sum_{i \geq k+1} w_i/2 	\\
			&\geq \sum_{i \in [n]} w_i/2 \geq c 
	\end{align*}
	where the first inequality follows by averaging since $|w_{k-1}| \geq |w_{k}|$, and the final inequality follows since $H$ is good. 
\end{proof}

In the remainder of the section we prove \autoref{lem:crux}.
It will be convenient to work over $\{-1,1\}^n$ rather than $\{0,1\}^n$, so, we restate it over this set and note that we can move between these basis by using the bijection $v \mapsto (1-v)/2$.

\begin{lem}\label{lem:CruxForPM1}
	Let $H \in \reals^n$ be a halfspace such that $0^n \in H$ and suppose that $H \cap \set{-1,1}^n \subseteq H_1 \cup H_2$. 
	Then one of $H_1$ or $H_2$ contains a point $y \in \{-1,0,1\}^n$ such that $y$ has at most two coordinates in $\{-1,1\}$. 
\end{lem}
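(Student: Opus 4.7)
I argue by contradiction: suppose no $y \in \{-1,0,1\}^n$ with at most two nonzero coordinates lies in $H_1 \cup H_2$. Writing $H_k = \{x : w_k \cdot x \geq c_k\}$, the failure of the sparse witnesses $0$, $\pm e_i$, and (by taking signs that maximize $\pm w_k[i] \pm w_k[j]$) $\pm e_i \pm e_j$ yields $c_k > 0$ and the pair-sum inequalities $|w_k[i]| + |w_k[j]| < c_k$ for every $i \neq j$ and every $k \in \{1,2\}$. Crucially, these inequalities hold for \emph{every} pair, not merely for the top two weights of $w_k$; this is a rather strong restriction on the geometry of $H_1, H_2$.

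The plan is to use these inequalities to exhibit a $\pm 1$ vertex $s \in \{-1,1\}^n$ satisfying both $|w_1 \cdot s| < c_1$ and $|w_2 \cdot s| < c_2$ simultaneously. Once such an $s$ is in hand, antipodal symmetry closes the argument. Both $s$ and $-s$ have $w_k$-values strictly inside $(-c_k, c_k)$, so $\pm s \notin H_1 \cup H_2$. On the other hand, $w \cdot s$ and $w \cdot (-s) = -w \cdot s$ sum to zero, so at least one of them is $\geq 0$, and since $0 \in H$ forces $c \leq 0$, the corresponding vertex lies in $H$. That vertex is therefore in $H \cap \{-1,1\}^n$ but outside $H_1 \cup H_2$, directly contradicting the covering hypothesis.

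The main obstacle is producing the sign vector $s$ — a two-target sign-discrepancy problem. The one-target version is standard: processing coordinates in decreasing $|w_k[i]|$ and greedily choosing $s_i$ to minimize the absolute partial sum gives $|w_k \cdot s| \leq \max_i |w_k[i]| < c_k$, so each halfspace is easily handled in isolation. Combining the two is delicate because the locally balancing sign for $w_1$ may be opposite to the one for $w_2$. The route I would pursue is an inductive, pair-based construction: partition the coordinates into consecutive pairs, and for each pair $(i,j)$ use the four available sign assignments $(s_i, s_j) \in \{-1,1\}^2$ together with the strong pair-sum hypothesis $|w_k[i]| + |w_k[j]| < c_k$ to keep both running sums $S_k = \sum s_i w_k[i]$ strictly inside $(-c_k, c_k)$ throughout. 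The key observation enabling the invariant to close is that any single pair's contribution to $S_k$ is bounded in absolute value by $|w_k[i]| + |w_k[j]| < c_k$, so it cannot by itself push $S_k$ out of range. Generic 2-dimensional discrepancy results (Spencer, Beck-Fiala, Banaszczyk) lose a constant factor that is precisely what prevents them from yielding the sharp threshold $c_k$, so the proof needs to leverage the pair-sum hypothesis in a direct and dedicated way.
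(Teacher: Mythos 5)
Your outer reduction is sound and is a genuinely different framing from the paper's: you pass to the contrapositive, extract from the failure of all sparse witnesses the inequalities $c_k>0$ and $|w_k[i]|+|w_k[j]|<c_k$ for every pair $i\neq j$, and correctly observe that a single sign vector $s$ with $|w_k\cdot s|<c_k$ for both $k$ would contradict the covering hypothesis via antipodal symmetry (one of $\pm s$ lies in $H$ since $0^n\in H$, yet neither lies in $H_1\cup H_2$). All of that checks out.

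The gap is in the one step you flag as the main obstacle: the simultaneous two-target sign selection. The pair-based greedy with the invariant ``both running sums stay in $(-c_k,c_k)$'' does not close, and the justification you give --- that a pair's contribution has magnitude $<c_k$ and so ``cannot by itself push $S_k$ out of range'' --- is not valid, because the contribution is added to a running sum that may already be near the boundary. Concretely (normalizing $c_1=c_2=1$): from the state $S=(0.99,-0.99)$, which satisfies the invariant, take the pair $a=(0.5,0.5)$, $b=(0.45,-0.45)$, which satisfies the pair-sum hypothesis in both coordinates. The four contributions $\pm a\pm b$ are $(\pm0.95,\pm0.05)$ and $(\pm0.05,\pm0.95)$ with correlated signs, and every one of them pushes either $|S_1|$ or $|S_2|$ to at least $1$. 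So the induction as stated fails, and no stronger potential function or ordering is supplied. The discrepancy claim itself \emph{is} true and can be proved by a linear-algebra ``floating'' argument: starting from $0^n$, repeatedly move in a direction orthogonal to $w_1$, $w_2$ and to the already-fixed coordinates until only two coordinates $i,j$ remain fractional, then round each of those to the nearest endpoint so that $|s_i-v_i|\le 1$; this gives $|w_k\cdot s|\le |w_k[i]|+|w_k[j]|<c_k$. That orthogonal-direction construction is precisely \autoref{lem:orthogonal_vector} in the paper, whose own proof then proceeds directly (via a $2$-face and a corner $a\in H$) rather than through your contrapositive. So your skeleton is salvageable, but the decisive combinatorial step is missing as written.
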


The key ingredient in our proof of \autoref{lem:CruxForPM1} is the following simple topological lemma, which will allow us to find a well-behaved point lying on a $2$-face of the $\{-1,1\}^n$ cube

\begin{defn}[$2$-face]
	A $2$-face of the $n$-cube with vertices $\{-1,1\}^n$ are the $2$-dimensional $2$-by-$2$ squares spanned by four vertices of the cube that agree on all but two coordinates. 
That is, a two face is a set $A \subseteq [-1,1]^n$ such that there exists $\rho \in \{-1,1,*\}^n$ with $|\free(\rho)| = 2$ and $A = [-1,1]^n \restriction \rho$.
\end{defn}

\begin{lem} \label{lem:orthogonal_vector}
Let $w_1, w_2 \in \mathbb{R}^n$ be any pair of non-zero vectors, then we can find a vector $v \in \mathbb{R}^n$ orthogonal to $w_1, w_2$, such that $v$ lies on a $2$-face.
\end{lem}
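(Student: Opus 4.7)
My plan is to reduce the lemma to a simple dimension count on a bounded polytope. Concretely, consider the set
\[
F \;:=\; \{\, x \in [-1,1]^n : w_1 \cdot x = 0,~ w_2 \cdot x = 0 \,\},
\]
which is a bounded convex polytope. It is nonempty because $0 \in F$. The strategy is to pick any vertex $v$ of $F$ and argue that such a vertex must automatically lie on a $2$-face of the cube $[-1,1]^n$; the orthogonality requirement is built into membership in $F$.

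The key step is the standard characterization of vertices: at any vertex of a polytope in $\mathbb{R}^n$, the set of active constraints (counting the two equalities $w_i \cdot x = 0$ as always active, together with the box constraints $x_i = 1$ or $x_i = -1$ that hold with equality) must have rank $n$. The two equalities contribute rank at most $2$, so at least $n-2$ of the coordinate constraints must be tight at $v$. That is, at least $n-2$ coordinates of $v$ take values in $\{-1,1\}$, leaving at most two coordinates free in $(-1,1)$. By the definition of a $2$-face, this places $v$ on such a face.

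The only fact I need to justify beyond this bookkeeping is that $F$ actually has a vertex, which is immediate because $F$ is compact (it is a closed subset of $[-1,1]^n$): optimizing any linear functional in generic position over $F$ produces a vertex. I do not expect a serious obstacle; the lemma is essentially the geometric observation that if a linear map from $[-1,1]^n$ has rank $\le 2$, then every point of its image is attained on the $2$-skeleton of the cube, and the argument above is just the standard LP-vertex realization of that fact.
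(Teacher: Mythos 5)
Your proof is correct, and it takes a genuinely different route from the paper's. The paper constructs the point explicitly by an iterative ``booleanizing'' process: starting from $v=0^n$, it repeatedly picks a direction $u$ orthogonal to $w_1,w_2$ and to the standard basis vectors of the already-fixed coordinates (such a $u$ exists while fewer than $n-2$ coordinates are fixed), and moves along $u$ until a new coordinate hits $\pm 1$, preserving orthogonality at every step. You instead invoke the standard vertex characterization for the polytope $F=\{x\in[-1,1]^n: w_1\cdot x=w_2\cdot x=0\}$: $F$ is nonempty and compact, hence has a vertex, and at a vertex the active constraints must have rank $n$, forcing at least $n-2$ box constraints to be tight; since the tight box constraints correspond to distinct coordinates, at least $n-2$ coordinates of the vertex lie in $\{-1,1\}$, which places it on a $2$-face (if more than $n-2$ coordinates are tight, the vertex lies on a lower-dimensional face contained in a $2$-face, which is still fine for the downstream use in \autoref{lem:CruxForPM1}). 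Your argument is shorter and leans on a black-box fact from polyhedral combinatorics, whereas the paper's is self-contained and constructive; the two proofs are really the same dimension count realized greedily versus via LP duality, and either suffices for the application.
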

\begin{proof}
	We will construct the vector $v$ iteratively by rounding one coordinate at a time to a $\{-1,1\}$-value until $v$ contains exactly $n-2$ coordinates fixed to $\{-1,1\}$. At each step, we will maintain that $v \in [-1,1]^n$ and that $v$ is orthogonal to $w_1$ and $w_2$. Therefore when the process halts $v$ will lie on a $2$-face.
	
	Initially, set $v = 0^n$ and observe that the invariants are satisfied. Suppose that we have constructed a vector $v$ that is orthogonal to $w_1$ and $w_2$, all of its coordinates belong to $[-1,1]$, and exactly $i < n-2$ of its coordinates belong to $\{-1,1\}$; suppose w.l.o.g. that they are the first $i$ coordinates. We will show how to ``booleanize'' an additional coordinate of $v$. Let $u$ be any non-zero vector that is orthogonal to $\{w_1, w_2, e_1, \ldots, e_i\}$, where $e_j$ is the $j$th standard basis vector. Begin moving from $v$ in the direction of $u$ and let $\alpha > 0$ be the smallest value such that one of the coordinates $j > i$ of $v + \alpha u$ is in $\{-1, 1\}$. 
	We verify that the following properties hold:
	\begin{enumerate}
		\item The first $i$ coordinates of $v + \alpha u$ are in $\{ -1, 1\}$. This follows because we moved in a direction that is orthogonal to $e_1, \ldots, e_i$. 
		\item $v + \alpha u$ is orthogonal to $w_1$ and $w_2$. Let $w$ be either of the vectors $w_1$ or $w_2$ and observe that  $v_{i+1}w = v_iw + \alpha (uw) = 0$,
		where the final equality follows because $w$ is orthogonal to $v_i$ by induction and to $u$ by assumption. 
	\end{enumerate}
	Finally, set $v$ to be  $v + \alpha u$. 
\end{proof}

\begin{proof}[Proof of \autoref{lem:CruxForPM1}]
	Let the children $H_1$ and $H_2$ of $H$ be given by the halfspaces $w_1x \geq b_1$ and $w_2x \geq b_2$ respectively. By \autoref{lem:orthogonal_vector} we can find a vector $v$ which is orthogonal to $w_1$ and $w_2$, and which lies on some $2$-face $F$ of the $[-1,1]^n$ cube corresponding to some restriction $\rho \in \{0,1,*\}^n$. Then, $v$ lies in (at least) one of the four 1-by-1 quadrants of the $2$-face, $[0,1]^2$, $[0,1] \times [-1,0]$, $[-1,0] \times [0,1]$, or $[-1,0]^2$; suppose that $v$ lies in the $[-1,0] \times [0,1]$ quadrant of $F$, the other cases will follow by symmetry (see \autoref{fig:2Face}).

		\begin{figure}[H]
		\centering
		\begin{tikzpicture}
		\draw[very thick, color=white, fill=green!9] (0,0)  -- (0,2) -- (-2,2) -- (-2,0) -- cycle;
		\draw[very thick] (-2,2)  -- (2,2) -- (2,-2) -- (-2,-2) -- cycle;
		\draw[thick, color = gray] (0,-2) -- (0,2);
		\draw[thick, color = gray] (-2,0) -- (2,0);
		\draw[very thick, ->, color=red!50] (0,0)  -- (-2,.4) node[midway, above] {$v$};
		\draw[very thick, ->, color=red!50] (-2,.4)  -- (-2,2) node[midway, left] {$a-v$};

		\node[text width=.5cm] at (0.3,-0.3) {$(0,0)$};
		\node[text width=2cm] at (2.6 ,2.4) {$(1,1)$};
		\node[text width=2.2cm] at (2.5 ,-2.4) {$(1,-1)$};
		\node[text width=2.2cm] at (-2.2 ,2.4) {$a=(-1,1)$};
		\node[text width=2.4cm] at (-1.8 ,-2.4) {$(-1,-1)$};
		
		 \end{tikzpicture}
		 \caption{A $2$-face of the $n$-cube together with a depiction of the booleanizing process.} \label{fig:2Face}	
	\end{figure}
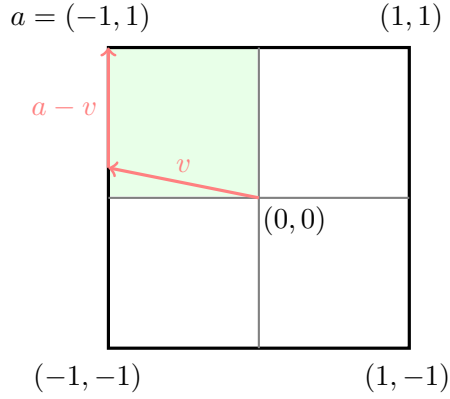
	
	Let $a \in \mathbb{R}^n$ be the vector corresponding to the $(-1,1)$ corner of $F$, i.e., $a$ is $\rho$ extended by setting the two free bits to $-1$ and $1$. By symmetry and the fact that $H$ is good (and therefore $0^n \in H$), we can assume that $a$ is contained in $H$ --- otherwise, simply exchange $a$ and $v$ for $-a$ and $-v$.
	Since $H \cap \set{-1, 1}^n \subseteq H_1 \cup H_2$ and $a \in \set{-1,1}^n$, it follows that $a$ is in one of $H_1$ or $H_2$.  
	Assume that $a \in H_1$; that is, $w_1a \geq b_1$.
	Our goal is to construct a vector $y \in H_1$ that satisfies the statement of the lemma. Consider the following two cases:
	\begin{enumerate}
		\item[(i)] If $w_1( a-v) \leq 0$, then it follows that  $y := 0^n \in H_1$. Indeed,
	$w_1 y = w_1v \geq w_1 a \geq b_1,$
	where first equality follows because $w_1$ and $p$ are orthogonal by assumption, and the final inequality follows because $a \in H_1$. 
		\item[(ii)] Otherwise, we have that $w_1(a-v) > 0$. 
			We construct a point that satisfies the statement of the lemma as follows.
			First, note that since $a, v \in F$, it follows that the vector $a - v$ has at most two non-zero coordinates.
			Beginning at the origin $0^n$, move in the direction $a-v$ until a free coordinate coordinate becomes fixed to $-1$ or $1$; that is, let $\alpha>0$ be the minimum value such that $\alpha(a-v)$ has at most one coordinate which is not $\{-1,1\}$-valued.
			Since both $a$ and $v$ belong to the same $1 \times 1$ quadrant of the $2$-face, $\|a-v\|_{\infty} \leq 1$ and so $\alpha \geq 1$.
			We can then verify that $\alpha(a-v) \in H_1$, since
			\[w_1 \alpha(a-v) = \alpha(w_1a) - 0 \geq w_1a \geq b_1,\]
			where we have used the fact that $v$ is orthogonal to $w_1$ and $\alpha \geq 1$.	
			Finally, since $\alpha(a - v) \in H_1$ we can round the final non-zero coordinate to $-1$ or $1$; since $H_1$ is a halfspace one of the two vectors will remain in $H_1$.
	\end{enumerate}
	
\end{proof}

\subsection{Applications}
We now use the theorems from the previous sections to obtain several concrete lower bounds.
First, we give strong depth lower bounds for $\sCP$ proofs of Tseitin formulas on expander graphs. 
\begin{thm}\label{cor:TseitinLowerBound}
	There exists a graph $G$ and labelling $\ell:V \rightarrow \{0,1\}$ such that any $\sCP$ refutation of $\Tseitin(G,\ell)$ requires depth $\Omega(n)$. 
\end{thm}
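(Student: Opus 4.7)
The plan is to instantiate \autoref{thm:expander-lb} with a Tseitin formula on a suitably strong constant-degree expander graph. This reduces the lower bound to exhibiting an unsatisfiable system of $\mathbb{F}_2$-linear equations whose constraint-variable graph is an $(r, s+3)$-boundary expander with $rs = \Omega(n)$.

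Fix a sufficiently large constant $d$ and let $G = (V, E)$ be a $d$-regular graph on $n$ vertices with strong edge expansion; explicit Ramanujan graphs or a random $d$-regular graph both suffice. By standard spectral bounds (e.g.\ combining the second eigenvalue bound $\lambda_2 \leq 2\sqrt{d-1}$ with Cheeger's inequality), there is an absolute constant $\gamma > 0$ such that every vertex subset $W \subseteq V$ with $|W| \leq n/2$ satisfies $|\partial W| \geq \gamma d \cdot |W|$. Choose any labelling $\ell : V \to \{0,1\}$ with $\sum_{v \in V} \ell(v)$ odd, so that $F := \Tseitin(G,\ell)$ is unsatisfiable.

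The key observation is that the constraint-variable graph $G_F$ of $F$ is precisely the edge-vertex incidence graph of $G$: each vertex $v$ of $G$ contributes the constraint $\bigoplus_{e \ni v} x_e = \ell(v)$, each edge $e$ of $G$ contributes a variable $x_e$, and the bipartite adjacency is the incidence relation. Under this identification, for any $W \subseteq V$ the boundary $\delta(W)$ in $G_F$ coincides with the edge boundary $\partial W$ in $G$. Choosing $d$ large enough that $\gamma d \geq s+3$ for some positive integer constant $s$, we conclude that $G_F$ is an $(n/2,\, s+3)$-boundary expander. Applying \autoref{thm:expander-lb} with $r = n/2$ yields
\[ \depth_{\sCP}(F) \;\geq\; \frac{rs}{2} \;=\; \Omega(n). \]

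The only mild obstacle is ensuring that the boundary expansion strictly exceeds the threshold of $3$ built into \autoref{thm:expander-lb}; this forces us to work with graphs of sufficiently large constant degree rather than, say, $3$- or $4$-regular graphs. Since the boundary expansion of a $d$-regular expander grows like $\Omega(d)$, any sufficiently large constant degree suffices, and no further subtlety arises.
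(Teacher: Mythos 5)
Your proposal is correct and follows essentially the same route as the paper: instantiate \autoref{thm:expander-lb} with Tseitin on a constant-degree Ramanujan-type expander, after checking that the constraint--variable graph inherits boundary expansion $\geq s+3$ from the expansion of $G$. The only cosmetic difference is that you pass through edge expansion (identifying $\delta(W)$ with the edge boundary $\partial W$), whereas the paper argues via vertex expansion and the fact that each variable occurs in exactly two constraints; both yield the same bound.
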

\begin{proof}
	A graph $G=(V,E)$ is a $\gamma$-\emph{vertex expander} if 
	\[\min \set{|\Gamma(W)| : W \subseteq V, |W| \leq |V|/2} \geq \gamma|W|,\]
	where $\Gamma(W)$ is the neighbourhood of $W$.
	We claim that if $G$ is a $\gamma$-vertex expander then any Tseitin formula over $G$ is a $(n/2, \gamma)$-boundary expander.
	Fix any subset $W$ of the equations with $|W| \leq n/2$. 
	By the definition of vertex expansion we have that $|\Gamma(W)| \geq \gamma|W|$, and since each variable is contained in exactly two constraints, it follows that the boundary of $W$ in $\Tseitin(G,\ell)$ has size at least $|\delta(W)| \geq \gamma|W|$.
	The result then follows from \autoref{thm:expander-lb} and the existence of strong vertex expanders $G$ (e.g.~$d$-regular Ramanujan graphs are at least $d/4$-vertex expanders, and exist for all $d$ and $n$ \cite{MarcusSS18}).
\end{proof}

Next, we give lower bounds on the depth of Semantic $\CP$ refutations of random $k$-XOR and random $k$-CNF formulas for constant $k$.
\begin{defn}
	Let $\XOR(m,n,k)$ be the distribution on random $k$-$\XOR$ formulas obtained by sampling $m$ equations from the set of all $\bmod\ 2$ linear equations with exactly $k$ variables.
\end{defn}
\begin{thm}\label{thm:randomLB}
The following holds for Semantic $\CP:$
	\begin{enumerate}
		\item[$1$.] For any $k \geq 6$ there exists $m = O(n)$ such that $F \sim \XOR(m,n,k)$ requires refutations of depth at least $\Omega(n)$ with high probability.
		\item[$2$.] For any $k \geq 6$ there exists $m = O(n)$ such that $F \sim \mathcal{F}(m,n,k)$ requires refutations of depth at least $\Omega(n)$ with high probability.
	\end{enumerate}
\end{thm}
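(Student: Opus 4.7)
The plan is to derive both lower bounds from Theorem~\ref{thm:expander-lb} (with a small adaptation to CNF in part 2) combined with a standard boundary-expansion analysis for random sparse formulas. In each case I will take $m = \Delta n$ for a constant $\Delta = \Delta(k) > 0$ and verify that with high probability (i) the formula is unsatisfiable, and (ii) its constraint-variable graph is an $(r, s+3)$-boundary expander with $r = \Omega(n)$ and $s = \Omega(1)$. Theorem~\ref{thm:expander-lb} will then immediately deliver a depth lower bound of $rs/2 = \Omega(n)$.

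For part 1, the expansion property follows from the usual first-moment argument: for a set $W$ of equations of size $w$, the expected number of variables incident to at least two constraints of $W$ is $O(\Delta k^2 w^2 / n)$, and a union bound over all $W$ of size at most $\alpha n$ yields boundary $\geq (k - \epsilon) w$ for all such $W$ simultaneously, for any fixed $\epsilon > 0$ and any $\alpha, \Delta$ sufficiently small (depending on $k$ and $\epsilon$). Choosing $m = \Delta n$ above the (linear) satisfiability threshold for random $k$-\XOR{} makes the formula unsatisfiable with high probability. For $k \geq 6$ we have $k - \epsilon \geq s + 3$ with $s \geq 1 - \epsilon > 0$, so Theorem~\ref{thm:expander-lb} gives $\depth_{\sCP}(F) \geq rs/2 = \Omega(n)$.

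For part 2, the same expansion estimate holds for a random $k$-CNF since it depends only on the incidence structure, not on the signs of the literals. The subtlety is that Theorem~\ref{thm:expander-lb} is formulated for $\mathbb{F}_2$-linear equations. However, its proof routes all constraint interactions through Lemma~\ref{lem:boundary-cleanup}, which in turn uses Lemma~\ref{lem:consistent-restriction} to satisfy each equation by fixing the XOR of a pair of its boundary variables. For a CNF clause $C_i$ with boundary variables $y_{i,1}, y_{i,2}$, a trivial sign case-analysis shows that one of the two choices $\bigoplus(y_{i,1}, y_{i,2}) \in \{0,1\}$ always forces one of the two literals of $C_i$ on $\{y_{i,1}, y_{i,2}\}$ to be true, satisfying $C_i$. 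Hence Lemma~\ref{lem:consistent-restriction} (applied with this target parity) still preserves goodness while satisfying the clause, so the proof of Theorem~\ref{thm:expander-lb} transfers verbatim to CNF. Applying this CNF version to $F \sim \mathcal{F}(m,n,k)$ at a density $\Delta$ above the $k$-SAT threshold $\Theta(2^k)$ but within the expansion regime yields $\depth_{\sCP}(F) = \Omega(n)$.

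The main obstacle I anticipate is parameter-tuning in part 2: we need $\Delta$ large enough --- roughly $\Theta(2^k)$ --- to push the formula past the satisfiability threshold, while keeping it small enough for the boundary expansion to remain $\geq s + 3$ with $s = \Omega(1)$. This window tightens as $k$ grows, and the hypothesis $k \geq 6$ precisely ensures that, after absorbing the density-driven loss of expansion from the standard first-moment bound, a positive constant of slack survives. Verifying the adaptation of Lemma~\ref{lem:boundary-cleanup} to CNF is a matter of writing out the sign case-analysis above and confirming the rest of the recursion goes through, which is routine.
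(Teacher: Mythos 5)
Your proposal is correct; part~1 is essentially the paper's argument (boundary expansion of random $k$-XOR at constant density plus \autoref{thm:expander-lb}), but your part~2 takes a genuinely different route. The paper handles random $k$-CNF by a coupling rather than by re-proving the expander theorem for clauses: every $k$-clause lies in the clausal encoding of exactly one $k$-XOR constraint, so expanding each clause of $F\sim\mathcal{F}(m,n,k)$ to the full encoding of its XOR yields $F'\supseteq F$ where $F'$ encodes a system distributed exactly as $\XOR(m,n,k)$; since any refutation of $F$ uses only a subset of the axioms of $F'$, it is also a refutation of $F'$, and the bound from part~1 transfers with the \emph{same} parameters. You instead adapt \autoref{thm:expander-lb} to CNFs directly, and your sign case-analysis for \autoref{lem:boundary-cleanup} is sound: for two boundary literals of equal sign choose odd parity for the pair, for opposite signs choose even parity; in either case \emph{both} assignments of the chosen parity satisfy the clause, so \autoref{lem:consistent-restriction} remains free to pick the one preserving goodness. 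The trade-off: the paper's reduction is shorter and requires no re-verification of the machinery, while your version proves something stronger --- a depth lower bound for arbitrary expanding CNFs, and a non-vacuous refutation lower bound at densities above the $k$-SAT threshold $\Theta(2^k)$ --- at the cost of the density check you flag. That check does go through: for every constant clause density there is a constant $\alpha>0$ such that all sets of at most $\alpha n$ constraints have boundary expansion $k-2-\epsilon>3$ with high probability (only $\alpha$, not the expansion constant, degrades with the density), so the ``window'' you worry about does not actually close as $k$ grows.
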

\begin{proof}
	We first prove (1) and obtain (2) via a reduction. 
	Fix $m= O(n)$ so that $F$ is unsatisfiable with high probability. 
	For any constant $k,\delta$ and $m=O(n)$,  $F \sim \XOR(m,n,k)$ is an $(\alpha n, k-2-2\delta)$-boundary expander for some $\alpha > 0$ (see e.g. \cite{BGHMP06,cs-resolution}). 
	Thus, setting $k \geq 6$ and $\varepsilon$ to be some small constant, the boundary expansion of $G_F$ is at least $3$. 
	By \autoref{thm:expander-lb}, $F$ requires depth $\Omega(n)$ to refute in Semantic $\CP$ with high probability.
	
	The proof of (2) is via a reduction from $\mathcal{F}(m,n,k)$ to $\XOR(m,n,k)$. 
	Every $k$-clause occurs in the clausal encoding of exactly one $k$-$\XOR$ constraint. 
	It follows that from any $k$-CNF formula $F$ we can generate a $k$-$\XOR$ formula whose clausal expansion $F'$ contains $F$ as follows:
	for each clause $C \in F$, if $C$ contains an even (odd) number of positive literals then add to $F'$ every clause on the variables of $C$ which contains an even (odd) number of positive literals. The resulting $F'$ is the clausal encoding of a set of $|F|$ $k$-$\XOR$ constraints. As there is a unique $k$-$\XOR$ consistent with the clauses of $F$, we can define the distribution $\XOR(m,n,k)$ equivalently as follows: 
	\begin{enumerate}
		\item Sample $F \sim \mathcal{F}(m,n,k)$,
		\item Return the $k$-XOR $F'$ generated from $F$ according to the aforementioned process.
	\end{enumerate}
	It follows that the complexity of refuting $F \sim \mathcal{F}(m,n,k)$ is at least that of refuting $F' \sim \XOR(m,n,k)$ and (2) follows from (1) with the same parameters. 
\end{proof}

Finally, we use \autoref{thm:expander-lb} to extend the integrality gaps from \cite{BGHMP06} to $\sCP$ by essentially the same argument. 
For a linear program with constraints given by a system of linear inequalities $Ax \leq b$, the \emph{$r$-round $\sCP$ relaxation} adds all inequalities that can be derived from $Ax \leq b$ by a depth-$r$ $\sCP$ proof.
We show that the $r$-round Semantic $\sCP$ linear program relaxation cannot well-approximate the number of satisfying assignments to a random $k$-SAT or $k$-XOR instance. 

First we define our LP relaxations.
Suppose that $F$ is a $k$-CNF formula with $m$ clauses $C_1, C_2, \ldots, C_m$ and $n$ variables $x_1, x_2, \ldots, x_n$.
If $C_i = \bigvee_{i \in P} x_i \vee \bigvee_{i \in N} \overline x_i$ then let $E(C_i) = \sum_{i \in P} x_i + \sum_{i \in N} 1-x_i$.
We consider the following LP relaxation of $F$:
\begin{align*}
	& \max \sum_{i=1}^m y_i \\
	\text{subject to}\quad & E(C_i) \geq y_i \quad \forall i \in [m]\\
	& 0 \leq x_j \leq 1 \quad \forall j \in [n] \\
	& 0 \leq y_i \leq 1 \quad \forall i \in [m]
\end{align*}

If $F$ is a $k$-XOR formula with $m$ constraints and $n$ variables then we consider the above LP relaxation obtained by writing $F$ as a $k$-CNF.
Finally, recall that the \emph{integrality gap} is the ratio between the optimal integral solution to a linear program and the optimal solution produced by the LP. 

\begin{thm}
	For any $\varepsilon >0$ and $k \geq 6$,
	\begin{enumerate}
		\item[$1$.] There is $\kappa>0$ and $m = O(n)$ such that for $F \sim \XOR(m,n,k)$ the integrality gap of the $\kappa n$-round  $\sCP$ relaxation of $F$ is at least $(2-\varepsilon)$ with high probability. 
		\item[$2$.] There is $\kappa >0$ and $m=O(n)$ such that for $F \sim \mathcal{F}(m,n,k)$ the integrality gap of the $\kappa n$-round $\sCP$ relaxation of $F$ is at least $2^k/(2^k-1) - \varepsilon$ with high probability.
	\end{enumerate}
\end{thm}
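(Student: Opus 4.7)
The plan is to combine the semantic $\sCP$ refutation-depth lower bound of \autoref{thm:randomLB} with a substitution argument relating cuts on the $(x,y)$-LP polytope to cuts on $F$'s polytope in $x$-space, adapting the Chv\'atal-rank integrality gap arguments of \cite{BGHMP06} to the semantic setting.

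First, by \autoref{thm:randomLB} I may choose $\kappa > 0$ and $m = O(n)$ such that $F \sim \XOR(m,n,k)$ (for part $1$) and $F \sim \mathcal{F}(m,n,k)$ (for part $2$) are unsatisfiable with high probability and admit no $\sCP$ refutation of depth $< \kappa n$. In particular, the $\kappa n$-th semantic $\sCP$ relaxation $P_F^{(\kappa n)}$ of $F$'s polytope is non-empty, so I can fix some $x^* \in P_F^{(\kappa n)}$ and consider the candidate LP point $(x^*, 1^m) \in [0,1]^{n+m}$. This point is LP-feasible: for each $i$, $E(C_i)(x^*) \geq 1 = y^*_i$ since $x^* \in P_F^{(0)}$, and clearly $0 \leq x^*_j, y^*_i \leq 1$. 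Its LP objective is $\sum_i y^*_i = m$.

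The crux of the argument is to show that $(x^*, 1^m) \in P_{LP}^{(\kappa n)}$. Given any halfspace $H(x,y) \geq c$ that is $\sCP$-derivable from the LP axioms in depth $\leq \kappa n$, I substitute $y = 1^m$ throughout the entire derivation tree. The LP axioms $E(C_i) \geq y_i$ collapse to $F$'s axioms $E(C_i) \geq 1$, the bounds $0 \leq x_j \leq 1$ are unchanged, and the bounds $0 \leq y_i \leq 1$ become trivial. Each semantic deduction step remains semantically sound, because the set of Boolean $(x, 1^m)$ assignments is a subset of all Boolean $(x, y)$ assignments. The result is an $\sCP$-derivation, of the same depth, of $H(x, 1^m) \geq c$ from $F$'s polytope in $x$-variables alone. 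Since $x^* \in P_F^{(\kappa n)}$ satisfies every such inequality, we obtain $H(x^*, 1^m) \geq c$, i.e., $(x^*, 1^m) \in H$.

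Finally, I upper-bound the integer optima by standard probabilistic arguments (first-moment plus Chernoff--union bound): for $m = cn$ with $c$ sufficiently large, $\mathrm{MAXXOR}(F) \leq (\tfrac{1}{2} + \tfrac{\varepsilon}{4})m$ and $\mathrm{MAXSAT}(F) \leq (1 - 2^{-k} + \tfrac{\varepsilon}{4})m$ hold with high probability for the respective distributions. Dividing the LP value $m$ by these bounds gives integrality gaps of at least $2 - \varepsilon$ and $2^k/(2^k-1) - \varepsilon$, respectively. The only conceptually non-routine ingredient is the substitution argument in the crux step; the heavy lifting is done by \autoref{thm:randomLB}, which the proof treats as a black box.
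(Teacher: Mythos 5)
Your proposal is correct and follows essentially the same route as the paper: use \autoref{thm:randomLB} to conclude the $\kappa n$-round relaxation of $F$'s polytope is non-empty, lift a surviving point $\alpha$ to $(\alpha, 1^m)$ in the LP, and bound the integral optimum by a Chernoff--union argument. Your explicit substitution argument (setting $y=1^m$ throughout the derivation and noting semantic soundness is preserved under restriction) spells out a step the paper leaves implicit, but it is the same proof.
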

\begin{proof}
	Let $F \sim \XOR(m,n,k)$ and let $Y_i$ be the event that the $i$th constraint is falsified by a uniformly random assignment. Let $\delta:=\varepsilon/(2-\varepsilon)$, then by a multiplicative Chernoff Bound, the probability that a uniformly random assignment satisfies at least a $1/(2-\varepsilon)$-fraction of $F$ is 
	$ \Pr [\sum_{i \in [m]}Y_i \geq (1+\delta)\frac{m}{2} ] \leq 2^{-\delta m/6}$.
	By a union bound, the probability that there exists an assignment satisfying at least a $1/(2-\varepsilon)$ fraction of $F$ is $2^{n-\delta m/6}$ which is exponentially small when $m \geq 7n(2-\varepsilon)/\varepsilon$.
	
	On the other hand, consider the partial restriction to the LP relaxation of $F$ that sets $y_i=1$ for all $i \in [m]$.
	Setting $m \geq 7n(2-\varepsilon)/\varepsilon$ large enough, by \autoref{thm:randomLB} there some $\kappa > 0$ such that with high probability $F$ requires depth $\kappa n$. 
	Hence, the $\kappa n$ round Semantic $\CP$ LP relaxation is non-empty, and there is a satisfying assignment $\alpha \in \mathbb{R}^n$. Thus $\alpha \cup \{y_i=1\}$ satisfies all constraints of $\max(F)$.
	
	The second result follows by an analogous argument.
\end{proof}



\section{Conclusion}
We end by discussing some problems left open by this paper. The most obvious of which is a resolution to \autoref{conj:deepProofs}. A related question is whether  supercritical size-depth tradeoffs can be established for monotone circuits? Indeed, current size lower bound techniques \cite{Pudlak97, FlemingPPR17,HrubesP17,GargGKS18} are via reduction to monotone circuit lower bounds. As a first step towards both of these, can one prove a supercritical size-depth tradeoff for a weaker proof system such as resolution?

The simulation results presented in \autoref{sec:SP} leave open several questions regarding the relationship between $\SP$ and $\CP$. First, the simulation of $\SP^*$ by $\CP$ incurs a significant blowup in the coefficient size due to Shrijver's lemma. It would be interesting to understand whether $\SP^*$ can be quasi-polynomially simulated by $\CP^*$; that is, whether this blowup in the size of the coefficients is necessary.

The most obvious question left open by these simulations is whether $\CP$ can polynomially simulate $\SP$, or even \emph{polynomially} simulate $\SP^*$. 
Similarly, what are the relationships of both $\SP$ and $\CP$, to (bounded-coefficient) $\RCP$, the system which corresponds to dag-like $\SP$. 
$\RCP$ can polynomially simulate DNF resolution, and therefore has polynomial size proofs of the Clique-Colouring formulas, for cliques of size $\Omega(\sqrt{n})$ and colourings of size $o(\log^2 n)$ \cite{AtseriasBE02}. 
Quasi-polynomial lower bounds on the size of $\CP$ refutations are known for this range of parameters and this rules out a polynomial simulation by Cutting Planes; however, a quasi-polynomial simulation may be possible. A potential approach to resolving this question is to use the added expressibility of $\RCP$ over DNF resolution to extend the upper bound on Clique-Colouring to the range of parameters for which superpolynomial $\CP$ lower bounds are known.

\subsection*{Acknowledgements}
N.F.~would like to thank Albert Atserias for some corrections to an earlier version of this paper.
T.P.~was supported by NSERC, NSF Grant No. CCF-1900460 and the IAS school of mathematics.
R.R.~was supported by NSERC, the Charles Simonyi Endowment, and indirectly supported by the National Science Foundation Grant No. CCF-1900460.  L.T.
was supported by NSF grant CCF-192179 and NSF CAREER award CCF-1942123.
Any opinions, findings and conclusions or recommendations expressed in this material are those of the author(s) and do not necessarily reflect the views of the National Science Foundation. 

\bibliographystyle{plain}
	\bibliography{biblio}

\end{document}